\documentclass[11pt]{article} 

\usepackage[margin=1in]{geometry} 
\usepackage[utf8]{inputenc}
\usepackage[T1]{fontenc}
\usepackage{mathpazo} 
\usepackage{enumerate}
\usepackage{amsmath,amssymb}
\usepackage{authblk,comment,xspace}
\usepackage[normalem]{ulem} 
\usepackage{booktabs,colortbl,multicol,multirow,subcaption} 
\usepackage[ruled]{algorithm2e} 
\usepackage{enumitem}

\SetAlFnt{\small}
\SetAlCapFnt{\small}
\SetAlCapNameFnt{\small}
\SetAlCapHSkip{0pt}
\IncMargin{-\parindent}

\usepackage{crossreftools}
\usepackage{amsmath,amsfonts,amssymb,amsthm,bbm,mathtools,pifont,thm-restate}

\usepackage[dvipsnames]{xcolor} 
\definecolor{ptblue}{RGB}{15,76,129} 
\definecolor{ptemerald}{HTML}{009473} 
\definecolor{ptgray}{HTML}{939597} 

\usepackage{tikz}
\usetikzlibrary{positioning,shapes,arrows,graphs,quotes,automata,backgrounds}
\usepackage{mathtools}
\usepackage[square]{natbib} 
\usepackage{hyperref}
\hypersetup{
	linktocpage,
	hidelinks=true,
	citecolor=cobalt, 
	urlcolor=ptblue, 
	linkcolor=cobalt, 
}
\definecolor{cobalt}{rgb}{0.0, 0.28, 0.67}
\usepackage{cleveref} 

\theoremstyle{plain}
\newtheorem{theorem}{Theorem}[section]
\newtheorem{corollary}[theorem]{Corollary}	
\newtheorem{proposition}[theorem]{Proposition}
\newtheorem{lemma}[theorem]{Lemma}

\theoremstyle{definition}
\newtheorem{definition}[theorem]{Definition}
\newtheorem*{definition*}{Definition}

\theoremstyle{remark}
\newtheorem{example}[theorem]{Example}

\allowdisplaybreaks

\newtheorem*{theorem*}{Theorem}

\theoremstyle{remark}
\newtheorem{remark}{\upshape\bfseries Remark}

\DeclareMathOperator*{\argmax}{arg\,max}

\makeatletter 
\newcommand{\indicator}[1]{\mathbbm{1}_{\left\{#1\right\}}\xspace}

\newcommand{\committeeSize}{k}
\newcommand{\approvalBallotOfAgent}[1]{A_{#1}} 

\newcommand{\RUT}{\text{RUT}\xspace}

\newcommand{\GRP}{\text{GRP}\xspace}
\newcommand{\instance}{\mathcal{I}\xspace}
\newcommand{\network}{\mathcal{N}\xspace}
\newcommand{\dnetwork}{\mathcal{D}\xspace}
\newcommand{\wnetwork}{\mathcal{W}\xspace}
\newcommand{\GCUT}{Generalized CUT\xspace}

\newcommand{\tildG}{\widetilde{g}}
\newcommand{\cost}{\texttt{cost}}
\newcommand{\capacity}{\texttt{cap}}

\title{Maximum Flow is Fair: A Network Flow Approach to Committee Voting}
\author{Mashbat Suzuki \qquad Jeremy Vollen\thanks{Corresponding author.}
\medskip\\
UNSW Sydney \medskip\\
\nolinkurl{{mashbat.suzuki, j.vollen}@unsw.edu.au}}
\date{}

\usepackage{abstract}

\begin{document}
\maketitle

\begin{abstract}
	In the committee voting setting, a subset of $k$ alternatives is selected based on the preferences of voters.
	In this paper, our goal is to efficiently compute \emph{ex-ante} fair probability distributions over committees.
	We introduce a new axiom called \emph{group resource proportionality}, which strengthens other fairness notions in the literature. 
	We characterize our fairness axiom by a correspondence with max flows on a network formulation of committee voting.
	Using the connection to flow networks revealed by this characterization, we introduce two voting rules which achieve fairness in conjunction with other desiderata.
	The first rule --- the \emph{redistributive utilitarian rule} --- satisfies ex-ante efficiency in addition to our fairness axiom.
	The second rule --- \emph{Generalized CUT} --- reduces instances of our problem to instances of the minimum-cost maximum-flow problem.
	We show that \GCUT maximizes social welfare subject to our fairness axiom and additionally satisfies an incentive compatibility property known as \emph{excludable strategyproofness}.
	Lastly, we show our fairness property can be obtained in tandem with strong \emph{ex-post} fairness properties -- an approach known as \emph{best-of-both-worlds} fairness.
	We strengthen existing best-or-both-worlds fairness results in committee voting and resolve an open question posed by \citet{ALS+23}.
\end{abstract}

\section{Introduction}
\label{sec:intro}

In the \emph{committee voting} problem, we are tasked with selecting a subset of $k$ alternatives (or \emph{candidates}) given the preferences of a population of voters.
The problem models a rather natural and practical scenario and has garnered notable attention in recent research~\citep{BCE+16a,ABES19,Endr17a}.
In this work, we contribute to the growing body of work pursuing fair committee voting outcomes when voters report preferences in the form of a subset of candidates of which they \emph{approve} (see \citet{LaSk23a} for extensive coverage of this topic).
One clear advantage of approval ballots is due to their simplicity compared to ballots such as rankings and cardinal reports. 
Furthermore, approval ballots are clearly the most natural method of preference elicitation under our assumptions about voters' preference structure: we assume voters' preferences are \emph{dichotomous} with respect to candidates and derive utility equal to the number of approved candidates on the selected committee.

While the assumption of dichotomous preferences allows for many theoretical results that encounter impossibilities in more general settings, there remain significant obstacles in obtaining committees which are fair to all voters.
Most fundamentally, no deterministic rule satisfies a very basic fairness notion known as \emph{positive share}, which guarantees that every voter receives some non-zero representation on the selected committee.
A natural approach to circumvent this and other impossibilities is to employ randomization by computing a probability distribution (or \emph{lottery}) over committees which achieves desirable properties in an \emph{ex-ante} sense.
This approach has significant precedent in the (single-winner) voting setting \citep{Gibbard77a}, and in particular, voting under dichotomous preferences \citep{BMS05a,ABM19a,BBP+21a}. 

Recently, this approach has been extended to the committee voting setting with dichotomous preferences \citep{CJM+20a,ALS+23}.
Fairness axioms in committee voting often aim for some semblance of proportional representation, the idea that every group of voters should control a fraction of the committee proportional to their size. 
The most pervasive such axiom is that of \emph{core}, initially conceptualized and investigated by \citet{Dro81} and \citet{Lind58}. 
While existence of the core is not known in an ex-post sense, its fractional analog known as \emph{fractional core} has been shown to exist.
However, lotteries satisfying fractional core may behave quite counter-intuitively with respect to proportional representation, as illustrated by the following  example.
\begin{example}
	\label{ex:core_bad}
	Suppose there are four voters, denoted $\{1,2,3,4\}$,  three candidates, denoted $\{a,b,c\}$, and the desired committee size is $k=2$.
	Suppose voter $1$ approves of candidate $\{a\}$, voters $2$ and $3$ approve of candidates $\{a,b\}$, and voter $4$ approves of candidate $\{c\}$.
	Let $\Delta$ be the lottery which selects the committee $\{a,b\}$ with probability $\frac{1}{3}$ and committee $\{a,c\}$ with probability $\frac{2}3$. 
	Consider the group of voters $S=\{1,2,3\}$.
	Since they constitute $\frac{3}4$ of the population of voters, it seems natural that they should control a $\frac{3}4$ fraction of the committee, in expectation, in any lottery providing proportional representation \emph{ex-ante}.
	That is, one would expect that the number of selected candidates that some voter in $S$ approves, in expectation, is at least $\frac{3}2$.
	Instead, there are only $1\cdot \frac{2}{3} + 2\cdot \frac{1}3 = \frac{4}3$ such candidates, in expectation. 
	
	Thus, the voter group $S$ could justifiably complain that they are underrepresented by the lottery $\Delta$.
	One can check that $\Delta$, in fact, satisfies fractional core (see \Cref{def:core}).
	Furthermore, we can see that the lottery which selects the committees $\{a,b\}$ and $\{a,c\}$ with equal probability does not suffer from this under-representation issue for any subset of voters.
	In summary, the requirement of fractional core is not sufficient to guarantee the selection of proportionally representative outcomes, even when such outcomes are guaranteed to exist. 
\end{example}

Besides the behavior shown in \Cref{ex:core_bad}, fractional core presents additional challenges.
There is no known polynomial time algorithm for computing fractional core \citep{MSW22}.
Furthermore, while maximizing Nash welfare computes a fractional core outcome in the single-winner setting, this does not extend to committee voting (see \Cref{ex:nash}).
This highlights the structural differences between the two settings.
Indeed, \citet{ALS+23} demonstrate the technical subtleties that arise when extending axioms and algorithms from probabilistic single-winner voting to probabilistic committee voting.
They extend the fair share axioms from the single winner setting \citep{Dudd15a,BMS05a} to the committee setting, showing that two alternative interpretations of those axioms result in distinct hierarchies, the strongest properties of which are \emph{group fair share} and \emph{strong unanimous fair share}.

\subsection*{Contributions}
Our first contribution is the \emph{group resource proportionality} (\GRP) axiom (\Cref{sec:grp}), which unifies both axiom hierarchies from \citet{ALS+23} (by strengthening both group fair share and strong unanimous fair share), and is achievable in polynomial time.
Furthermore, \GRP is impervious to the type of under-representation issue allowed by fractional core in \Cref{ex:core_bad}.
At a high level, \GRP lower bounds, for each group of voters $S$, the number of selected candidates which represent some voter in $S$, in expectation. 
The lower bound depends on the proportional size and the preference structure of the voter group. 
We provide a characterization of \GRP (\Cref{thm:gr_characterization}) using a network flow formulation of our problem, in which voters control a proportional fraction of the committee size (or ``budget''), and can only flow into candidates they approve. 
Specifically, we show that a lottery satisfies our axiom if and only if it corresponds to a solution to the max flow problem on this network.
As a result of this characterization, we get that \GRP is not only polynomial time computable, but can also be checked in polynomial time.

We then exploit the connection drawn between probabilistic committee voting and network flows in order to devise fair voting rules which achieve additional desiderata.
In \Cref{sec:efficiency}, we introduce a voting rule which obtains \GRP and ex-ante (Pareto) efficiency and can be computed in polynomial time. 
Our algorithm -- the \emph{redistributive utilitarian rule} -- leverages our characterization and carefully redistributes voters' remaining budgets to avoid any voter exhausting their budget prematurely.

In the single-winner setting, \citet{BBP+21a} showed that ex-ante efficiency and strategyproofness cannot be achieved jointly with positive share.
In that setting, the conditional utilitarian rule (CUT) offers a compromise: strategyproofness and efficiency subject to group fair share.
In \Cref{sec:gcut}, we give an impossibility demonstrating that no rule satisfying these properties exists in probabilistic committee voting.
In light of this impossibility, we introduce \emph{\GCUT}, a voting rule which maximizes social welfare subject to \GRP and satisfies the weaker notion of \emph{excludable strategyproofness}. 
To achieve these properties, our algorithm frames instances of probabilistic committee voting using a minimum-cost maximum-flow formulation.

Lastly, in \Cref{sec:bbw}, we show ex-ante \GRP can be obtained in tandem with strong ex-post fairness properties, such as FJR and EJR+, contributing to the literature on ``best of both worlds fairness'' \citep{ALS+23,ALSV24}. 
To do so, we identify a condition on an integral committee $W$ which is sufficient to guarantee polynomial computation of an ex-ante \GRP lottery over committees which contain $W$ (\Cref{thm:gr_wp_bbw}).
As corollaries of this statement, we get the strongest known best-of-both-worlds results in the committee voting setting, and resolve an open question posed by \citet{ALS+23}.

\subsection*{Related Work}

Approval-based committee voting is a fundamental voting model with broad applications. Multiwinner voting is readily applicable in settings where a fixed committee size must be chosen, such as in parliamentary elections , and the selection of a board members. Due to its generality, in recent years, it has found applications in recommender systems \citep{CGN19,GF22,SLB17}, the design of Q \& A platforms \citep{IB21}, blockchain protocols \cite{BCCH20}, and global optimization \cite{FSS17}.

As the existence of core is a major open problem in approval-based committee voting \cite{LaSk23a}, there has been a significant body of work focused on developing algorithms that achieve approximate notions of core. 
\citet{PeSk20a} showed that Proportional Approval Voting gives a committee that is a factor-two multiplicative approximation to core. 
\citet{MSW22} showed that a constant factor approximation to core is achievable in committee voting, even with monotone submodular utility functions. 
Another way to relax the notion of core is to restrict the deviating groups. 
This was the approach taken by \citet{ABC17}, who initiated the study of representation for cohesive groups, spawning a large body of work exploring various \emph{justified representation} axioms \cite{AEH+18a,EFI+22a,BFJL23,BP23}.  

Our paper achieves fairness in committee voting through a probabilistic lens. 
The study of lotteries over outcomes for a fixed preference profile is referred to as \textit{probabilistic voting}. 
In single-winner probabilistic voting, \citet{BMS05a} defined various ex-ante fairness notions and rules, which have been further explored in later works \citep{ABM19a,BBP+21a,MPS20a,Dudd15a}.
In the  committee voting setting, \citet{MSW22} used the existence of Lindahl equilibria along with various rounding schemes to prove existence of constant factor approximate core committees. 
\citet{CJM+20a} introduced the concept of a \textit{stable lottery}, which asserts that for any committee $W$ of size $\alpha$, the expected number of voters who prefer $W$ to the committee sampled from the lottery does not exceed $\alpha \frac{n}{k}$ for each $1\leq \alpha\leq k$. 
They use the probabilistic method to show the existence of stable lotteries. 
Unlike our fairness notion, no efficient algorithm is known to give stable lotteries under dichotomous preferences.

\citet{BLS24a} introduced the \emph{cake sharing} model, which is more general than our own.
In cake sharing, agents have piecewise uniform utilities over some divisible resource (or ``cake''), represented by the unit interval. 
The problem asks how to select a subset of the cake subject to a length constraint.
The authors show that the Leximin mechanism is truthful when allowing for blocking agents from pieces of cake that they claim not to approve.
This is equivalent to showing that Leximin is excludable strategyproof, a property we will study in \Cref{sec:gcut}. 
However, we obtain this property in conjunction with our fairness property, which is much stronger than any fairness property achieved by the Leximin mechanism.\footnote{Even in the single-winner special case of our model, Leximin fails to satisfy unanimous fair share, which is weaker than each of the fairness definitions we discuss in this work \citep{ABM19a}.
The authors also introduce fairness concepts which adapt notions of justified representation and average fair share \citep{ABM19a}.
In our setting, their properties are incomparable with our own.
Critically, their properties give guarantees only to groups of agents with sufficiently cohesive preferences, whereas we strive to give guarantees to all subsets of agents, as is done in the core.
}

When studying lotteries that are fair in expectation, it is natural to ask whether such a lottery can be supported over desirable outcomes. 
This is referred to as "best-of-both-worlds fairness," a term coined by \citet{FSV20b}. They show that ex-ante envy-freeness and ex-post near envy-freeness is achievable in the resource allocation setting. 
Subsequent works \cite{FMNP24,HSV23a} have established similar guarantees are possible in various resource allocation problems. 
In approval-based committee voting, \citet{ALS+23} showed that ex-ante GFS and ex-post EJR can be achieved simultaneously. 
The best-of-both-worlds perspective on voting was further explored by \citet{ALSV24} in the setting of participatory budgeting.

Our approach to exploring ex-ante fairness makes use of network flows, where flow conservation naturally aids in the feasible exchange of probability weights among voters. A similar flow-based approach was used by \citet{Vazi07} for the computation of market equilibria in a private goods economy and when studying competitive equilibria in trade networks \citep{CEV16}.

\section{Preliminaries}
\label{sec:prelim}
For any positive integer~$t \in \mathbb{N}$, we write $[t] \coloneqq \{1, 2, \dots, t\}$.
Denote $C$ as the set of $m$ \emph{candidates} and $N = [n]$ as the set of voters.
We assume that the voters have \emph{dichotomous} preferences over candidates. 
That is, each voter's preference is represented by a set of approved candidates, called an \emph{approval set}. 
For each voter $i \in N$, denote $\approvalBallotOfAgent{i} \subseteq C$ as the approval set of voter $i$. 
An instance~$\instance$ of approval-based committee voting is given by a set of candidates~$C$, an \emph{approval profile}~$\mathcal{A} = (\approvalBallotOfAgent{1}, \approvalBallotOfAgent{2}, \dots, \approvalBallotOfAgent{n})$, and a positive integer $k$ with $\committeeSize \leq m$.

\medskip
\noindent\textbf{Integral and Fractional Committees}.
An \emph{integral committee}~$W$ (or simply committee, when context is clear) is a subset of~$C$ of size~$\committeeSize$.
A \emph{fractional committee} is specified by an $m$-dimensional vector~$\vec{p} = (p_c)_{c \in C}\in [0,1]^m$ with $\sum_{c \in C} p_c = \committeeSize$.
For notational convenience, we use $\vec{1}_W \in \{0, 1\}^m$ to denote the vector representation of an integral committee~$W$, i.e. the fractional committee with the $j^{\text{th}}$ component equal to~$1$ if and only if~$j \in W$.

\medskip

\noindent \textbf{Lotteries.}
A \textit{lottery} is a probability distribution over integral committees. A lottery is specified by a set of~$s \in \mathbb{N}$ tuples $\{(\lambda_j, W_j)\}_{j \in [s]}$ with $\sum_j \lambda_j = 1$, where for each~$j \in [s]$, the integral committee $W_j \subseteq C$ is selected with probability $\lambda_j \in [0, 1]$. 
It can be seen that every lottery $\{(\lambda_j, W_j)\}_{j \in [s]}$ has a unique corresponding fractional committee given by $\vec{p} = \sum_{j \in [s]} \lambda_j \vec{1}_{W_j}$. Here $p_c$ can be interpreted as the marginal probability of candidate $c$ being selected in a committee drawn from the lottery. We say lottery $\{(\lambda_j, W_j)\}_{j \in [s]}$ is an \emph{implementation} of a fractional committee~$\vec{p}$ if $\vec{p} = \sum_{j \in [s]} \lambda_j \vec{1}_{W_j}$.

As is common in approval-based committee voting, we assume voters' preferences over committees are determined by the number of approved candidates on the committee, i.e. $u_i(W) = \vert A_i \cap W \vert.$
Extending to fractional committees, we define the utility of voter $i$ from fractional committee $\vec{p}$ as $u_i(\vec{p}) \coloneqq \sum_{c \in \approvalBallotOfAgent{i}} p_c$.
We point out that voter $i$ derives utility from a fractional committee $\vec{p}$ equivalent to his expected utility from any lottery implementing $\vec{p}$, i.e. $u_i( \vec{p} ) = \mathbb{E}_{W\sim \Delta}[u_i(W)]$for any implementation $\Delta$ of $\vec{p}$.

\subsection*{Properties of Voting Rules}

A \textit{(voting) rule} $F$ is a function that maps each approval profile $\mathcal{A}$ and committee size~$k$ to a feasible fractional committee $F(\mathcal{A},k)\in [0,1]^m$.  
We will use $F_c(\mathcal{A},k)$ to denote the amount of candidate $c\in C$ selected in the fractional committee selected by $F$.
That is, if $F(\mathcal{A},k) = \vec{p}$, then $F_c(\mathcal{A},k) = p_c$.
In this paper, we will introduce voting rules which achieve fairness in conjunction with other desiderata, the first of which is \emph{(Pareto) efficiency}.

\begin{definition}[Efficiency]
	A fractional committee $\vec{p}$ is \emph{efficient}  if there is no alternative fractional committee $\vec{q}$ such that
	$u_i(\vec{q})\geq u_i(\vec{p})$ for all $i\in N$ and this inequality is strict for at least one voter.
	A rule is efficient if it always returns an efficient fractional committee.
\end{definition}

We will also consider strategic properties of rules.
A rule satisfies \emph{strategyproofness} if no voter ever has an incentive to misreport her true preferences to the rule.
In \Cref{sec:gcut}, we show a strong impossibility with strategyproofness, and thus also consider a natural relaxation known as \emph{excludable strategyproofness} \citep{ABM19a}.
Excludable strategyproofness guarantees that no voter has an incentive to misreport if she cannot benefit from candidates which she claims not to approve.

\begin{definition}[Strategyproofness]
	A rule $F$ is \textit{strategyproof} if for all $i\in N$, all $k\leq m$, and all approval profiles of the form $\mathcal{A}=(A'_1,\ldots,A_i,\ldots,A'_n)$ and $\mathcal{A}'=(A'_1,\ldots,A'_i,\ldots,A'_n)$,  we have $u_i(F(\mathcal{A},k))\geq u_i(F(\mathcal{A}',k))$.
\end{definition}

\begin{definition}[Excludable Strategyproofness]
	A rule $F$ is \textit{excludable strategyproof} if for all $i\in N$, all $k\leq m$, and all approval profiles of the form $\mathcal{A}=(A'_1,\ldots,A_i,\ldots,A'_n)$ and $\mathcal{A}'=(A'_1,\ldots,A'_i,\ldots,A'_n)$, we have $u_i(F(\mathcal{A},k))\geq \sum_{c\in A_i\cap A'_i} F_c(\mathcal{A'},k)$.
\end{definition}

We point out that, while we define voting rules and most of the properties in this paper in terms of fractional committees, each of our properties and results have natural analogs for lotteries. 
Specifically, we say a lottery satisfies a property ex-ante  if and only if the lottery implements a fractional committee which satisfies the fractional property.
As an example, if fractional committee $\vec{p}$ satisfies strategyproofness (as above), then any algorithm that outputs an integral committee sampled from a lottery which implements $\vec{p}$ will in fact satisfy \emph{ex-ante} strategyproofness.
Although every lottery corresponds to a unique fractional committee, the converse is not true. 
However, computation of an implementation of a fractional committee can be achieved through rounding schemes commonly studied in combinatorial optimization \cite{S01,GKPS06,CVZ10}. 
In particular, the rounding scheme of \cite{ALM+19a} shows that every fractional committee of size $k$ can be implemented with a lottery over integral committees of size $k$ in polynomial time.
This means that the algorithmic results of \Cref{sec:efficiency} and \Cref{sec:gcut} also carry over to lotteries.
In summary, while we will often restrict our treatment to fractional committees, this is purely for the sake of notational clarity.

\subsection*{Flow Network Formulation}
We will often represent committee voting instances using flow networks. 
At a high level, the network representation of an instance: (i) connects the source to a node for each voter, (ii) connects each voter to candidates of which they approve, and (iii) connects those candidates to the sink. 
Refer to Figure~\ref{fig:1} for an illustration of a network representation of an instance.
\begin{figure}
	\begin{center}
		\includegraphics[scale=0.22]{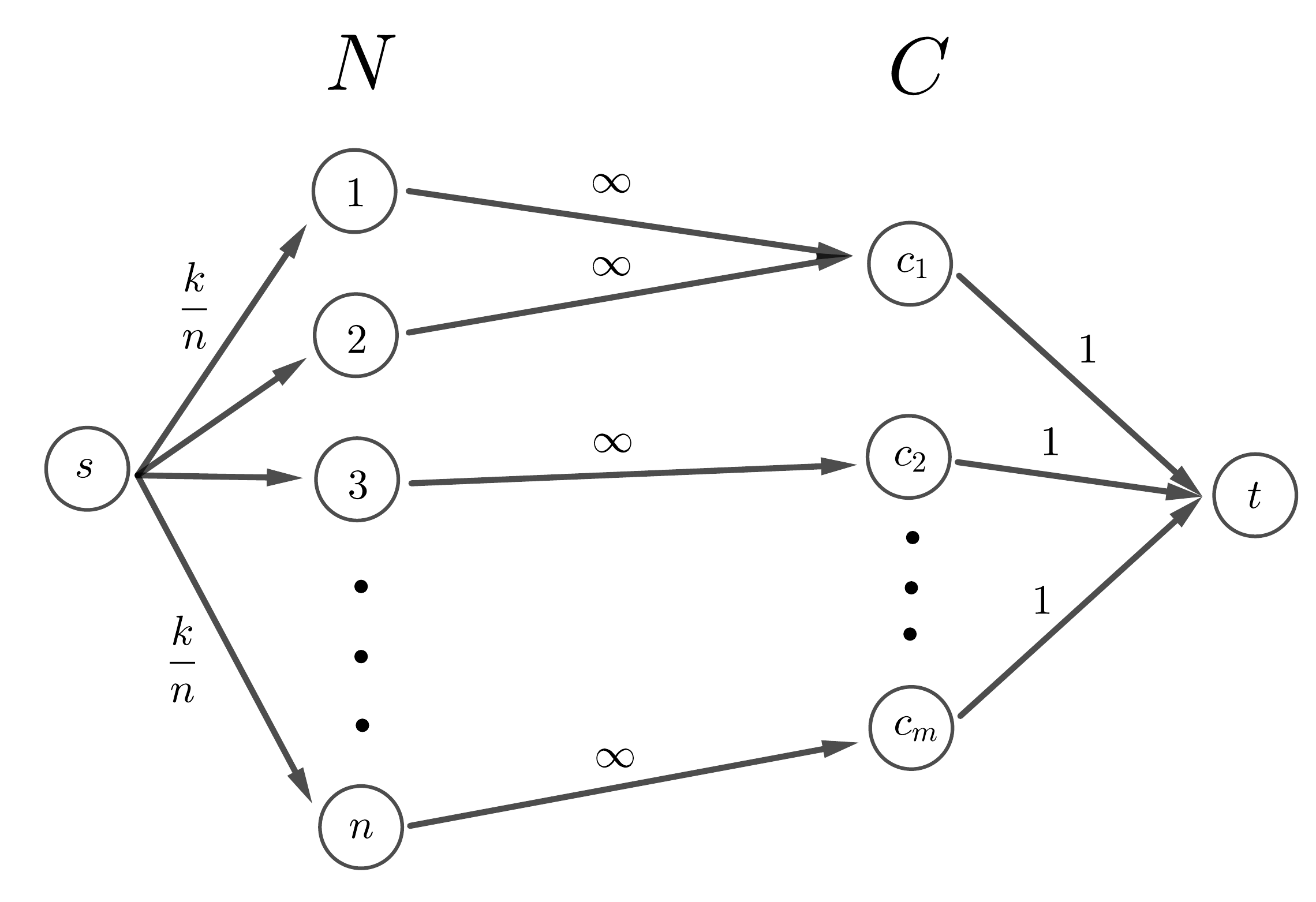}
	\end{center}
	\caption{Illustration of the network representation of an instance of committee voting.}\label{fig:1}
\end{figure}

\begin{definition}
	The \emph{network representation}~ of an instance $\instance$, denoted $\network_\instance$, is a flow network with source $s$, sink $t$, a node for each voter and each candidate, and edge capacities defined as follows (an edge exists if and only if a capacity is defined below):
	\begin{itemize}
		\item $\capacity(s,i) = k/n \quad \forall i\in N$
		\item $\capacity(i,c) = \infty \quad \forall i\in N, c\in A_i$
		\item $\capacity(c,t) = 1 \quad \forall c\in C$.
	\end{itemize}
\end{definition}
With the exception of \Cref{sec:gcut}, we will consider a single instance at a time, and hence simply use $\network$ to denote $\network_\instance$.
For $T\subseteq C$, we will define $\network_\instance(T)$ to be the network $\network_\instance$ where vertices corresponding to candidates in $C\setminus T$ and associated arcs are removed.
Formally, if $\instance = (C, \mathcal{A}, k)$, then $\network_\instance(T) = \network_{I'}$ where $I' = (T,\mathcal{A}', k)$ and $\mathcal{A}' = (A_1\cap T, \ldots, A_n\cap T)$. 
In this paper, we will often search for fair committees using the  network flow representation of our instance.
To do so, we must also be able to translate from network flows to fractional committees.
Given a feasible flow $f$ on the network representation of an instance, we define the \emph{fractional committee given by} $f$ to be $\vec{p}$ where $p_c=f(c, t)$ for all $c\in C$.
We remark that any fractional committee given by a feasible flow must also be feasible.
This holds since $p_c = f(c, t)\leq \capacity(c,t) = 1$ for all $c\in C$, and $\sum_{c\in C} f(c, t) = \sum_{i\in N} f(s, i) \leq k$.
Lastly, we denote the value of a feasible flow $f$ by $val(f)$.

In \Cref{sec:gcut}, we will consider some networks in which the arcs leaving the source need not have a capacity of $k/n.$
Hence, we will formalize the broader class of \emph{entitlement networks}, which have identical structure to network representations of instances but only constrain the arcs exiting the source to have capacities that sum to the instance's committee size $k$.
That is, a network $\wnetwork$ with arc capacities $\capacity$ is an entitlement network if there is an instance $\instance=(C,\mathcal{A},k)$ such that (1) $\sum_{\{v:(s,v)\in \wnetwork\}} \capacity(s,v) = k$ and (2) setting $\capacity(s,v)=k/n$ for all $\{v:(s,v)\in \wnetwork\}$ results in the network representation $\network_\instance$.
We point out that each entitlement network corresponds to exactly one instance of our problem. 
For this reason, we will often refer to the entitlement network nodes connected to the source as voters, $N$, and the nodes connected to the sink as candidates, $C$.
Lastly, abusing notation slightly, we denote the value of maximum flow on some network $\wnetwork$ as $val(\wnetwork).$

\section{A Characterization of Fair Committees}
\label{sec:grp}
In this section, we introduce a new axiom which unifies and strengthens existing notions of fairness. 
We then give a characterization of our fairness notion using the network representation of the problem. 
We first give an intuition and motivation behind our fairness notion by taking inspiration from fractional core.

\begin{definition}[Fractional core]	\label{def:core}
	A fractional committee~$\vec{p}$ satisfies \emph{fractional core} if there is no group of voters $S$ and committee $\vec{q}$ with $\sum_{c\in C} q_c \leq |S|\frac{k}{n}$ such that $u_i(\vec{q})> u_i(\vec{p})$ for all $i\in S$.
\end{definition}

The alternative fractional committee $\vec{q}$ in the above definition is referred to as a \emph{blocking deviation}.
By weakening the definition of blocking deviations to allow the utility of some voters in $S$ to be unchanged by the deviation, we can define a stronger concept which we call \emph{strict fractional core}.
While both versions of fractional core coincide in the single-winner setting, this is not the case here. 
In fact, the strict fractional core may be empty whereas a fractional core outcome always exists. 

 Each of these fractional core concepts has a \textit{fair taxation} interpretation \cite{MSW22,Lind58}, which we borrow from to motivate our own fairness notion.
 The quantity $\frac{k}{n}$ can be thought of as the tax contribution of a voter. 
 The tax money collected is used to buy candidates, where each candidate has a unit cost.\footnote{Candidates can be bought fractionally, meaning that if in total $\alpha \in (0,1)$ dollars are spent on a candidate $c$, then the candidate is selected fractionally with $p_c=\alpha$ .} 
 A committee satisfies fractional core if no group of voters $S$ could collectively use their tax contribution $|S|\frac{k}{n}$ to ``buy'' a fractional committee which they all prefer.
 Fractional core captures the notion that no group of voters, using only their own resources, can achieve a better outcome for themselves. 
 By comparison, our fairness notion states that the total amount of tax contribution (resources) spent on candidates approved by $S$ must be at least the maximum amount that the group $S$ could spend, provided that no voter's contribution is spent on a candidate they do not approve. 
 
\begin{definition} [Group Resource Proportionality]
A fractional committee $\vec{p}$ is said to satisfy \textit{group resource proportionality} (GRP) if for every $S\subseteq N$: 
\begin{equation}\label{eq:GRP}
	\sum\limits_{c\in \bigcup\limits_{i\in S} A_i} p_c \geq |S|\frac{k}{n} - \max\limits_{T\subseteq S}  \left[ |T|\frac{k}{n} - |\bigcup_{i\in T }A_i|\right].
\end{equation}
\end{definition}

It is important to note that the right-hand side of inequality~(\ref{eq:GRP}) is not necessarily $|S| \frac{k}{n}$, which one might expect, considering the fair taxation interpretation in which each voter in $S$ contributes $\frac{k}{n}$.
To see why this is not the case, consider a scenario in which an $\alpha$-fraction of the voters collectively approve of strictly less than $\alpha\cdot k$ candidates. 
In this case, a right-hand side of $|S|\frac{k}{n}$ would enforce that $\alpha\cdot k$ probability be allocated to candidates approved by some member of this group of voters, which is impossible.
Thus, to ensure existence, we subtract from the right-hand side a ``penalty'' measuring the gap between the group's proportional entitlement and the number of candidates they collectively approve, taken maximally over all subsets of voters in $S$.
Through this construction, \GRP gives to every group of voters a representation guarantee that avoids the shortfalls exhibited by fractional core in \Cref{ex:core_bad}.

Our fairness notion has a natural interpretation in terms of lotteries. Any lottery $\Delta$ implementing a GRP fractional committee $\vec{p}$  satisfies, for every group of voters $ S \subseteq N$,
 $$
 \mathbb{E}_{W\sim \Delta}[ | \{c \in W \ | \ \exists i\in S \text{ such that } c\in A_i \} | ] \geq |S|\frac{k}{n} - \max\limits_{T\subseteq S}  \left[ |T|\frac{k}{n} - |\bigcup_{i\in T }A_i|\right],
 $$
 where the left hand side is the expected number of selected candidates approved by some voter in $S$. 
 Thus, \GRP lotteries give a strong ex-ante representation guarantee to every coalition of voters.
  
 \subsection*{GRP Characterization }

  As \GRP gives strong representation guarantees for every coalition of voters, it is not clear a priori whether such fractional committees always exist or  how to obtain one.  In the following result, we use the  network flow formulation of our problem to provide a characterization of \GRP outcomes. 

\begin{theorem} \label{thm:gr_characterization}A fractional committee $\vec{p}$ satisfies group resource proportionality if and only if there exists a max flow $f$ on the network representation  $\mathcal{N}$  such that $p_c \geq  f(c , t)$ for each $c \in C$.
\end{theorem}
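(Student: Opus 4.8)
The plan is to first reformulate the right-hand side of the GRP inequality~\eqref{eq:GRP} as a max-flow value, and then prove each direction using max-flow min-cut. Writing $D_T := \bigcup_{i \in T} A_i$ for any $T \subseteq N$, a short manipulation (pulling $|S|\frac{k}{n}$ inside the maximum and negating) shows that for every $S \subseteq N$ the right-hand side of~\eqref{eq:GRP} equals
$$\beta(S) \;:=\; \min_{T \subseteq S} \left[ (|S| - |T|)\frac{k}{n} + |D_T| \right].$$
By max-flow min-cut, $\beta(S)$ is exactly the value of a maximum flow on the subnetwork obtained from $\network$ by deleting every voter outside $S$: a finite cut of that subnetwork places some $T \subseteq S$ on the source side, which forces $D_T$ to the source side (the $\infty$-capacity arcs cannot cross), yielding cut capacity $(|S| - |T|)\frac{k}{n} + |D_T|$. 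The case $S = N$ identifies the global max-flow value, which I denote $\text{MaxFlow}$, with $\min_{T \subseteq N}\left[(n - |T|)\frac{k}{n} + |D_T|\right]$. With this reformulation, GRP is equivalent to the family of inequalities $\sum_{c \in D_S} p_c \geq \beta(S)$ for all $S \subseteq N$.

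For the forward direction, suppose $\vec{p}$ is GRP. I would introduce the modified network $\network'$ in which each sink-capacity is lowered from $1$ to $p_c$ (legitimate since $p_c \leq 1$), so that a feasible flow of $\network'$ is precisely a feasible flow $f$ of $\network$ with $f(c \to t) \leq p_c$. Since $\network'$ has smaller capacities, its max-flow value is at most $\text{MaxFlow}$; it therefore suffices to show the reverse inequality, i.e.\ that every cut of $\network'$ has capacity at least $\text{MaxFlow}$. A cut placing voters $S$ on the source side has capacity $(n - |S|)\frac{k}{n} + \sum_{c \in D_S} p_c$, so this reduces to showing $\sum_{c \in D_S} p_c \geq \text{MaxFlow} - (n - |S|)\frac{k}{n}$ for all $S$. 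The key point is that the latter bound equals $\min_{T \subseteq N}\left[(|S| - |T|)\frac{k}{n} + |D_T|\right] \leq \beta(S)$, and GRP already guarantees $\sum_{c \in D_S} p_c \geq \beta(S)$. Hence $\network'$ attains value $\text{MaxFlow}$, and any of its maximum flows is the desired $f$.

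For the reverse direction, suppose some max flow $f$ of $\network$ satisfies $p_c \geq f(c \to t)$. Since $\sum_{c \in D_S} p_c \geq \sum_{c \in D_S} f(c \to t)$, it suffices to prove the purely flow-theoretic claim that for every max flow $f$ and every $S$, $\sum_{c \in D_S} f(c \to t) \geq \beta(S)$. I would prove this using the minimum cut $(X, Y)$ induced by $f$, where $X$ is the set reachable from $s$ in the residual graph. Let $S_X = S \cap X$. Because $\infty$-capacity arcs cannot cross a finite cut, every candidate in $D_{S_X}$ lies in $X$ and is hence saturated, contributing $|D_{S_X}|$; meanwhile each voter in $S \cap Y$ is source-saturated and, crucially, sends no flow to any candidate in $X$ (such an arc would be a $Y \to X$ arc of a min cut and so carries zero flow). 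Consequently the flow from $S \cap Y$ lands entirely on candidates in $Y \cap D_S$, disjoint from the saturated set $D_{S_X} \subseteq X$, adding a further $|S \cap Y|\frac{k}{n} = (|S| - |S_X|)\frac{k}{n}$. Taking $T = S_X$ in the definition of $\beta(S)$ then gives $\sum_{c \in D_S} f(c \to t) \geq |D_{S_X}| + (|S| - |S_X|)\frac{k}{n} \geq \beta(S)$, as required.

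The main obstacle is precisely this last flow-theoretic claim: a priori a global maximum flow could route external voters' budgets onto the candidates approved by $S$ and leave the voters in $S$ under-served, so it is not obvious that the flow reaching $D_S$ is large. The min-cut decomposition resolves this by showing that the contributions of the saturated source-side candidates and the source-saturated sink-side voters are disjoint and additive, which is exactly the structure needed to recover $\beta(S)$. The forward direction, by contrast, is a clean reduction comparing the two minima $\min_{T \subseteq S}$ and $\min_{T \subseteq N}$, and requires no structural flow argument.
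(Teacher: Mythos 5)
Your proposal is correct. The forward direction is essentially the paper's own argument: both lower the sink capacities from $1$ to $p_c$ and invoke max-flow min-cut to show the modified network still attains the original max-flow value; the only cosmetic difference is that you bound \emph{every} cut of the modified network by comparing $\min_{T\subseteq N}$ against $\min_{T\subseteq S}$, whereas the paper plugs the GRP inequality into the particular minimizing cut $T^*$. The reverse direction, however, takes a genuinely different route. The paper proves the contrapositive by flow surgery: assuming GRP fails for some group $S'$, it constructs a hybrid flow that runs a subnetwork max flow on the voters $S'$ and their approved candidates while copying $f$ everywhere else, and shows this hybrid has strictly larger value than $f$, contradicting maximality. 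You instead prove the stronger standalone fact that \emph{every} max flow $f$ on $\network$ satisfies $\sum_{c\in D_S} f(c\rightarrow t)\geq \beta(S)$ for all $S$ (writing $D_S=\bigcup_{i\in S}A_i$), via a structural decomposition along the canonical min cut $(X,Y)$ given by residual reachability: candidates approved by $S\cap X$ lie in $X$ (infinite-capacity arcs) and are saturated, voters in $S\cap Y$ are source-saturated and, since no positive flow can cross from $Y$ into $X$ in a min cut, spend their full $k/n$ on candidates in $D_S\cap Y$, and these two contributions are disjoint, yielding the cut witness $T=S_X$ in the definition of $\beta(S)$. Your argument buys a direct, non-contradiction explanation of why max flows are fair — it makes explicit the consequence the paper only states as a corollary, that the committee given by any max flow satisfies GRP — and it sidesteps the feasibility bookkeeping (capacity and conservation checks) that the paper's glued flow requires; the paper's surgery argument, in exchange, needs nothing about the structure of min cuts beyond the max-flow min-cut theorem itself.
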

\begin{proof}
Let $\vec{p}$ be a fractional committee satisfying \GRP, and consider a modified network $\widetilde{\mathcal{N}}$ with modified capacities $\capacity(c , t) = p_c$ for each $c\in C$. 
By the max flow min-cut theorem, there exists a maximum (or max) flow $f$ on $\widetilde{\mathcal{N}}$ such that 
\begin{align*}
	\sum_{c\in C} f(c, t) &= \min_{T\subseteq N} \left[ |N\setminus T| \frac{k}{n} + \sum\limits_{c\in \bigcup\limits_{i\in T} A_i} p_c \right] \\ 
	&=  |N\setminus T^*|\frac{k}{n} + \sum\limits_{c\in \bigcup\limits_{i\in T^*} A_i} p_c \\ 
	&\geq |N\setminus T^*|\frac{k}{n} + |T^*|\frac{k}{n} - \max\limits_{T\subseteq T^*}  \left[ |T|\frac{k}{n} - |\bigcup_{i\in T }A_i|\right] \\ 
	&= k - \max\limits_{T\subseteq T^*}  \left[ |T|\frac{k}{n} - |\bigcup_{i\in T }A_i|\right] \\ 
	&\geq k - \max\limits_{T\subseteq N}  \left[ |T|\frac{k}{n} - |\bigcup_{i\in T }A_i|\right]
\end{align*}
where the first inequality follows from the fact that $\vec{p}$ satisfies \GRP. 
Observe that $$k - \max\limits_{T\subseteq N}  \left[ |T|\frac{k}{n} - |\bigcup\limits_{i\in T }A_i|\right]= \min\limits_{T\subseteq N} \left[ |N\setminus T| \frac{k}{n} + | \bigcup\limits_{i\in T} A_i| \right] $$ which is equal to the min cut value of $\mathcal{N}$. 
Hence, $f$ is indeed a max flow on $\mathcal{N}$ and since $f$ is also a valid flow on $\widetilde{\mathcal{N}}$, it satisfies  $p_c \geq  f(c , t)$ for each $c \in C$. 

We now prove the converse direction.  Let $f$ be a max flow on network formulation $\mathcal{N}$ and $p_c \geq f(c, t)$ for each $c\in C$.
Suppose, on the contrary, that $\vec{p}$ does not satisfy \GRP. That is, there exists $S'\subseteq N$ such that
\begin{align}\label{ineq:char}
	\sum\limits_{c\in \bigcup\limits_{i\in S'} A_i} p_c< |S'|\frac{k}{n} - \max\limits_{T\subseteq S'}  \left[ |T|\frac{k}{n} - |\bigcup_{i\in T }A_i|\right].
\end{align}
Observe that  $|S'|\frac{k}{n} - \max\limits_{T\subseteq S'}  \left[ |T|\frac{k}{n} - |\bigcup_{i\in T }A_i|\right]= \min\limits_{T\subseteq S'} \left[ |S'\setminus T| \frac{k}{n} +|\bigcup_{i\in T }A_i|   \right] $, which is the min cut value of a subnetwork $\mathcal{N'}$ where the set of voters is $S'$ and the set of candidates is $\bigcup\limits_{i\in S'} A_i$. By the max flow min-cut theorem, we know that there exists a max flow $f'$ on  $\mathcal{N'}$ whose value satisfies $ \sum\limits_{c\in \bigcup\limits_{i\in S'} A_i} f'(c , t) = \min\limits_{T\subseteq S'} \left[ |S'\setminus T| \frac{k}{n} +|\bigcup_{i\in T }A_i|   \right] $. Consider a new flow $f^*$ on the entire network $\mathcal{N}$ defined as follows:
\begin{align*}
	&f^*(s , i) = f'(s , i) \ \ \forall i\in S', \ \ &f^*(i , c_j) = f'(i , c_j) \ \ \forall i\in S', \forall c_j\in A_i \\ 
	&f^*(c_j , t)=f'(c_j , t) \ \ \forall c_j\in \bigcup\limits_{r\in S'} A_r \ \ & \\
	&f^*(s , i) = f(s , i) \ \ \forall i\in N\setminus S' \ \ & f^*(i , c_j)=0 \ \  \forall i\in N\setminus S', \ \forall c_j\in \bigcup\limits_{i\in S'} A_i \\ 
	&f^*(i , c_j)=f(i , c_j) \ \  \forall i\in N\setminus S', \ \forall c_j\in A_i\setminus \bigcup\limits_{r\in S'} A_r \ \ & f^*(c_j , t)=f(c_j , t) \ \ \forall c_j\in C\setminus \bigcup\limits_{r\in S'} A_r 
\end{align*}
Note that $f^*$ satisfies the capacity constraints as well as the flow conservation constraints on the entire network $\mathcal{N}$, and thus is a valid flow on $\mathcal{N}$. Finally we see that, 
\begin{align*}
	\sum_{c\in C} f(c, t) &=  \sum\limits_{c\in \bigcup\limits_{i\in S'} A_i} f(c , t) + \sum\limits_{c\in C\setminus \bigcup\limits_{i\in S'} A_i} f(c , t) \\
	& \leq  \sum\limits_{c\in \bigcup\limits_{i\in S'} A_i}  p_c + \sum\limits_{c\in C\setminus \bigcup\limits_{i\in S'} A_i} f(c , t) \qquad \because f(c_i)\leq p_i \ \  \forall c_i\in C \\
	&< \left[ |S'|\frac{k}{n} - \max\limits_{T\subseteq S'}  \left[ |T|\frac{k}{n} - |\bigcup_{i\in T }A_i|\right] \right] + \sum\limits_{c\in C\setminus \bigcup\limits_{i\in S'} A_i} f(c , t) \quad \because by \ \ (\ref{ineq:char}) \\ 
	&=  \min\limits_{T\subseteq S'} \left[ |S'\setminus T| \frac{k}{n} +|\bigcup_{i\in T }A_i|   \right] + \sum\limits_{c\in C\setminus \bigcup\limits_{i\in S'} A_i} f(c , t)\\
	&= \sum\limits_{c\in \bigcup\limits_{i\in S'} A_i} f'(c , t) + \sum\limits_{c\in C\setminus \bigcup\limits_{i\in S'} A_i} f(c , t) \\ 
	&= \sum_{c\in C} f^*(c, t)
\end{align*}
However, this contradicts the assumption that $f$ is a max flow on $\mathcal{N}$.  
\end{proof}

An immediate consequence of \Cref{thm:gr_characterization} is that every fractional committee given by a max flow $f$ on the network representation $\mathcal{N}$ satisfies \GRP.  
As we will see, the characterization result also provides a useful tool for designing rules which satisfy \GRP.

\subsection*{Comparison With Existing Fairness Notions}

In this section, we compare our fairness notion with the existing fairness notions. 
We begin by noting that, due to the structural differences between the two settings, significant challenges arise when extending axioms and algorithms from the single-winner setting of probabilistic voting to probabilistic committee voting. 
This was explored by \citet{ALS+23}, who extended the fair share axioms from the single-winner setting \citep{Dudd15a,BMS05a} to the committee voting setting, showing that two alternative interpretations are possible resulting in two distinct fairness hierarchies. 
The strongest axioms of the two distinct hierarchies are \emph{group fair share} (GFS) and \emph{strong unanimous fair share} (Strong UFS).
\begin{definition}[Strong UFS \citep{ALS+23}]
	\label{def:sUFS}
	A fractional committee~$\vec{p}$ satisfies \emph{Strong UFS} if for all~$S \subseteq N$ where $\approvalBallotOfAgent{i} = \approvalBallotOfAgent{j}$ for any~$i, j \in S$, it holds for each $i\in S$ that $u_i(\vec{p}) = \sum_{c \in \approvalBallotOfAgent{i}} p_c \geq \min \left\{ |S|  \frac{\committeeSize}{n}, |\approvalBallotOfAgent{i}| \right\}.$
\end{definition}

\begin{definition}[GFS \citep{ALS+23}]
	\label{def:GFS}
	A fractional committee~$\vec{p}$ satisfies \emph{GFS} if it holds for every~$S \subseteq N$ that $\sum_{c \in \bigcup_{i \in S} \approvalBallotOfAgent{i}} p_c \geq \frac{1}{n} \cdot \sum_{i \in S} \min\{\committeeSize, |\approvalBallotOfAgent{i}|\}.$
\end{definition}

We first show that our fairness axiom unifies both fairness hierarchies of \cite{ALS+23} by strengthening both GFS and Strong UFS.

\begin{restatable}{proposition}{propufsgfs} \label{prop:UFS_GFS} 
	Group resource proportionality implies Strong UFS and GFS.
\end{restatable}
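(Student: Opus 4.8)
The plan is to derive both implications directly from the definition of \GRP, using the min-cut reformulation of the right-hand side of~\eqref{eq:GRP} that already appears in the proof of \Cref{thm:gr_characterization}: for every $S \subseteq N$,
\[
|S|\frac{k}{n} - \max_{T \subseteq S}\left[|T|\frac{k}{n} - \Bigl|\bigcup_{i \in T} A_i\Bigr|\right] = \min_{T \subseteq S}\left[|S \setminus T|\frac{k}{n} + \Bigl|\bigcup_{i \in T} A_i\Bigr|\right].
\]
Thus a \GRP committee $\vec{p}$ satisfies $\sum_{c \in \bigcup_{i \in S} A_i} p_c \geq \min_{T \subseteq S}\left[|S \setminus T|\frac{k}{n} + |\bigcup_{i \in T} A_i|\right]$ for every $S$, and it suffices to show this min-cut lower bound dominates the right-hand side of each of Strong UFS and GFS.

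For Strong UFS, I would fix a group $S$ in which all voters share a common approval set $A$. Then $\bigcup_{i \in T} A_i = A$ for every nonempty $T \subseteq S$, so the bracketed expression equals $|S \setminus T|\frac{k}{n} + |A|$ for nonempty $T$ and $|S|\frac{k}{n}$ for $T = \emptyset$. The minimum is attained either at $T = \emptyset$ (giving $|S|\frac{k}{n}$) or at $T = S$ (giving $|A|$), so the bound collapses to $\min\{|S|\frac{k}{n}, |A|\}$, which is exactly the Strong UFS guarantee for each $i \in S$.

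For GFS, since the \GRP bound is a minimum over $T \subseteq S$, it suffices to show that \emph{every} term dominates the GFS right-hand side, i.e.\ that for all $T \subseteq S$,
\[
|S \setminus T|\frac{k}{n} + \Bigl|\bigcup_{i \in T} A_i\Bigr| \;\geq\; \frac{1}{n}\sum_{i \in S} \min\{k, |A_i|\}.
\]
I would split the GFS sum over $S \setminus T$ and $T$ and charge each piece separately. For $i \in S \setminus T$ the crude bound $\min\{k, |A_i|\} \leq k$ yields $\frac{1}{n}\sum_{i \in S \setminus T}\min\{k,|A_i|\} \leq |S\setminus T|\frac{k}{n}$, matching the first term. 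For the $T$-part, the key estimate is
\[
\frac{1}{n}\sum_{i \in T}\min\{k, |A_i|\} \;\leq\; \frac{1}{n}\sum_{i \in T}\Bigl|\bigcup_{j \in T} A_j\Bigr| \;=\; \frac{|T|}{n}\Bigl|\bigcup_{j \in T} A_j\Bigr| \;\leq\; \Bigl|\bigcup_{j \in T} A_j\Bigr|,
\]
using $\min\{k, |A_i|\} \leq |A_i| \leq |\bigcup_{j \in T} A_j|$ for $i \in T$ together with $|T| \leq n$. Adding the two estimates gives the displayed inequality.

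The main obstacle is identifying the correct charging scheme for GFS; in particular, the $T$-part estimate hinges on the easily overlooked facts that each approval set is contained in the group's union and that $|T|/n \leq 1$, which together absorb the $\frac{1}{n}$ normalization present in GFS but absent from the union term in the \GRP bound. The Strong UFS direction, by contrast, is essentially immediate once the min-cut reformulation is in hand.
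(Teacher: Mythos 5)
Your proof is correct. The Strong UFS direction coincides with the paper's own argument (the paper works with the $\max$ form and observes it collapses to $\max\{|S|\tfrac{k}{n}-|A_i|,\,0\}$; your $\min$-form case analysis is the same computation). The GFS direction, however, is organized differently. The paper analyzes the \emph{maximizer}: it defines $Q=\{i\in S : |A_i|\leq k\}$, shows any optimal $T^*$ satisfies $T^*\subseteq Q$, relaxes $|\bigcup_{i\in T}A_i|\geq \frac1n\sum_{i\in T}|A_i|$, and then identifies the relaxed maximum as attained exactly at $T=Q$, which makes the GFS right-hand side appear directly. You instead avoid any argmax analysis by passing to the cut ($\min$) formulation and proving a \emph{uniform} bound, namely that every term $|S\setminus T|\tfrac{k}{n}+|\bigcup_{i\in T}A_i|$ dominates $\frac1n\sum_{i\in S}\min\{k,|A_i|\}$, via the charging scheme that sends $S\setminus T$ to the capacity term and $T$ to the union term. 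Both arguments ultimately rest on the same two estimates ($\min\{k,|A_i|\}\leq k$ and $\frac1n\sum_{i\in T}|A_i|\leq|\bigcup_{i\in T}A_i|$), but your version is arguably cleaner: it sidesteps the set $Q$ and the need to locate the optimum, at the mild cost of not exhibiting which coalition makes the \GRP bound tight, which the paper's proof makes explicit.
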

\begin{proof}
	We first show that \GRP implies Strong UFS. Consider any  group  of voters $S\subseteq N$ with $\approvalBallotOfAgent{i} = \approvalBallotOfAgent{j}$ for any~$i, j \in S$. For any voter $i\in S$, we have that
	\begin{align*}
		u_i(\vec{p}) = \sum_{c\in A_i} p_c &\geq |S| \frac{k}{n} - \max\limits_{T\subseteq S}  \left[ |T|\frac{k}{n} - |\bigcup_{j\in T }A_j|\right]  \  \qquad \because \vec{p} \text{ satisfies \GRP} \\
		& = |S| \frac{k}{n} - \max\limits_{T=\emptyset, T=S}\left[ |T|\frac{k}{n} - |\bigcup_{j\in T } A_j|\right] \quad \because \bigcup_{j\in T } A_j = A_i   \text{ for any } T\neq \emptyset \\ 
		& = |S| \frac{k}{n} - \max\left[|S|\frac{k}{n}-|A_i|,\ 0 \right] \\
		& = \min\left[|S| \frac{k}{n},\ |A_i|\right].
	\end{align*}
	
	Next we show that \GRP implies GFS. Fix arbitrary set of voters $S\subseteq N$. 
	Let $Q=\{ i\in S \ | \ |A_i|\leq k\}$ and $T^*=\argmax\limits_{T\subseteq S} \left[ |T|\frac{k}{n} - |\bigcup_{i\in T }A_i|\right]$, observe that $ T^*  \subseteq Q$. To see this suppose there exists $j\in S\setminus Q$ (and thus $|A_j|\geq k+1$) such that $j\in  T^* $. We have that $|\bigcup_{i\in T^* }A_i| \geq k+1$  but this results in  contradiction as  $0\leq |T^*|\frac{k}{n} - |\bigcup_{i\in T^* }A_i| \leq  |T^*|\frac{k}{n} - (k+1) < 0 $. Finally, we see that, 
	\begin{align*}
		\sum\limits_{c\in \bigcup_{i\in S} A_i} p_i &\geq |S|\frac{k}{n} - \max\limits_{T\subseteq S}  \left[ |T|\frac{k}{n} - |\bigcup_{i\in T }A_i|\right] \qquad\quad \because \vec{p} \text{ satisfies \GRP}  \\ 
		&= |S|\frac{k}{n} - \max\limits_{T\subseteq Q}  \left[ |T|\frac{k}{n} - |\bigcup_{i\in T }A_i|\right] \\ 
		&\geq |S|\frac{k}{n} - \max\limits_{T\subseteq Q}  \left[ |T|\frac{k}{n} - \frac{1}{n}\sum_{i\in T}|A_i|\right] \qquad \because |\bigcup_{i\in T}A_i|\geq  \frac{1}{n}\sum_{i\in T}|A_i|   \\
		& = |S|\frac{k}{n} -\left[ |Q|\frac{k}{n} - \frac{1}{n}\sum_{i\in Q}|A_i|\right] \\ 
		& = |S\setminus Q |\frac{k}{n}+ \frac{1}{n}\sum_{i\in Q}|A_i| \ = \ \frac{1}{n} \sum_{i\in S} \min (k, |A_i|). 
	\end{align*}
\end{proof}

Furthermore, we point out that if an integral committee satisfies GRP, then it also satisfies the well-studied proportional representation notion of \emph{proportional justified representation} PJR \citep[see, e.g., ][]{SEL+17a,AEH+18a,LaSk23a}. 
An analogous result does not hold for Strong UFS or GFS \citep{ALS+23}.
Since PJR is not a focus of this work, we defer its definition and the proof of the following proposition to the appendix.

\begin{restatable}{proposition}{propPJR} \label{prop:grp_implies_pjr}
	If an integral committee satisfies GRP, then it satisfies PJR.
\end{restatable}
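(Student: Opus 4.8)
The plan is to first pin down the version of PJR I would work with: a size-$k$ committee $W$ satisfies PJR if for every integer $\ell \in [k]$ and every $\ell$-cohesive group of voters $S$ -- that is, a group with $|S| \geq \ell \frac{n}{k}$ and $|\bigcap_{i \in S} A_i| \geq \ell$ -- it holds that $|W \cap \bigcup_{i \in S} A_i| \geq \ell$. Since $W$ is integral, we have $p_c = \mathbbm{1}[c \in W]$, so the left-hand side of the GRP inequality~\eqref{eq:GRP} instantiated at $S$ equals exactly $|W \cap \bigcup_{i \in S} A_i|$. Thus the whole task reduces to showing that, for any $\ell$-cohesive group $S$, the right-hand side of~\eqref{eq:GRP} is at least $\ell$; the PJR conclusion then follows immediately.

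The heart of the argument is to upper bound the penalty term $\max_{T \subseteq S}[\,|T|\frac{k}{n} - |\bigcup_{i \in T} A_i|\,]$ by $|S|\frac{k}{n} - \ell$. First I would dispose of the case $T = \emptyset$, where the bracketed quantity is $0$; this is at most $|S|\frac{k}{n} - \ell$ precisely because cohesiveness supplies the size condition $|S| \geq \ell \frac{n}{k}$, i.e. $|S|\frac{k}{n} \geq \ell$. For any nonempty $T \subseteq S$, the key observation is that $\bigcup_{i \in T} A_i \supseteq \bigcap_{i \in S} A_i$, since any fixed $j \in T$ satisfies $\bigcap_{i \in S} A_i \subseteq A_j \subseteq \bigcup_{i \in T} A_i$; hence $|\bigcup_{i \in T} A_i| \geq \ell$ by the other half of cohesiveness. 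Combining this with $|T| \leq |S|$ gives $|T|\frac{k}{n} - |\bigcup_{i \in T} A_i| \leq |S|\frac{k}{n} - \ell$, as desired.

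Substituting this bound into~\eqref{eq:GRP} yields $|W \cap \bigcup_{i \in S} A_i| \geq |S|\frac{k}{n} - \big(|S|\frac{k}{n} - \ell\big) = \ell$, which is exactly the PJR guarantee for $S$. Since $S$ and $\ell$ were arbitrary, $W$ satisfies PJR.

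I do not anticipate a genuine obstacle: the argument is essentially a one-line manipulation of~\eqref{eq:GRP} once the cohesiveness hypotheses are split into their two roles. The only mild subtlety is remembering to account for the empty subset separately in the maximization (where the size condition $|S| \geq \ell \frac{n}{k}$ is what saves us) and for the nonempty subsets (where the structural condition $|\bigcap_{i\in S} A_i| \geq \ell$ is used). Notably, because the lower bound lands exactly on the integer $\ell$, no rounding step is needed despite $W$ being integral.
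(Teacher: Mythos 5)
Your proof is correct and follows essentially the same route as the paper's: instantiate the GRP inequality at the $\ell$-cohesive group, note that integrality makes the left-hand side equal to $\left| W\cap \bigcup_{i\in S} A_i \right|$, and use the two halves of cohesiveness to bound the penalty term $\max_{T\subseteq S}\left[ |T|\tfrac{k}{n} - \left|\bigcup_{i\in T}A_i\right| \right]$ by $|S|\tfrac{k}{n}-\ell$. If anything, you are slightly more careful than the paper, which justifies its bound with the claim $\left|\bigcup_{i\in T}A_i\right|\geq \ell$ ``for all $T\subseteq N'$'' (false for $T=\emptyset$), whereas you handle the empty subset separately via the size condition $|S|\geq \ell\tfrac{n}{k}$.
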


We next show that strict fractional core implies GRP. 
This result shows that, when targeting proportional representation, strict fractional core --- despite its non-existence --- is a more appropriate generalization of core (than fractional core) from the $k=1$ special case.

\begin{restatable}{proposition}{propcore} \label{prop:core_implies_gr}
	If a fractional committee satisfies strict fractional core, then it satisfies group resource proportionality.
\end{restatable}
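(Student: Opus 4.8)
The plan is to argue the contrapositive: I will show that if $\vec{p}$ violates \GRP, then $\vec{p}$ admits a blocking coalition and hence cannot lie in the fractional core. So suppose $\vec{p}$ violates \GRP, witnessed by a group $S \subseteq N$, meaning $\sum_{c\in\bigcup_{i\in S}A_i}p_c < r$, where I abbreviate $r \coloneqq |S|\frac{k}{n} - \max_{T\subseteq S}\left[|T|\frac{k}{n}-|\bigcup_{i\in T}A_i|\right]$. The first step is to rewrite this threshold in the min-cut form already used in the proof of \Cref{thm:gr_characterization}, namely $r = \min_{T\subseteq S}\left[|S\setminus T|\frac{k}{n} + |\bigcup_{i\in T}A_i|\right]$. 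Evaluating this minimum at the two extreme choices $T=\emptyset$ and $T=S$ yields the two bounds I will rely on: $r \le |S|\frac{k}{n}$ and $r \le |\bigcup_{i\in S}A_i|$.

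Next I would construct the deviation. Take $\vec{q}$ to be $\vec{p}$ restricted to the candidates approved by $S$, i.e.\ $q_c = p_c$ for $c\in\bigcup_{i\in S}A_i$ and $q_c = 0$ otherwise. Since $A_i\subseteq\bigcup_{j\in S}A_j$ for every $i\in S$, this preserves every coalition member's utility, $u_i(\vec{q}) = u_i(\vec{p})$, while costing only $\sum_c q_c = \sum_{c\in\bigcup_{i\in S}A_i}p_c$, which by the \GRP violation is strictly less than $r$, hence strictly less than $|S|\frac{k}{n}$. Thus $\vec{q}$ already weakly dominates $\vec{p}$ for $S$ within the coalition's budget; it remains only to upgrade one voter to a strict gain.

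The crux is to spend the leftover budget productively, and this is where the second bound does the work. Because $\sum_c q_c < |\bigcup_{i\in S}A_i|$ while every coordinate satisfies $q_c\le 1$, at least one candidate $c^*\in\bigcup_{i\in S}A_i$ has $q_{c^*}<1$, i.e.\ has room to grow; and because $\sum_c q_c < |S|\frac{k}{n}$, there is a strictly positive amount of unspent budget. Increasing $q_{c^*}$ by $\varepsilon = \min\left\{|S|\tfrac{k}{n}-\sum_c q_c,\; 1-q_{c^*}\right\} > 0$ keeps $\vec{q}$ feasible ($q_{c^*}\le 1$ and total cost $\le |S|\frac{k}{n}$) and strictly raises $u_{i^*}(\vec{q})$ for any voter $i^*\in S$ approving $c^*$, while leaving all other utilities unchanged. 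Hence $S$ together with $\vec{q}$ blocks $\vec{p}$, contradicting fractional core.

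I expect the only delicate point to be exactly this last step: guaranteeing that a strict improvement is simultaneously affordable and admissible. A naive argument would set $\vec{q}=\vec{p}$ on $\bigcup_{i\in S}A_i$ and stop, but that yields only weak domination; conversely, one must rule out the degenerate case in which every approved candidate is already saturated at $p_c=1$, which is precisely excluded by the bound $r\le|\bigcup_{i\in S}A_i|$ coming from $T=S$. Notably, this argument is self-contained and does not itself invoke the max-flow machinery; the flow-theoretic consequence for core outcomes then follows by composing this proposition with \Cref{thm:gr_characterization}, as noted in the discussion preceding the statement.
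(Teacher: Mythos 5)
Your proof is correct and follows essentially the same route as the paper's: argue the contrapositive and build a blocking deviation for $S$ by restricting $\vec{p}$ to $\bigcup_{i\in S}A_i$ and shifting $\varepsilon = \min\left\{|S|\tfrac{k}{n}-\sum_c q_c,\ 1-q_{c^*}\right\}$ onto an unsaturated approved candidate $c^*$. The only structural difference is that the paper splits into the cases $|\bigcup_{i\in S}A_i|\le |S|\tfrac{k}{n}$ (where it uses the integral deviation $\vec{1}_{\cup_{i\in S}A_i}$) and $|\bigcup_{i\in S}A_i| > |S|\tfrac{k}{n}$, whereas your observation that the threshold's min-cut form gives both $r\le |S|\tfrac{k}{n}$ (at $T=\emptyset$) and $r\le |\bigcup_{i\in S}A_i|$ (at $T=S$) lets a single construction cover both cases.
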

\begin{proof}
	We prove the contrapositive.
	Let $\vec{p}$ be a fractional committee which does not satisfy \GRP and let $S\subseteq N$ be a group of voters for which \GRP is violated.
	We will show the existence of a blocking deviation $\vec{q}$ for the group of voters $S$ and thus show that $\vec{p}$ does not satisfy strict fractional core.
	
	If $|\bigcup_{i\in S} A_i| \leq |S|\frac{k}{n}$, then let $\vec{q}=\vec{1}_{\cup_{i\in S} A_i}$.
	It is clear that $\sum_{c\in C} q_c\leq |S|\frac{k}{n}$ and that $u_i(\vec{p})\leq u_i(\vec{q})$ for all $i\in S$.
	Since S violates GRP, we have
	\begin{align*}
		\sum\limits_{c\in \bigcup\limits_{i\in S} A_i} p_c < |S|\frac{k}{n} - \max\limits_{T\subseteq S}  \left[ |T|\frac{k}{n} - |\bigcup_{i\in T }A_i|\right] \leq |\bigcup_{i\in S} A_i|
	\end{align*}
	and thus there must be some $i\in S$ for which $u_i(\vec{p})<u_i(\vec{q})$. 
	
	Suppose instead that $|\bigcup_{i\in S} A_i| > |S|\frac{k}{n}$. 
	Note that
	\begin{align*}
		\sum\limits_{c\in \bigcup\limits_{i\in S} A_i} p_c < |S|\frac{k}{n} - \max\limits_{T\subseteq S}  \left[ |T|\frac{k}{n} - |\bigcup_{i\in T }A_i|\right] \leq |S|\frac{k}{n} < |\bigcup_{i\in S} A_i|
	\end{align*}
	and thus there must be some candidate $c'\in\bigcup_{i\in S} A_i$ such that $p_{c'}<1$. 
	Let $\epsilon=\min(|S|\frac{k}{n} - \sum_{c\in \bigcup\limits_{i\in S} A_i} p_c, 1-p_{c'})$.
	Set $q_{c'} = p_{c'} + \epsilon$, $q_c=p_c$ for all $c\in\bigcup_{i\in S} A_i \setminus \{c'\}$, and $q_c=0$ otherwise.
	By construction, $\sum_{c\in C} q_c \leq |S|\frac{k}{n}.$ Furthermore, it is clear that $u_i(\vec{p})\leq u_i(\vec{q})$ for all $i\in S$ and the voters in $S$ who approve of $c'$ will strictly improve.  
\end{proof}

\begin{figure}[t]
\centering
\scalebox{1.0}{
\begin{tikzpicture}
\tikzstyle{onlytext}=[]
\tikzset{venn circle/.style={circle,minimum width=0mm,fill=#1,opacity=0.1}}

\node[onlytext] (strict fractional core) at (0,0) {\begin{tabular}{c}{strict fractional core*}\end{tabular}};

\node[onlytext,text=green!70!black!90] (GRP) at (-3,-1.5) {\begin{tabular}{c}{\textbf{GRP}}\end{tabular}};
\node[onlytext] (fractional core) at (3,-1.5) {\begin{tabular}{c}{fractional core}\end{tabular}};

\node[onlytext,text=green!70!black!90] (Strong UFS) at (-3,-3) {\begin{tabular}{c}{Strong UFS}\end{tabular}};
\node[onlytext,text=green!70!black!90] (GFS) at (3,-3) {\begin{tabular}{c}{GFS}\end{tabular}};

\draw[->, line width=1pt] (strict fractional core) -- (GRP);
\draw[->, line width=1pt] (strict fractional core) -- (fractional core);

\draw[->, line width=1pt] (fractional core) -- (GFS);
\draw[->, line width=1pt] (fractional core) -- (Strong UFS);

\draw[->, line width=1pt] (GRP) -- (GFS);
\draw[->, line width=1pt] (GRP) -- (Strong UFS);
\end{tikzpicture}
}
\caption{Visualization of hierarchy of fairness properties mentioned in this paper.
An arrow from (A) to (B) denotes that (A) implies (B).
Properties highlighted in green are polynomial time computable, following from \Cref{thm:gr_characterization}.
(*) Strict fractional core is not guaranteed to exist.
}
\label{fig:relations}
\end{figure}

Refer to \Cref{fig:relations} for a visualization of the logical relations of each of the described fairness properties.
Although \GRP is implied by the strict fractional core when it exists, GRP and fractional core are logically incomparable (see \Cref{ex:core_bad}). 
There are several advantages to our fairness notion as opposed to fractional core.
Perhaps the most apparent is that, while no known efficient algorithm exists for the fractional core \citep{MSW22}, our fairness notion can be achieved through an efficient algorithm. 
Second, given a fractional committee $\vec{p}$, checking whether it satisfies \GRP can be done in polynomial time (\Cref{prop:check}). 
This is in contrast to the fractional core, for which, to the best of our knowledge, it is not known whether such an algorithm exists. 

\begin{restatable}{proposition}{propcheck} \label{prop:check} 
	For a given fractional committee $\vec{p}$, checking whether $\vec{p}$ satisfies group resource proportionality can be done in polynomial time.
\end{restatable}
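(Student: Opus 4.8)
The plan is to avoid the naive approach of testing inequality~(\ref{eq:GRP}) over all $2^n$ subsets $S$, and instead to leverage the characterization in \Cref{thm:gr_characterization}. By that result, $\vec{p}$ satisfies \GRP if and only if there exists a max flow $f$ on the network $\mathcal{N}$ with $f(c\rightarrow t)\leq p_c$ for every $c\in C$. The key idea is to reformulate this existence condition as an equality between two max-flow values, each of which is computable in polynomial time.

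First I would form the modified network $\widetilde{\mathcal{N}}$ from $\mathcal{N}$ exactly as in the proof of \Cref{thm:gr_characterization}, by lowering each sink-edge capacity from $cap(c\rightarrow t)=1$ to $cap(c\rightarrow t)=p_c$; this is legitimate since $p_c\in[0,1]$, so every capacity only weakly decreases. I would then observe that a max flow $f$ on $\mathcal{N}$ satisfying $f(c\rightarrow t)\leq p_c$ for all $c\in C$ is precisely a feasible flow on $\widetilde{\mathcal{N}}$ whose value attains the max-flow value of $\mathcal{N}$. Consequently, such an $f$ exists if and only if the max-flow value of $\widetilde{\mathcal{N}}$ is at least the max-flow value of $\mathcal{N}$.

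Since $\widetilde{\mathcal{N}}$ is obtained from $\mathcal{N}$ only by decreasing capacities, its max-flow value can never exceed that of $\mathcal{N}$. Therefore the condition collapses to the simple test of whether the max-flow values of $\widetilde{\mathcal{N}}$ and $\mathcal{N}$ coincide. The algorithm is then immediate: compute the max-flow value on $\mathcal{N}$ and on $\widetilde{\mathcal{N}}$ using any standard polynomial-time max-flow routine, and declare that $\vec{p}$ satisfies \GRP exactly when the two values are equal. Running time is polynomial because both networks have $O(n+m)$ nodes and $O(nm)$ edges.

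I expect the only real content to lie in the equivalence of the second and third paragraphs — namely, translating the ``there exists a dominated max flow'' condition of \Cref{thm:gr_characterization} into a monotone comparison of two scalar max-flow values, and verifying that decreasing capacities preserves the direction of the inequality so that equality of values is both necessary and sufficient. Once this reduction is in place, the polynomial-time bound follows directly from the polynomial-time solvability of max flow, so I anticipate no further technical obstacles.
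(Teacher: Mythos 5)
Your proposal is correct and follows essentially the same route as the paper: both modify $\network$ by setting $cap(c\rightarrow t)=p_c$, compute max flows on the original and modified networks, and invoke \Cref{thm:gr_characterization} to conclude that $\vec{p}$ satisfies \GRP if and only if the two flow values coincide. Your explicit justification that a max flow of $\widetilde{\mathcal{N}}$ attaining the value of $\mathcal{N}$ is exactly a dominated max flow of $\mathcal{N}$ (in both directions) is the same observation the paper makes, just spelled out slightly more carefully.
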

\begin{proof}
	Given fractional committee $\vec{p}$, one can check whether $\vec{p}$ satisfies \GRP by the following simple procedure.
	Let $\network$ be the network formulation of the instance.
	Modify $\network$ so the capacities on arcs from the candidate set to the sink are $\capacity(c, t) = p_c$ for all $c\in C$, and denote the resulting network $\network'$.
	Compute a max flow $f'$ on $\network'$ and max flow $f$ on $\network$. 
	Note that, because of the capacity constraints in $\network'$, the fractional committee given by $f'$ is elementwise dominated by $\vec{p}$.
	Finally, since the flow values of $f$ and $f'$ will be equal if and only if $f'$ is a max flow on $\network$, it follows from \Cref{thm:gr_characterization} that $\vec{p}$ is \GRP if and only if the flow values of $f$ and $f'$ are equal.
\end{proof}

\section{Fairness and Efficiency: Redistributive Utilitarian Rule}
\label{sec:efficiency}
Given that there may be many max flow solutions to the network formulation of a given instance, and thus many solutions satisfying our fairness notion, we seek to further refine the set of desirable outcomes. 
We begin by searching for outcomes which are both fair and (Pareto) efficient, meaning the resulting utility profile is not strictly Pareto dominated by another outcome's utility profile.

In the single-winner setting, the voting rule which maximizes Nash welfare is ex-ante efficient and satisfies fractional core.
However, as we will now show, the maximum Nash welfare rule does not guarantee either fractional core \emph{or} \GRP in the probabilistic committee voting setting.
\begin{proposition}\label{ex:nash} Maximum Nash welfare rule violates \GRP and fractional core
	\end{proposition}
	\begin{proof}
	
Consider an instance with four voters, four candidates, $k=2$, and the voters' preferences are: $A_1=\{a,d \}, A_2 = A_3= \{a,b\}$, and $A_4 = \{c\}$.
	Since $N_a\supsetneq N_b$, $a$ dominates $b$ with respect to Nash welfare and thus $b$ can be selected with positive probability in the Nash welfare-maximizing outcome if and only if $a$ is integrally selected.
	This also holds for $a$ with respect to $d$ and thus, due to the committee size and number of candidates, it is clear that $a$ is integrally selected.
	From this point, one can observe that Nash will give additional probability to $c$ instead of $d$ since each candidate only increases the utility of one voter and voter $4$'s utility will necessarily be less than that of voter $1$. 
	One can then frame Nash welfare as a single-variable optimization to show that $\vec{p} = ( 1,\frac13, \frac23, 0 )$ is the unique fractional committee maximizing Nash welfare.
	
	However, denoting $S=\{1,2,3\}$, we can then see that $\vec{p}$ violates \GRP with respect to $S$: 
	$$|S|\frac{k}{n} - \max\limits_{T\subseteq S}  \left[ |T|\frac{k}{n} - |\bigcup_{i\in T }A_i|\right] = \frac32 > \frac43 =  \sum_{j\in \cup_{i\in S} A_i} p_j  $$
	where the first equality follows since the maximum is attained with the empty set. 
	Similarly, we can show $\vec{p}$ violates fractional core by showing that a committee using total probability $3/2$ can increase the utility of every voter in $S$.
	In particular, any committee $\vec{q} = (1, \frac{1}3 + \epsilon, 0, \delta)$ with $\delta + \epsilon \leq \frac{1}6$ satisfies the size constraint and strictly benefits every voter in $S$.

\end{proof}
It is worth pointing out that the incompatibility proved by \Cref{ex:nash} does not appear to result from a clash between efficiency and fairness.
Indeed, the natural \GRP fractional committee $\vec{q} = ( 1, \frac12, \frac12,0)$  is ex-ante efficient.
Furthermore, the fractional committee selected by the Nash rule in \Cref{ex:nash} is more egalitarian than any solution satisfying \GRP.\footnote{Note that any fractional committee $\vec{q}$ with $q_c>\frac12$ cannot elementwise dominate a max flow on $\network$.}
Thus, Nash welfare embodies a sense of fairness distinct from that of fractional core and similar concepts, which measure fairness by comparing against a ``deserved'' outside option, as opposed to taking a welfarist approach. 
While these fairness notions coincide in the single-winner setting, they diverge in the committee voting setting.

Since we already know how to compute a fractional committee satisfying \GRP in polynomial time, it is natural to ask whether we can achieve fairness and efficiency by iteratively identifying and applying Pareto improvements to a fair committee, which can be done with a simple linear program.
However, even in the single-winner special case, an arbitrary Pareto improvement with respect to a \GRP fractional committee need not maintain the weaker property of GFS, let alone \GRP.\footnote{Suppose there are four candidates and three voters with preferences as follws: $A_1=\{a,b\}, A_2 = \{b,c\}, A_3 = \{d\}$. It can be verified that $\vec{q} = ( \frac13, 0, \frac 13, \frac 13 )$ satisfies \GRP. 
Now consider the fractional committee $\vec{p} = ( 0, \frac13, 0, \frac23 )$, which Pareto dominates $\vec{q}$ but does not satisfy GFS with respect to voter group $\{1,2\}.$}
Thus, an approach which treats fairness and efficiency sequentially is likely to meet with significant obstacles.
In the next subsection, we will present an algorithm that instead maintains the invariant of efficiency while constructing a fair fractional committee.

\begin{algorithm}[t]
\caption{Redistributive Utilitarian Rule (\RUT)}
\label{alg:rut}
\DontPrintSemicolon

\KwIn{Voters~$N$, candidates~$C$, approval profile~$\mathcal{A}=(\approvalBallotOfAgent{i})_{i \in N}$ and committee size $k$.}
\KwOut{A fractional committee~$\vec{p} = (\Delta_c)_{c \in C}$ of size $k$.}

$\lambda_i\gets 1$ for all $i\in N$\;
$s^*\gets \max_{c\in C} s_\lambda(c)$\;
Let $f_0$ denote a trivial flow on any flow network.\;
$j\gets 1$\; 
\While{$j\leq m$}{
	$c_j\gets \argmax_{c\in C} s_\lambda(c)$\;
	$\network^j\gets \network(\{c_1,\ldots,c_j\})$ \;
	Apply \Cref{lem:circulation} to the flow $f_{j-1}$ on $\network^j$ to obtain a max flow $f_j$ on $\network^j$.\;
	$V_j\gets \{i\in N: f_j(s, i) < \frac{k}n\}$\;
	\If{$A_i\subseteq \{c\in C: f_j(c, t)=1\} \ \forall i\in V_j$}
	{ Exit loop.}
	$\alpha\gets \min \left \{\alpha\in \mathbb{R}: \max\limits_{c\in C\setminus \{c_1,\ldots,c_j\}} \alpha \cdot \vert N_c\cap V_j \vert + s_\lambda(c) = s^* \right\}$\;
	\For{$i\in V_j$}{
		$\lambda_i\gets \lambda_i + \alpha$\;
	}
	$j\gets j+1$
}
$f^*\gets f_j$\;
Denote $\vec{p}$ as the fractional committee given by $f^*$.\;
\While{$k-\sum_{c\in C} p_c>0$}{
	$\delta\gets k-\sum_{c\in C} p_c$\;
	$c\gets \argmax\limits_{\{c\in C: p_c<1\}} s_\lambda(c)$\;
	$p_c\gets \max (1, p_c + \delta)$\;
}
\BlankLine
\Return{$\vec{p}$}
\end{algorithm}
At a high level, our algorithm -- which we call the \emph{redistributive utilitarian rule} (\RUT) -- maintains the invariant of efficiency while constructing a fair fractional committee.
To do so, it ensures that the selected committee maximizes weighted utilitarian welfare while iteratively computing max flows and redistributing flow to avoid long augmenting paths.
We will now give a more detailed description of \RUT.
Refer to \Cref{alg:rut} for a detailed description in pseudocode.

We start with unit weights $\lambda_i=1$ and a modified version of $\network$ such that all candidates are removed from the network.
We identify some candidate $c^*$ which maximizes over all $c\in C$ the ``score'' $s_\lambda(c)=\sum_{i\in N_c} \lambda_i$ and let $s^*=\sum_{i\in N_{c^*}} \lambda_i$.
In each round, we add $c^*$ to our modified network and compute a max flow $f$ which balances flows in a way that avoids saturating any edge from the source to a voter whenever possible (see \Cref{lem:circulation}). 
The weights $\lambda_i$ are then effectively frozen for any voters with $f(s, i)=\frac{k}n$, and the weights of the other voters are uniformly increased until there is some new candidate $c^*$ with $s_\lambda(c^*)=s^*$.

This loop terminates when, for each $i\in N$ with $f(s, i)< \frac{k}n$, it holds that $A_i\subseteq \{c\in C: f(c, t)=1\}$.
Intuitively, this means there is no voter with unsaturated capacity from the source who approves of some candidate with unsaturated capacity to the sink.
We refer to the network flow resulting from this process as $f^*$.
Lastly, letting $\vec{q}$ denote the fractional committee given by $f^*$, we greedily allocate the remaining probability $k-\sum_{c\in C} q_c$ to candidates in descending order of $s_\lambda(c)$, and return the resulting fractional committee, denoted $\vec{p}$.
 
A subtle but crucial detail of \RUT is the way it computes a max flow on the subnetwork in each iteration. 
If the max flow computed saturates an arc to some voter unnecessarily, then the result of the algorithm may admit an augmenting path in the main network and thus will fail \GRP.
Thus, we carefully redistribute flows in each iteration to compute a max flow which only saturates the arc to a voter when strictly necessary.
This can be thought of as an intermediate fairness check executed during each iteration of \RUT.
This redistributive process gives our voting rule its name and will be detailed in the proof of the following technical lemma, which is key to our proof that \RUT satisfies \GRP.
\begin{lemma} \label{lem:circulation}
	Given a network formulation $\network$ and a feasible flow $f$ on $\network(T)$ for some $T\subseteq C$, there exists a polytime computable max flow $f'$ on $\network(T)$ such that each of the following conditions holds:
	\begin{enumerate}
		\item \label{cond:c1} $\forall i\in N$ with $A_i\subseteq T$, $f(s, i)=\frac{k}n \implies f'(s, i)=\frac{k}n$
		\item \label{cond:c2} If the residual network resulting from flow $f'$ on the main network $\network$ admits an augmenting path, then the shortest such augmenting path is of length three.
		\item \label{cond:c3} $f'(c, t) \geq f(c, t)$ \ $\forall c\in T$
	\end{enumerate}
\end{lemma}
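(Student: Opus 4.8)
The plan is to obtain $f'$ in two stages: first reach \emph{some} max flow on $\network(T)$ by augmenting $f$, which will deliver Condition~\ref{cond:c1} and Condition~\ref{cond:c3} essentially for free, and then repair Condition~\ref{cond:c2} with a single flow-conserving circulation. The point is that the first two conditions are monotonicity statements that are preserved by augmentation, while the third is a statement about residual shortest-path length that can be fixed by one local rerouting.

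First I would observe that any max flow obtained from $f$ by repeatedly pushing flow along \emph{simple} augmenting paths in $\network(T)$ already satisfies Condition~\ref{cond:c1} and Condition~\ref{cond:c3}. Since every augmenting path starts at $s$ with the forward arc $s\to i_1$ and is simple, it never traverses a residual arc into $s$; hence no source arc $s\to i$ ever has its flow decreased, so every voter saturated in $f$ (in particular every protected voter with $A_i\subseteq T$) stays saturated, giving Condition~\ref{cond:c1}. Likewise, along a simple augmenting path $s\to i_1\to c_1\to\cdots\to c_\ell\to t$, each intermediate candidate gains $\epsilon$ from the preceding voter and loses $\epsilon$ along the following backward arc (net change $0$ to its inflow), while only the terminal candidate $c_\ell$ gains $\epsilon$; thus no value $f(c\to t)$ ever decreases, giving Condition~\ref{cond:c3}. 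Call the resulting max flow $g$.

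Next I would analyze Condition~\ref{cond:c2}. Because the only arcs are $s\to i$, $i\to c$, and $c\to t$, any augmenting path in $\network$ has odd length $\geq 3$, and a length-three path $s\to i\to c\to t$ exists exactly when some unsaturated voter approves a non-full candidate. Since $g$ is max on $\network(T)$, every candidate of $T$ that is residual-reachable from $s$ is already full, so such a non-full candidate must lie outside $T$. Consequently, if $g$ is not already in the ``bad case,'' Conditions~\ref{cond:c1}--\ref{cond:c3} hold and we are done; the only remaining possibility is that an augmenting path exists yet no unsaturated voter approves any candidate outside $T$. In that case the terminal arc of any augmenting path is $c^*\to t$ with $c^*\notin T$, reached from a \emph{saturated} voter $i_\ell$ with $c^*\in A_{i_\ell}\setminus T$ along a residual prefix $s\to i_1\to c_1\to\cdots\to c_{\ell-1}\to i_\ell$ lying entirely in $\network(T)$.

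The key step, and the one I expect to be the main obstacle, is repairing this bad case without disturbing the other two conditions. Here I would push $\epsilon>0$ units around the residual \emph{cycle} formed by the prefix together with the reverse source arc $i_\ell\to s$ (legal since $g(s\to i_\ell)=\frac{k}{n}>0$), taking $\epsilon$ small enough to respect all residual capacities along the cycle. This is a circulation, so it preserves the value of $g$ (which remains a max flow on $\network(T)$), and, exactly as in the augmenting analysis, each candidate on the cycle has net inflow change $0$, so Condition~\ref{cond:c3} is untouched. Its only effect on source arcs is to raise $g(s\to i_1)$ and to lower $g(s\to i_\ell)$ strictly below $\frac{k}{n}$; since $i_\ell$ approves a candidate outside $T$ it is unprotected, and $i_1$ is only made \emph{more} saturated, so Condition~\ref{cond:c1} survives. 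After the push, $i_\ell$ is unsaturated and still approves $c^*\notin T$, so $s\to i_\ell\to c^*\to t$ is a valid length-three augmenting path, forcing the shortest augmenting path to have length three and securing Condition~\ref{cond:c2}. The subtle part is confirming that this one circulation simultaneously maintains maximality and both of the other conditions; once that is checked, the whole procedure reduces to one max-flow computation and two breadth-first searches and hence runs in polynomial time.
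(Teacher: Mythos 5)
Your proof is correct. It shares the paper's first stage essentially verbatim: augment $f$ to a max flow on $\network(T)$, observing that simple augmenting paths never decrease source-arc or candidate-to-sink flows, which yields Conditions (\ref{cond:c1}) and (\ref{cond:c3}). Where you genuinely diverge is in the repair of Condition (\ref{cond:c2}). The paper runs an iterative elimination loop: it repeatedly finds a residual cycle through the source ending at a saturated voter who approves some candidate outside $T$, and pushes \emph{half} the bottleneck along it; the factor $\tfrac12$ ensures both affected voters end strictly unsaturated, so the set of such ``bad'' voters shrinks by exactly one per iteration and the loop halts within $n$ rounds, after which Condition (\ref{cond:c2}) is proved by contradiction (a shortest augmenting path of length greater than three would expose a cycle that should have been eliminated). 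You instead exploit that Condition (\ref{cond:c2}) is an existential requirement: every augmenting path in this layered network has odd length at least three, so it suffices to \emph{create} one length-three path, and a single circulation along the prefix of any long augmenting path does exactly that --- afterwards $i_\ell$ is unsaturated while $c^*\notin T$ still has zero flow into $t$, so $(s,i_\ell,c^*,t)$ augments. Your verifications that this one push preserves maximality (it is value-neutral on $\network(T)$), Condition (\ref{cond:c1}) (only the unprotected voter $i_\ell$, who approves a candidate outside $T$, loses source flow), and Condition (\ref{cond:c3}) (no candidate-sink arc lies on the cycle) are all sound. Your route buys economy: one push, no potential argument, no half-bottleneck trick, and no proof by contradiction. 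The paper's route buys a stronger structural invariant --- at termination, no residual source-cycle reaches \emph{any} saturated voter approving an outside candidate --- which is more than the lemma statement needs, but it is the same cycle structure your single push targets.
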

\begin{proof}
	Starting with $f$, apply Edmonds-Karp to compute a max flow $f'$ on $\network(T)$.
	Since Edmonds-Karp weakly increases flow on arcs exiting the source, Condition (\ref{cond:c1}) holds so far.
	We now give a procedure which will give us the desired condition without changing the flow value and thus maintaining the max flow property.
	The procedure can be thought of as iteratively rebalancing payments in order to avoid fully saturating connections from the source to voters who approve of some candidate not contained in the subnetwork.
	
	Let $N'=\{i\in N: f'(s, i)=\frac{k}n, A_i\setminus T\neq \emptyset\}$, the set of voters whose connection to the source is saturated under $f'$, but still approve of some candidate not in $T$.
	Search in the residual network resulting from $f'$ on the subnetwork $\network(T)$ for a cycle $Q_i=(s,i_1,c_1,\ldots,i_r,c_r,i,s)$ with \emph{bottleneck} $b>0$ (i.e., the minimum residual capacity on the cycle is strictly positive) where $i\in N'$.
	Update $f'$ by pushing $\frac12 b$ additional flow along $Q_i$.
	Refer to Figure~\ref{fig:cyc} for an illustration of such a flow update along a cycle. 
	When such a cycle no longer exists, return $f'$.

\begin{figure}
	\begin{center}
		\includegraphics[scale=0.2]{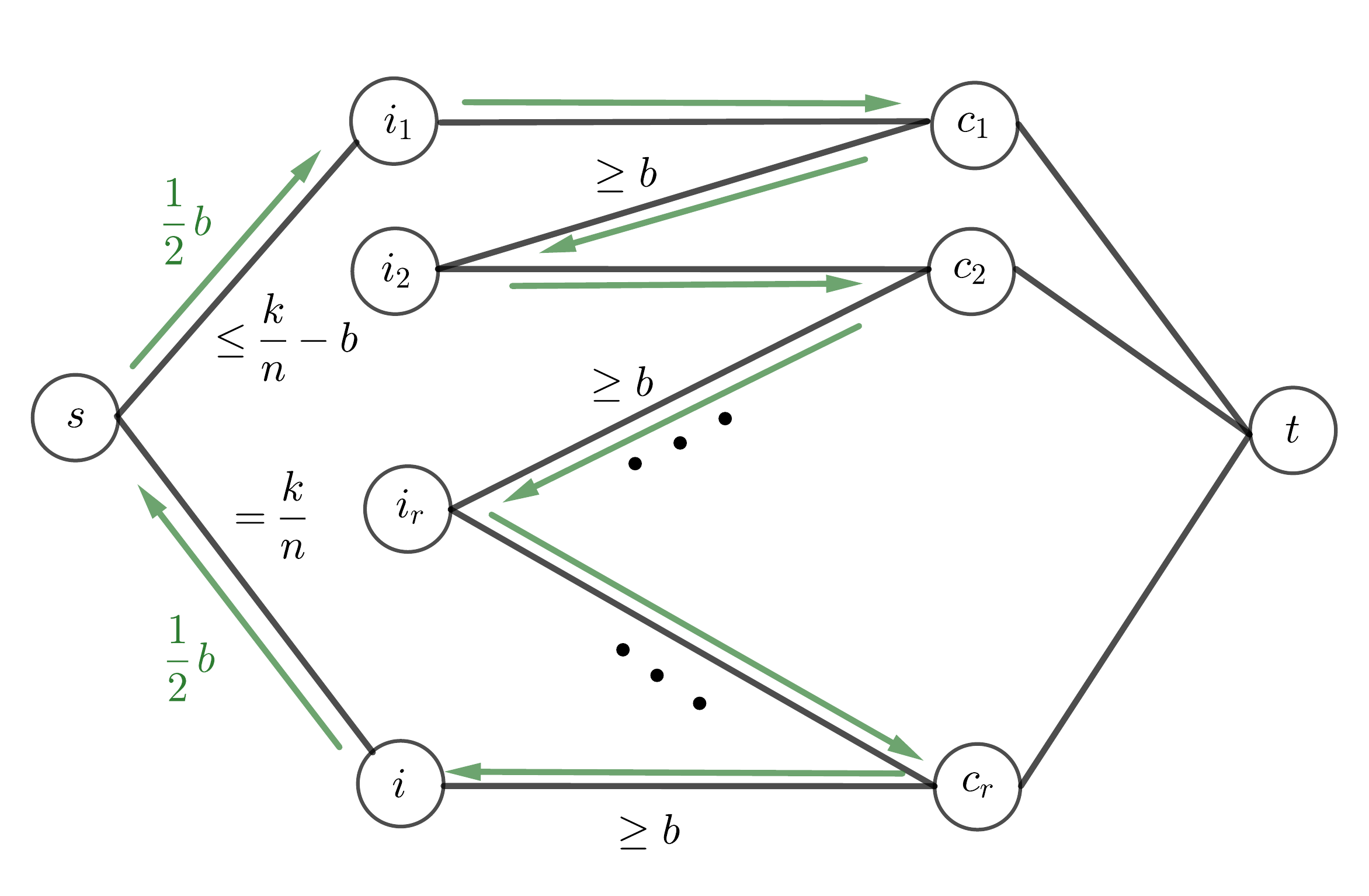}
	\end{center}
	\caption{Illustration of a flow update along a cycle $Q_i=(s,i_1,c_1,\ldots,i_r,c_r,i,s)$ with bottleneck $b>0$. Labels in black denote flow bounds sufficient to trigger the flow update. Green arcs show the path of flow redistribution.}\label{fig:cyc}
\end{figure}
	Note that the flow value of $f'$ is invariant in this cycle  described in the previous paragraph since flow travels in a cycle starting and ending at the source.
	Thus, $f'$ remains a max flow on $\network(T)$. 
	To see that this procedure is polynomial time computable, observe that identification of a cycle in the residual network with positive bottleneck and ending in a voter in $N'$ can be computed with BFS, and that there are at most $n$ iterations.
	The latter observation is true since each cycle decreases the size of $N'$ by exactly one. 
	Specifically, after pushing flow along cycle $Q_i$, $f'(s, i) = \frac{k}n - \frac12 b < \frac{k}n$ (so $i$ no longer belongs to $N'$) and $f'(s, i_1) \leq (\frac{k}n - b) + \frac12 b < \frac{k}n$ (so $i_1$, the only other voter whose flow from the source changes, is still not in $N'$). 
	We can see that Condition (\ref{cond:c1}) is retained throughout this procedure by noting that $A_i\not \subseteq T$ since $i\in N'$ and $f(s, i_1) \leq f'(s, i_1) < \frac{k}n$, and thus the only two voters for whom $f'(s, i)$ has changed still uphold the condition. 

	Before proving Condition (\ref{cond:c2}), we first point out that any feasible flow on $\network(T)$ is also feasible on $\network$ by virtue of $\network(T)$ being a subnetwork.
	This is important as we assume that $f'$ can be treated as a network flow on $\network$ in the statement.
	We prove Condition (\ref{cond:c2}) by contradiction. 
	Suppose that the shortest augmenting path on the residual network $\mathcal{R}_{f'}$ resulting from network flow $f'$ on $\network$ has a length of four or more.
	Then, the shortest augmenting path is of the form $P=(s,i_1,c_1,\ldots,i_r,c_r,t)$ for some $r\geq 2.$ 
	Note that $f'(s, i_r) = \frac{k}n$ since otherwise this edge would have residual capacity and $(s,i_r,c_r,t)$ would constitute an augmenting path of length three.
	Also note that $c_r\not\in T$ since otherwise $P$ would constitute an augmenting path in the residual network from network flow $f'$ on $\network(T)$, contradicting that $f'$ is a max flow on $\network(T)$.
	Together, these two observations show that $i_r\in N'$ at the termination of the loop.
	
	It is apparent that $\{c_1,\ldots,c_{r-1}\}\subseteq T$ since there is residual capacity on the ``backward'' arc $(c_j,i_{j+1})$ in $\mathcal{R}_{f'}$ for all $j\in[r-1]$.
	This means that $Q=(s,i_1,c_1,\ldots,i_r, s)$ is a valid cycle in the residual network resulting from flow $f'$ on $\network(T)$. 
	Furthermore, since $(i_r,s)$ has strictly positive residual capacity as does every other arc in $Q$ (by virtue of belonging to an augmenting path $P$), it holds that the bottleneck of $Q$ is strictly positive.
	This leads us to a contradiction since the termination condition of our procedure was not met.
	Condition (\ref{cond:c3}) holds since Edmonds-Karp weakly increases flows on each arc from a candidate to the sink and the procedure we execute afterwards never alters the flows from a candidate to the sink.
\end{proof}
We can now state and prove the main theorem of the section.
\begin{theorem}
	Redistributive Utilitarian Rule computes a fractional committee satisfying efficiency and group resource proportionality in polynomial time.
\end{theorem}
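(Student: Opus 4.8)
The plan is to establish the three properties separately, using the terminal flow $f^* = f_j$ as the common linchpin. For group resource proportionality I will invoke \Cref{thm:gr_characterization}, exhibiting $f^*$ as a max flow on the full network $\network$ that is dominated coordinatewise by $\vec{p}$. For efficiency I will show that $\vec{p}$ maximizes the weighted utilitarian welfare $\sum_{i\in N} \lambda_i u_i(\vec{p}) = \sum_{c\in C} p_c\, s_\lambda(c)$ for the terminal weights $\lambda$; since $\lambda_i \geq 1 > 0$ for every voter, any Pareto improvement over $\vec{p}$ would strictly raise this welfare, a contradiction, so welfare maximality yields Pareto efficiency.

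For \GRP, the domination $p_c \geq f^*(c\rightarrow t)$ is immediate, as $\vec{p}$ is initialized as the fractional committee given by $f^*$ and the final loop only raises coordinates. It remains to argue $f^*$ is a max flow on $\network$. By construction $f^*$ is a max flow on the subnetwork $\network^j$, and \Cref{lem:circulation}(\ref{cond:c2}) guarantees any augmenting path for $f^*$ on $\network$ has length at least three. A length-three augmenting path must read $(s,i,c,t)$ with $f^*(s\rightarrow i) < k/n$, i.e. $i\in V_j$, and $f^*(c\rightarrow t) < 1$ for some $c\in A_i$; but the loop terminates exactly when every $i\in V_j$ satisfies $A_i\subseteq\{c: f^*(c\rightarrow t)=1\}$, which forbids this. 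Hence $f^*$ admits no augmenting path on $\network$ and is a max flow, so \Cref{thm:gr_characterization} delivers \GRP.

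For efficiency the goal reduces to showing $\vec{p}$ is a greedy-by-score allocation under the terminal weights. The crux is a score-freezing invariant for unsaturated selected candidates. By \Cref{lem:circulation}(\ref{cond:c3}) the sink flow $f_m(c\rightarrow t)$ is nondecreasing in $m$, so any candidate $c$ with $f^*(c\rightarrow t)<1$ was unsaturated at every iteration after it was added; at each such iteration $m$, since $f_m$ is a max flow on $\network^m$ and $c$ is unsaturated, no voter in $N_c$ can lie in $V_m$ (else $(s,i,c,t)$ augments in $\network^m$), so the weight update at $m$ leaves $s_\lambda(c)$ unchanged. As $c$ entered with score exactly $s^*$, its terminal score is exactly $s^*$. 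Combined with the facts that every selected candidate enters with score $s^*$ and weights only increase (so selected scores are $\geq s^*$), and that the final $\alpha$-update fixes the maximum score among not-yet-selected candidates at $s^*$ with no subsequent change (so unselected scores are $\leq s^*$), I obtain a clean separation: candidates with terminal score $> s^*$ are saturated with $p_c=1$, unsaturated selected candidates have score exactly $s^*$, and unselected candidates have score $\leq s^*$. The final greedy loop then fills the residual budget among all candidates with $p_c<1$ in descending score order, so the eligible score-$s^*$ candidates are exhausted before any lower-scoring candidate is funded; the resulting $\vec{p}$ is therefore a greedy-by-score allocation and maximizes weighted welfare.

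Polynomial time is routine: the main loop runs at most $m$ times (one candidate per iteration), each iteration performing a polynomial max-flow computation via \Cref{lem:circulation} plus $O(mn)$ bookkeeping to update scores, form $V_j$, and compute $\alpha$, and the final greedy loop runs at most $m$ times. I expect the efficiency argument to be the main obstacle — specifically the score-freezing invariant, which must interlock the monotonicity of sink flow (\Cref{lem:circulation}(\ref{cond:c3})) with the no-augmenting-path property at every intermediate iteration to pin each unsaturated selected candidate's score at exactly $s^*$. Once this is in place, the reduction of Pareto efficiency to greedy-by-score welfare maximization follows naturally.
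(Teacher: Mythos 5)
Your proposal is correct and follows essentially the same route as the paper's proof: \GRP via \Cref{thm:gr_characterization} together with Condition (\ref{cond:c2}) of \Cref{lem:circulation} and the loop's termination condition ruling out length-three augmenting paths, efficiency by establishing the score separation (score $>s^*$ forces $p_c=1$, fractionally selected candidates sit at exactly $s^*$, unselected candidates at $\leq s^*$) so that $\vec{p}$ is greedy-by-score and hence maximizes positively weighted utilitarian welfare, and the same polynomial-time accounting. The only deviation is cosmetic: where the paper derives the key separation fact by arguing that any added candidate whose score later exceeds $s^*$ must be saturated (invoking Conditions (\ref{cond:c1}) and (\ref{cond:c3}) of \Cref{lem:circulation}), you prove the contrapositive score-freezing invariant directly from the sink-flow monotonicity of Condition (\ref{cond:c3}) plus per-iteration maximality of $f_j$, which reaches the same conclusion slightly more cleanly.
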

\begin{proof}
	We first point out that the first while loop (which we will refer to as the main loop) of \RUT (\Cref{alg:rut}) will always terminate due to the if-condition. 
	This is because, when $j=m$, $\network^j = \network$ and thus applying \Cref{lem:circulation} computes a max flow on $\network$. 
	And if any voter in $V^j$ approved of some candidate that was not integrally funded, there would be an augmenting path.
	Henceforth, we denote the total number of rounds executed in the main loop as $z$.

	\paragraph{\GRP}
	We now show that $\vec{p}$, the fractional committee computed by \RUT, satisfies \GRP.
	Since $\vec{p}$ clearly elementwise dominates the fractional committee given by $f^*$, it is sufficient by \Cref{thm:gr_characterization} to show that $f^*$ constitutes a max flow on $\network$.
	Assume, for the sake of a contradiction, that $f^*$ is not a max flow on $\network$.
	Then, there must be an augmenting path in the residual network $\mathcal{R}_{f^*}$ from network flow $f^*$ on $\network$.
	However, since $f^*$ was computed by an application of \Cref{lem:circulation}, we know that the shortest augmenting path in $\mathcal{R}_{f^*}$ is of length three.
	That is, there is an augmenting path of the form $P=(s,i,c,t)$ for some $i\in N, c\in C$.
	Since there must be residual capacity in $\mathcal{R}_{f^*}$ for each arc in $P$, it is clear that $f^*(s, i)< \capacity(s, i) = \frac{k}n$ and $f^*(c, t)<\capacity(c, t)=1$.
	However, since $(i,c)$ being an arc in $\network$ implies $c\in A_i$, this means that voter $i$ and candidate $c$ contradict the termination condition of the main loop of \RUT.

	\paragraph{Efficiency} 
	We will prove efficiency by showing that the fractional committee $\vec{p}$ is equivalent to a greedy selection of candidates in order of their score $s_\lambda(c)$.
	Let $C^+$ denote the candidates which receive flow during the main loop of \RUT, i.e. $f^*(c, t)>0$ for all $c\in C^+$.
	Let $C^-$ denote the complement set of candidates, i.e. $f^*(c, t)=0$ for all $c\in C^-$. 
	It is clear that all candidates in $C^+$ are included in $\network^z$, the state of the subnetwork when the main loop terminates.
	By construction, since weights weakly increase over the course of \RUT and a candidate $c_j$ is added to the subnetwork $\network^j$ only when it reaches the score $s_\lambda(c_j)=s^*$, we have that $s_\lambda(c)\geq s^*$ for all $c\in C^+$.

	Now suppose there is some candidate $c'$ with $s_\lambda(c')>s^*.$
	Clearly, since weights only increase during the main loop, $s_\lambda(c')$ hit $s^*$ before the main loop terminated and $c'$ was added to the subnetwork.
	Thus, $c'=c_j$ for some $j$.
	Since the score of $c_j$ continued to increase after this iteration, there must have been some voter $i'$ with $c_j\in A_{i'}$ and $f_{j'}(s, i') < \frac{k}n$ for some $j'\geq j$.
	Since $f_{j'}$ is computed via iterated application of \Cref{lem:circulation}, it is clear from Condition (\ref{cond:c1}) that $f_j(s, i') < \frac{k}n$. 
	Since we know $f_j$ is a max flow on $\network^j$, it must be that $f_j(c', t) = 1 $, since otherwise $(s,i',c',t)$ would constitute an augmenting path.
	Then, by Condition (\ref{cond:c3}) of \Cref{lem:circulation}, we have that $f^*(c', t) = 1.$
	Note that this observation implies that for any $c$ with $f^*(c, t)< 1$, it must be that $s_\lambda(c)\leq s^*$.

	We can now show that any two fractionally selected candidates must have equal scores. That is, 
	\begin{equation} \label{eq:eff1}
		s_\lambda(c_1) = s_\lambda(c_2) \qquad \forall c_1,c_2\in C : p_{c_1}, p_{c_2}\in (0,1)
	\end{equation}
	On the contrary, assume without loss of generality, that $s_\lambda(c_1)>s_\lambda(c_2).$
	If $c_2\in C^+$, then 
	$$
	s_\lambda(c_1) > s_\lambda(c_2) \geq s^* \implies f^*(c_1, t) = 1 > p_{c_1}.$$
	Since $p_c\geq f^*(c, t)$ for all $c$ by construction, this is a contradiction.
	So it must be that $c_2\in C^-$ and thus $f^*(c_2, t)=0$, which means that $c_2$ was fractionally selected during the greedy completion phase (the second while-loop in \Cref{alg:rut}).
	However, this loop selects in order of score, and since $c_1$ is not integrally selected and has a higher score, we reach a contradiction.

	Our second key observation is that the score of any integrally selected candidate is always at least that of any fractionally selected candidate, i.e.
	\begin{equation} \label{eq:eff2}
		s_\lambda(c_1)\geq s_\lambda(c_2) \qquad \forall c_1,c_2\in C : p_{c_2} < p_{c_1} = 1.
	\end{equation}
	To see this, first note that $s_\lambda(c_2)\leq s^*$ by our previous observation since $f^*(c_2, t) \leq p_{c_2} < 1$. 
	If $c_1\in C^+$, then $s_\lambda(c_1) \geq s^*$ and Equation~\ref{eq:eff2} follows.
	Otherwise, $c_1\in C^-$ and was selected entirely in the phase which proceeds greedily by score amongst candidates which have not been integrally selected.
	Since $c_2$ is not integrally selected, it follows that $c_1$ has a higher score, i.e. $s_\lambda(c_1)\geq s_\lambda(c_2)$.

	Our third and final observation confirms that any candidate which receives positive probability under \RUT has a higher score than any candidate which does not, i.e.
	\begin{equation} \label{eq:eff3}
		s_\lambda(c_1)\geq s_\lambda(c_2) \qquad \forall c_1,c_2\in C : p_{c_1} > p_{c_2} = 0.
	\end{equation}
	Note that $c_2\in C^-$. Thus, Equation~\ref{eq:eff3} follows since $s_\lambda(c_1)\geq s^* \geq s_\lambda(c_2)$ if $c_1\in C^+$, and by construction of the greedy completion phase if not.

	Altogether, from Equations~\ref{eq:eff1}-\ref{eq:eff3}, we have that $\vec{p}$ is the fractional committee which maximizes the expression 
	$$\max_{\vec{q}} \sum_{c\in C} q_c s_\lambda(c) =  \max_{\vec{q}} \sum_{c\in C} \sum_{i\in N_c}q_c  \lambda_i = \max_{\vec{q}} \sum_{i\in N} \sum_{c\in A_i} q_c \lambda_i = \max_{\vec{q}} \sum_{i\in N} \lambda_i u_i(\vec{q}) .$$
	Thus, $\vec{p}$ maximizes weighted utilitarian welfare for positive weights $\lambda_i$, and efficiency  follows.

	\paragraph{Polynomial time computation}
	The main loop of \RUT adds a candidate to the subnetwork in each iteration and thus terminates in at most $m$ iterations. 
	One can identify the next candidate to add to the subnetwork in polynomial time by simply checking the necessary uniform weight increase for each candidate, and choosing the candidate which requires the minimum weight increase.
	Computation of the relevant max flow is in polynomial time by \Cref{lem:circulation}.
	The termination condition of the loop is also clearly polynomial time checkable.
	Lastly, the greedy completion step requires only a linear pass over a sorted list of the candidates.
\end{proof}

We note that, when applied to the $k=1$ special case, \RUT is equivalent to the \emph{fair utilitarian rule} defined by \citet{BMS02a}.
However, in the single-winner context, each voter allocates their entire share in a single round, and thus the algorithm is significantly simpler and it is not necessary to use a network flow-based approach. 
\citet{BBP+21a} generalized the fair utilitarian rule to the single-winner setting with arbitrary voter endowments.
We note that \RUT could similarly be extended to a generalization of our setting with endowments.

\section{On Strategyproofness, Fairness, and Efficiency}\label{sec:gcut}

While \RUT is both efficient and fair, it is not strategyproof. 
This is inevitable in light of Theorem 2 of \citet{BBP+21a}, which states that no strategyproof and efficient rule can satisfy \emph{positive share} -- a minimal fairness requirement which guarantees every voter non-zero utility.
In the single-winner setting, when strategyproofness and fairness are viewed as strict requirements, the most attractive known voting rule is the \emph{conditional utilitarian rule} (CUT). 
For each voter $i$, the rule distributes $1/n$ to the candidates in $A_i$ which are approved by the greatest number of voters.
CUT is strategyproof and maximizes utilitarian welfare subject to GFS.
However, as we will now show, any extension of CUT to probabilistic committee voting must lose one of these properties.

We say a rule satisfies \emph{\GRP-efficiency} (or is \emph{\GRP-efficient}) if it always returns a fractional committee which is efficient among \GRP fractional committees. 
We define \emph{GFS-efficiency} analagously.

\begin{proposition} \label{prop:impossibility}
	No voting rule satisfies GRP-efficiency and strategyproofness. Likewise, no voting rule satisfies GFS-efficiency and strategyproofness.
\end{proposition}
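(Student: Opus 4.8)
The plan is to assume toward a contradiction a rule $F$ that is \GRP-efficient (resp.\ GFS-efficient) and strategyproof, and exhibit a single instance on which $F$ must be manipulable. The instance exploits the nonlinearity $\min\{k,|A_i|\}$ governing a voter's deserved share: take $k=2$ and $n\geq 3$ voters, let $a$ be a candidate approved by everyone, and give each voter $i$ two private candidates $x_i,y_i$ approved only by $i$, so that $A_i=\{a,x_i,y_i\}$. I first record a structural fact used throughout: every efficient-among-fair committee has $p_a=1$. Indeed, if $p_a<1$ then since $\sum_c p_c=k=2$ some private candidate $d\in\{x_i,y_i\}$ has $p_d>0$; shifting mass from $d$ onto $a$ weakly increases $\sum_{c\in\bigcup_{j\in S}A_j}p_c$ for every $S$ (as $a$ lies in every nonempty union), hence preserves both \GRP and GFS, while strictly helping the $\geq 2$ other approvers of $a$ and leaving voter $i$ indifferent. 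This fairness-preserving Pareto improvement shows no efficient-among-fair committee has $p_a<1$.

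Next I would fix a voter $i$ and analyze the deviation in which $i$ reports $A_i'=\{x_i,y_i\}$ (dropping the popular candidate $a$) while all others report truthfully. Writing $\vec{p}'$ for any fair-efficient committee on this deviated profile, the claim is $p_a'=1$ and $p_{x_i}'+p_{y_i}'\geq \tfrac{k}{n}=\tfrac2n$. The second bound is immediate from the singleton constraint for $\{i\}$: since $\min\{k,|A_i'|\}=\min\{2,2\}=2$, both the GFS and \GRP right-hand sides for $\{i\}$ equal $\tfrac kn$. For the first bound, suppose $p_a'<1$; by the same fairness-preserving shift (now moving any positive mass on $x_j,y_j$ with $j\neq i$ onto $a$, still approved by the remaining $n-1\geq2$ voters) efficiency forces $p_{x_j}'=p_{y_j}'=0$ for all $j\neq i$, whereupon the GFS constraint for $S=N\setminus\{i\}$ reads $p_a'\geq \tfrac1n\sum_{j\neq i}\min\{k,|A_j|\}=\tfrac{2(n-1)}{n}>1$, a contradiction. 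Hence $p_a'=1$, and voter $i$'s \emph{true} utility under the deviation is $p_a'+p_{x_i}'+p_{y_i}'\geq 1+\tfrac2n$.

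The impossibility then follows by accounting. Let $\vec p=F(\mathcal{A})$ be the committee returned on the truthful profile, so $p_a=1$. Applying strategyproofness to voter $i$ and the deviation above gives $u_i(\vec p)\geq 1+\tfrac2n$; since $u_i(\vec p)=p_a+p_{x_i}+p_{y_i}=1+p_{x_i}+p_{y_i}$, we get $p_{x_i}+p_{y_i}\geq\tfrac2n$ for every $i$. Summing over all $n$ voters yields $\sum_i(p_{x_i}+p_{y_i})\geq 2$, whereas the budget identity gives $\sum_i(p_{x_i}+p_{y_i})=k-p_a=2-1=1$, a contradiction. The same instance and bounds settle both claims simultaneously, since the relevant singleton and large-group lower bounds coincide for GFS and \GRP and since \GRP implies GFS.

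I expect the main obstacle to be establishing $p_a'=1$ for \emph{every} fair-efficient committee on the deviated profile, not merely for a convenient one: because the rule may output any efficient-among-fair committee, the bound on voter $i$'s deviation payoff must hold uniformly. What makes this robust is that (i) the Pareto improvements only ever push mass onto the universally (or near-universally) approved candidate $a$, so they never destroy \GRP or GFS and hence ``efficient among fair'' behaves like ordinary Pareto efficiency for these moves, and (ii) the large-group fairness constraint eliminates the single remaining configuration with $p_a'<1$. Care is also needed to keep $n\geq 3$ (so shifting onto $a$ strictly benefits some voter and so that $\tfrac{2(n-1)}{n}>1$) and to give each voter exactly $k$ private candidates, which is precisely what lets dropping $a$ capture the full share $\tfrac kn$ on private candidates; this is the structural feature absent in the single-winner setting, where $\min\{k,|A_i|\}$ cannot be manipulated in this way.
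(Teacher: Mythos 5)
Your proof is correct, and although it follows the same broad strategy as the paper's --- one universally approved candidate plus private candidates, with fair-efficiency forcing the universal candidate to probability one, and the manipulation consisting of dropping that candidate --- your instance is engineered differently in a way that genuinely matters. The paper takes $k=2$ and $A_i=\{c_0,c_i\}$ (one private candidate per voter), picks a voter $j$ with $p_{c_j}\le 1/n$, and after the misreport $A_j'=\{c_j\}$ invokes the singleton fairness bound; under GRP this bound is $2/n$, but under GFS it is only $\frac{1}{n}\min\{k,|A_j'|\}=1/n$, which merely matches the truthful upper bound $1+1/n$ and therefore does not yield the strict inequality needed to violate strategyproofness --- the paper's GFS half rests on the assertion that ``the same argument'' applies, which does not literally go through on its instance. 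Your choice of exactly $k$ private candidates per voter fixes precisely this: the deviation $A_i'=\{x_i,y_i\}$ has $|A_i'|=k$, so the GFS and GRP singleton bounds coincide at $2/n$, and both halves of the proposition follow from a single argument. Your endgame also differs cosmetically: rather than locating a poorest voter, you extract $p_{x_i}+p_{y_i}\ge 2/n$ for every $i$ from strategyproofness and sum over voters to contradict the budget identity $\sum_i\left(p_{x_i}+p_{y_i}\right)=k-p_a=1$. The remaining ingredients --- the fairness-preserving shift of mass onto $a$ (valid because $a$ lies in every relevant union, so every GRP/GFS left-hand side weakly increases while right-hand sides are unchanged), and the case analysis on the deviated profile where either some $p_{x_j}'>0$ with $j\neq i$ admits a Pareto improvement or the constraint for $S=N\setminus\{i\}$ forces $p_a'\ge \frac{2(n-1)}{n}>1$ --- are all sound, given your standing assumption $n\ge 3$. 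In short, your proof is not only correct but is more watertight than the paper's own treatment of the GFS half of the statement.
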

\begin{proof}
	Consider an instance with $m=n+1$, committee size $k=2$, and an approval profile $\mathcal{A}$ defined as follows: $A_i= \{c_0, c_i\}$ for all $i\in N.$
	Suppose there is a rule $F$ satisfying \GRP-efficiency.
	Let $\vec{p} = F(\mathcal{A},k)$ be the fractional committee returned by the rule. 
	Note that, since $k=2$, it must be that $p_{c_j}>0$ for some $j\in[n].$
	If $p_{c_0}<1$, then we can shift $\epsilon = \min(p_{c_j}, 1-p_{c_0})$ probability to $p_{c_0}$ from $p_{c_j}$ without losing \GRP.
	This holds since $c_0$ is approved by every voter and thus the LHS of every \GRP constraint is weakly increased by this shift.
	Call the resulting fractional committee $\vec{q}.$
	It is apparent that $u_j(\vec{q}) = u_j(\vec{p})$ and $u_i(\vec{q}) > u_i(\vec{p})$ for all $i\neq j$. Thus, $\vec{q}$ Pareto dominates $\vec{p}.$
	This contradicts \GRP-efficiency, and thus we can assume instead that $p_{c_0} = 1.$

	Since $\sum_{r=1}^n p_{c_r} = k - p_{c_0} = 1$, it follows that $p_{c_j}\leq \frac1n$ for some $j\in[n].$
	Note that $u_j(\vec{p}) = p_{c_0} + p_{c_j} \leq 1 + \frac1n.$
	Now consider an alternative approval profile where voter $j$ misreports her preferences by dropping $c_0,$ i.e. the approval profile $\mathcal{A}' = \{A_1,A_2,\ldots,A'_j,\ldots,A_n\}$ where $A'_j = \{c_j\}.$
	Let $\vec{q}= F(\mathcal{A}',k)$ be the fractional committee returned by the rule in this case.
	By a similar argument to before, it must be that $q_{c_0} = 1.$
	Furthermore, \GRP requires that $\sum_{c\in A'_j} q_c \geq \frac{k}n = \frac{2}n$. 
	Thus,
	\begin{equation*}
		u_j(F(\mathcal{A}',k)) \ = \ u_j(\vec{q}) \ = \ q_{c_0} + q_{c_j} \ \geq \ 1 + \frac{2}n \ > \ 1 + \frac1n \geq u_j(\vec{p}) \ = \ u_j(F(\mathcal{A},k)).
	\end{equation*}
	This proves that $F$ is not strategyproof. The same argument can be used to show that GFS-efficiency and strategyproofness are likewise incompatible.
\end{proof}

\begin{algorithm}[t]
\caption{Generalized CUT}
\label{alg:gcut}
\DontPrintSemicolon

\KwIn{Voters~$N$, candidates~$C$, approval profile~$\mathcal{A}=(\approvalBallotOfAgent{i})_{i \in N}$, and committee size $k$.}
\KwOut{A fractional committee~$\vec{p} = (\Delta_c)_{c \in C}$ of size $k$.}

\BlankLine
Order $C=\{c_1,c_2,\ldots,c_m\}$ so that $|N_{c_j}|\geq |N_{c_{j+1}}|$ for all $j\in[m-1]$, breaking ties arbitrarily.\; \label{alg:gcut:order_candidates}
Let $\network$ be the network formulation of instance $\instance=(C, \mathcal{A}, k)$.\;
Compute $v=val(\network).$\;
Define $\dnetwork$, a flow network with cost function $\cost$, by modifying $\network$ as follows: 
\begin{itemize}
	\item Add a ``dummy voter'' node $i^*$ with edges and capacities
	\begin{itemize}
		\item  $\capacity(s,i^*)=k-v$
		\item $\capacity(i^*,c) = \infty$ for all $c\in C$
	\end{itemize}
	\item $\cost(c_r,t) = r$ for all $r\in [m]$
	\item $\cost(u,v)=0$ for all other arcs $(u,v)$.
\end{itemize}
Compute a min-cost max flow $f$ on $\dnetwork$. \label{alg:gcut:mcmf}
\BlankLine
Let $\vec{p}$ be the fractional committee given by $f$.\; \label{alg:gcut:mf_final}
\BlankLine
\Return{$\vec{p}$}
\end{algorithm}

Due to the impossibility, we are forced to compromise when designing probabilistic committee voting rules.
In this section, we will propose a compromise by considering \emph{excludable strategyproofness} (see \Cref{sec:prelim}).
We will introduce a new rule, \emph{Generalized CUT}, which maximizes utilitarian welfare subject to \GRP (and is thus \GRP-efficient) and satisfies excludable strategyproofness.

At a high level, \GCUT computes a max flow and completes to a fractional committee of size $k$, while prioritizing candidates in order of their contribution to social welfare.
To do so, it reduces the instance to a minimum-cost maximum-flow problem on $\dnetwork$, a modified version of $\network$, which we refer to as the \emph{dummy network}.
We build the dummy network by first adding costs to arcs between candidates and the sink node $t$. 
What is important here is that the costs are \emph{distinct} and that a candidate $c$ costs less than a candidate $d$ only if $c$ is approved by at least as many voters as $d$.
Secondly, to ensure we are left with a fractional committee of size $k$, we add a new node $i^*$ to $\dnetwork$ in the voter layer.
This "dummy voter" is connected to the source and each of the candidate nodes with capacities $\capacity(s,i^*) = k-val(\network)$.
The rule then computes a minimum-cost maximum-flow solution $f$, and returns the fractional committee given by $f$.\footnote{This step can use any polynomial time method for solving minimum-cost maximum-flow problems, e.g. Cut Canceling \citep{GoTa89a,ErMc93a}. Since $val(\network)$ can be computed using any max flow algorithm, this guarantees that Generalized CUT runs in polynomial time.}
Refer to \Cref{alg:gcut} for pseudocode.

We will first make some notational remarks and formalizations. 
We assume $C=\{c_1,c_2,\ldots,c_m\}$ is ordered by decreasing approval score as is done in \GCUT.
We will use $\dnetwork_\instance$ to denote the dummy network constructed by \GCUT for the instance $\instance$.
Note that, for any instance $\instance$, $\dnetwork_\instance$ and $\network_\instance$ are both members of the class of entitlement networks (see \Cref{sec:prelim}), which we will use henceforth to state results that apply to both dummy networks and network formulations of instances.
To prove our auxiliary results with sufficient generality, we will also define a class of cost functions which is broader than those constructed by \GCUT.
We say that a cost function $\cost$ on an entitlement network $\wnetwork$ is \emph{sink-distinct} if only arcs entering the sink have non-zero costs and  all are distinct, i.e., $\cost(c,t) \neq \cost(c',t)$ for all $c,c'\in \{v: (v,t)\in \wnetwork\}$ and $\cost(u,v) = 0$ for all arcs $(u,v)\in \wnetwork$ where $v\neq t$.

Note that the cost functions constructed by \GCUT are always sink-distinct.
Since we only consider sink-distinct cost functions on entitlement networks, the only non-zero arc costs we consider are those on arcs entering the sink, which can always be interpreted as corresponding to candidates.
Thus, with slight abuse of notation, we refer to the cost of a candidate $c \in C$ for cost function $\cost$ as $\cost(c) = \cost(c, t)$.
Lastly, for an arbitrary cost function $\cost$ and candidate $c\in C$, we denote by $C(c, \cost)=\{b\in C: \cost(b) \leq \cost(c)\}$ the set of candidates which cost weakly less than $c$ under cost function $\cost.$ 
We denote the cost of a flow $f$ on a network $\wnetwork$ with cost function $\cost$ as $\cost(f)= \sum_{(u,v)\in \wnetwork} \cost(u,v) \cdot f(u,v) $. 
 Recall that the residual graph $\mathcal{R}_f$ of a flow $f$ for the min-cost flow problem has the same capacities as the residual network for the max flow problem. However, the costs on the arcs are $\cost(u, v)$ for forward arcs and $-\cost(u, v)$ for backward arcs.

\subsection{Structural Observations on \GCUT}\label{sec:Struct}

We will first recall some known facts about circulations on networks, which we will use in the lemmas to follow.
Considering the circulation $\psi=f'-f$ resulting from the subtraction of two feasible network flows with $val(f')=val(f)$, each of the following properties hold:
\begin{enumerate}[start=1,label={(\bfseries P\arabic*):}]
		\item $\psi$ is feasible on $\mathcal{R}_f$.
		\item $\psi$ is  a collection of (not necessarily disjoint) cycles $\{Q_j\}_{j\in[\ell]}$. This follows from the fact that $\psi$ is a circulation, and by applying the flow decomposition theorem for circulations (see, e.g, \cite{Erickson23}).  That is, $\psi= \sum_{j\in[\ell]} Q_j$. 
		\item The circulation resulting from the removal of any subset of cycles from $\psi $, i.e., for any $T\subseteq \{1,...,\ell\}$, the circulation $\psi'=\sum_{t\in T} Q_t$,  remains feasible on  $\mathcal{R}_{f}$. This holds because  a directed arc $(u,v)$ appears in at least one of the cycles  if and only if $\psi(u,v)>0$. Thus, for any forward arc $(u,v)$, we have $\psi'(u,v) \leq \psi(u,v) \leq \capacity(u,v) -f (u,v)$;  and for any backward arc $(v,u)$, we have $\psi'(v,u)\leq  \psi(v,u)\leq f (u,v)$. Therefore, $\psi'$ satisfies the capacity constraints on $\mathcal{R}_{f}$. 
\end{enumerate}

We now show that sink-distinct costs are sufficient to ensure that the min-cost max flow solutions of any entitlement network correspond to a unique fractional committee. To aid our later analysis, we present a more general version of this result.
\begin{lemma}\label{lem:uniq} 
	For each $\alpha\in[0,k]$, if $f$ and $f'$ are min-cost flows among all flows with value $\alpha= val(f)=val(f')$ on an entitlement network $\wnetwork$ with a sink-distinct cost function, then $f(c, t)=f'(c, t)$ for each $c\in C$.
\end{lemma}

\begin{proof}
	Let $\cost$ denote the cost function.
Consider a circulation $\psi=f'-f$, defined on the residual network $\mathcal{R}_{f}$ as follows: 
		\begin{itemize}
	\item If $f'(u,v) - f (u,v) \geq 0 $, then set  $\psi(u,v)=f'(u,v) - f (u,v)$
	\item If $f'(u,v) - f (u,v) < 0 $, then set  $ \psi(v,u)= f(u,v)-f'(u,v)$.
\end{itemize}
We note that $\psi$ is a circulation because $\alpha = val(f) = val(f')$.
By property (P2), $\psi$ can be decomposed into cycles. Let this decomposition be $\psi = \sum_{j \in [\ell]} Q_j$. 
	Since $\cost(f)=\cost(f')$, we have that  $\cost(\psi) = \sum_{j \in [\ell]} \cost(Q_j)=0 $.  
By the structure of entitlement networks, we know that each cycle in $\mathcal{R}_f$ which contains the sink node $t$, must also contain exactly two candidate nodes.
	Moreover, since cost function $\cost$ is sink-distinct, we have that $\cost(c)\neq \cost(c')$ for $c,c'\in C, c\neq c'$.
	As a result, we see that any cycle containing the sink must have non-zero cost. We now claim that each cycle in the decomposition of $\psi$ does not contain the sink node. Suppose, for a contradiction, that there exists a cycle that contains the sink node. It follows that there must exist a cycle  $Q_r$  such that $\cost(Q_r)>0$, since $\sum_{j \in [\ell]} \cost(Q_j)=0$ . By (P3), we see that $\psi'=\sum_{j\in[\ell]\setminus r} Q_j$ is feasible on $\mathcal{R}_f$. 
	Let $f''$ be the flow obtained by applying $\psi'$ on $\mathcal{R}_f$. Observe that $\cost(f'') = \cost(\psi') + \cost(f) < \cost(\psi) + \cost(f) = \cost(f')$. Since $\psi'$ is a circulation, we have $val(f'') = \alpha$. However, this contradicts the assumption that $f'$ is  min-cost flow among all flows with value $\alpha$. 
	Therefore, no cycle in the decomposition of $\psi$ contains the sink node, i.e., $\psi(c, t) = 0$ for each $c \in C$. Hence, $f'(c, t) = f(c, t)$ for each $c \in C$.
\end{proof}

\begin{remark}
	Setting $\alpha=k$ in Lemma~\ref{lem:uniq} demonstrates that the min-cost max flow solutions on any dummy network $\dnetwork$ indeed correspond to a unique fractional committee.
\end{remark}

The following lemma shows that a min-cost max flow is also a min-cost max flow on the same network restricted to the $\ell$ least costly candidates, i.e. $C(c_\ell,\cost)$, for any $\ell\in[m]$.
To be precise, for an arbitrary flow $f$ on entitlement network $\wnetwork$ and an arbitrary subset of candidates $T\subseteq C$, we define \emph{$f$ restricted to $\wnetwork(T)$} as the flow $f'$ on $\wnetwork(T)$ where $f'(i,c) = f(i,c)$ for all $i\in N\cup \{i^*\}, c\in T$ and $f'(u,v)$ is set using conservation of flows for all other arcs $(u,v).$

\begin{lemma} 
	\label{lem:top-j}
	Let $f$ be a min-cost max flow on entitlement network $\wnetwork$ with sink-distinct cost function $\cost$. 
	Then, for all $\ell\in[m]$, letting $T_\ell$ denote the subset of $\ell$ cheapest candidates under $\cost$, it holds that $f$ restricted to $\wnetwork^\ell = \wnetwork(T_\ell)$ is a min-cost max flow on $\wnetwork^\ell$.
\end{lemma}
\begin{proof}
	Fix $\ell\in[m].$ 
	Let $f_\ell$ denote $f$ restricted to $\wnetwork^\ell$.
	By construction, $f_\ell$ is a feasible flow on $\wnetwork^\ell$.
	We first show that $f_\ell$ is a max flow on $\wnetwork^\ell$.
	Suppose for a contradiction that there is an augmenting path $P=(s,i_1,d_1,i_2,d_2,\ldots,i_h,d_h,t)$ on the residual network of $f_\ell$.
	Note that (1) all arcs in $P$ are also contained in $\wnetwork$ and have equal capacities under $\wnetwork$ and $\wnetwork^\ell$ and (2) $f_\ell(u, v) = f(u, v)$ for all arcs $(u,v)$ in $P$ where $(u,v) \neq (s,i_1)$. 
	These observations follow from the construction of $f_\ell.$
	Now, observe that if $f(s, i_1) < \capacity(s, i_1)$, then $P$ is an augmenting path on $\mathcal{R}_f$ and this contradicts that $f$ is a max flow on $\wnetwork$. 
	Thus, we can assume that $f(s, i_1) = \capacity(s, i_1).$ 
	By virtue of $P$ being an augmenting path on $\mathcal{R}_{f_\ell}$, we know that $f_\ell(s, i_1) < \capacity(s, i_1) = f(s, i_1)$, and we can thus conclude by conservation of flows that $f(i_1, c)>0$ for some $c\not\in T_\ell$.
	
	Now consider the cycle $Q=(t,c_{\ell'},i_1,d_1,i_2,d_2,\ldots,i_h,d_h,t)$ in the residual network $\mathcal{R}_{f}$. 
	It is clear that $Q$ has strictly positive residual capacity.
	Furthermore, because $d_h\in T_\ell$, every candidate in $T_\ell$ is cheaper than every candidate in $C\setminus T_\ell$, and the cost function $\cost$ is sink-distinct, we know that $\cost(c) > \cost(d_h),$ and thus $Q$ is a negative cost cycle.
	However, this contradicts that $f$ is the optimal solution to the minimum-cost maximum-flow problem.

	We now show that $f_\ell$ is in fact a min-cost max flow on $\wnetwork^\ell$.
	Suppose for a contradiction that it is not. 
	Then, since $f_\ell$ is a max flow on $\wnetwork^\ell$, there must be a negative cost cycle $Q$ on the residual network of $f_\ell$, i.e., on $\mathcal{R}_{f_\ell}$.
	Since $\cost$ is sink-distinct, only arcs connected to the sink have non-zero costs, $Q$ must include the sink, $t$.
	Given this, the cycle must not include the source $s$, since otherwise it implies the existence of an augmenting path on $\mathcal{R}_{f_\ell}$.
	Thus, the flows on every arc in $Q$ are equal under $\wnetwork$ and $\wnetwork^\ell$.
	It is then immediate that $Q$ is also a negative cost cycle on $\mathcal{R}_f$, contradicting that $f$ is a min-cost max flow on $\wnetwork$.
\end{proof}

\subsection{\GRP-efficiency of \GCUT}
Using the lemmata of the previous subsection, we will now show that \GCUT maximizes utilitarian welfare subject to \GRP, and is thus \GRP-efficient.

\begin{proposition}
\label{prop:gcut}
	\GCUT maximizes utilitarian welfare subject to \GRP.
\end{proposition}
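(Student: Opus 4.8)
The plan is to establish two facts: that Generalized CUT always returns a \GRP fractional committee, and that it achieves at least the utilitarian welfare of any \GRP committee. For the first fact, I would invoke \Cref{thm:gr_characterization}. The key observation is that a minimum-cost maximum-flow $f^*$ is, in particular, a maximum flow on $\network$; hence the fractional committee $\vec{p}^0$ given by $f^*$ already satisfies \GRP. The greedy completion phase (line \ref{alg:gcut:greedy_complete}) only increases the probability mass on candidates, so $\vec{p}$ elementwise dominates $\vec{p}^0$ and therefore elementwise dominates the max flow $f^*$. By the ``only if'' direction of \Cref{thm:gr_characterization}, $\vec{p}$ satisfies \GRP. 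This part should be routine.

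The substance is in showing welfare optimality subject to \GRP. Note that utilitarian welfare equals $\sum_{i\in N} u_i(\vec{p}) = \sum_{c\in C} p_c |N_c|$, so maximizing welfare means placing probability mass on candidates with high approval scores $|N_c|$, subject to the feasibility constraint $\sum_c p_c = k$ and the fairness constraint that $\vec{p}$ be \GRP. The natural strategy is to show that among all \GRP committees, the welfare-maximizing one allocates as much mass as possible to high-score candidates, and that this is exactly what the cost structure enforces: the cost function $cost(c,t)=r$ for $c\in C_r$ assigns low costs to high-score candidates (since $U$ is ordered by \emph{decreasing} approval score), so a minimum-cost maximum flow routes as much flow as feasibly possible through the highest-score candidates. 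I would argue that for any \GRP committee $\vec{q}$, its restriction to a max flow $f_{\vec{q}}$ (guaranteed by \Cref{thm:gr_characterization}) has the same value $k$ as $f^*$, and then exploit the cost-minimality of $f^*$ together with the greedy completion to show $\sum_c p_c|N_c| \geq \sum_c q_c|N_c|$.

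The main obstacle is bridging the gap between the flow-level optimality (min-cost max-flow minimizes total cost $\sum_c f^*(c\to t)\cdot r_c$, where $r_c$ is the partition index of $c$) and the welfare objective (which is $\sum_c p_c|N_c|$, a function of approval scores rather than partition indices). These objectives are not literally the same function, so I would need a rank-based exchange argument: minimizing cost with respect to the index-ordering is equivalent to lexicographically prioritizing flow through earlier partition classes, which in turn maximizes a Schur-concave-type aggregate favoring high scores. Concretely, I would show that $f^*$ saturates the candidates in $C_1$ as much as any feasible max flow allows, then those in $C_2$ given the $C_1$ allocation, and so on. Combined with the observation that the greedy completion continues this same prioritization for the residual $k - val(f^*)$ mass, this yields that $\vec{p}$ dominates every \GRP committee in the ``prefix-sum'' sense $\sum_{c\in C_1\cup\dots\cup C_r} p_c \geq \sum_{c\in C_1\cup\dots\cup C_r} q_c$ for every $r$. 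A standard summation-by-parts (Abel) argument then converts these prefix inequalities, together with $\sum_c p_c = \sum_c q_c = k$ and the fact that approval scores are constant within and nonincreasing across the $C_r$, into the welfare inequality $\sum_c p_c|N_c| \geq \sum_c q_c|N_c|$.

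I would also take care to justify the one subtle feasibility point in this argument: that the greedy completion never needs to place mass on a candidate already at $p_c = 1$ and that it respects $p_c\le 1$, so that the completed $\vec{p}$ remains a valid fractional committee of size exactly $k$. This follows because $f^*$, being a max flow on $\network$, already saturates all ``bottleneck'' capacity, and any candidate with $f^*(c\to t) < 1$ in an earlier partition class must already have been fully routed if feasible, leaving the greedy phase to fill only genuinely available capacity in score order.
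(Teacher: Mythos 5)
Your proposal is correct and follows essentially the same route as the paper's proof: establish \GRP via \Cref{thm:gr_characterization} and elementwise dominance over the min-cost max-flow, prove prefix-sum (stochastic) dominance of the class-wise masses $\sum_{c\in C_1\cup\dots\cup C_r} p_c \geq \sum_{c\in C_1\cup\dots\cup C_r} q_c$ by an exchange (negative-cost-cycle) argument exploiting cost-minimality, and convert this to the welfare inequality by summation by parts (the paper's \Cref{lem:sequence_product}). One small imprecision: the common value of the two max flows need not equal $k$ (it is the min-cut value, possibly smaller), but your argument only requires that the two values are equal, which holds since both are max flows on $\network$.
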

\begin{proof}
	Let $\vec{p}$ be the fractional committee returned by \GCUT for some instance $\instance$.
	By definition, $\vec{p}$ is the fractional committee given by a max flow $f$ on $\dnetwork=\dnetwork_\instance$.
	It is straightforward to see that $val(f) = k$, and thus that $\sum_{i\in N} f(s, i) = val(f) - f(s, i^*) \geq k - (k-val(\network)) = val(\network).$
	Hence, the voters' flows in $f$ (excluding the dummy voter) constitute a max flow on $\network$ and it follows that $\vec{p}$ elementwise dominates a max flow on $\network$.
	It follows from \Cref{thm:gr_characterization} that $\vec{p}$ satisfies \GRP.
	
	We must now show that $\vec{p}$ maximizes utilitarian welfare amongst \GRP outcomes.
	Let $\vec{q}$ denote some fractional committee which satisfies \GRP. 
	Our goal is to show that the social welfare attained by $\vec{p}$ is at least that of $\vec{q}$, i.e. $\sum_{i\in N} u_i(\vec{p}) \geq \sum_{i\in N} u_i(\vec{q})$.
	We first note that $\vec{q}$ is the fractional committee given by some max flow $f'$ on $\dnetwork.$
	To see this, recall from the characterization of \GRP (\Cref{thm:gr_characterization}) that $\vec{q}$ must elementwise dominate some max flow $g$ on $\network.$
	Then, consider the flow $f'$ on $\dnetwork$ constructed by setting $f'(i^*, c) = q_c - g(c, t)$ for all $c\in C$ and setting $f'(i, c) = g(i, c)$ for all $i\in N, c\in C$, and setting all other flows in accordance with conservation of flows.
	It can be verified that $f'$ constitutes a max flow on $\dnetwork$ (also noting that $\sum_{c\in C} q_c = k$).

	We will first show that $\vec{p}$ stochastically dominates $\vec{q}$.

	\begin{restatable}{claim}{claimsd} \label{claim:sd}
		$\sum_{r\in [\ell]} p_r \geq \sum_{r\in [\ell]} q_r$ for all $\ell\in [m]$.
	\end{restatable}
	\begin{proof}[Proof of Claim]\
		Fix an arbitrary $\ell\in [m].$
		Consider the network $\dnetwork^\ell = \dnetwork(\bigcup_{r\in[\ell]} c_r)$.
		Let $f_\ell$ be the flow $f$ restricted to $\dnetwork^\ell$.
										Since $\dnetwork$ is an entitlement network, $f$ is a min-cost max flow on $\dnetwork$ with sink-distinct cost function, and $\bigcup_{r\in[\ell]} c_r$ are the $\ell$ cheapest candidates under that cost function, we know from \Cref{lem:top-j} that $f_\ell$ is a max flow on $\dnetwork^\ell$.
		This leads to the statement of our claim:
		\begin{align*}
			\sum_{r\in [\ell]} p_r &= \sum_{r\in [\ell]} f(c_r, t) \\
			&\geq \sum_{r\in [\ell]} f'(c_r, t) =  \sum_{r\in [\ell]} q_r.
		\end{align*}
		where the inequality follows since $f'$ restricted to $\dnetwork^\ell$ must not have a greater value than $f_\ell$, which is a max flow on $\dnetwork^\ell$.
	\end{proof}

	Let $\indicator{\cdot}$ be an indicator function and let $a_j$ denote the approval score of each candidate in $c_j$, i.e. $a_j=|N_{c_j}|, \ \forall j\in[m]$. 
	We now have what we need to show that $\vec{p}$ attains a greater social welfare than $\vec{q}$:
	\begin{align*}
		\sum_{i\in N} u_i(\vec{p}) - \sum_{i\in N} u_i(\vec{q}) &= \sum_{i\in N} \sum_{c\in C} p_c \indicator{c\in A_i} - \sum_{i\in N} \sum_{c\in C} q_c \indicator{c\in A_i} \\
		&= \sum_{c\in C} |N_c| \cdot (p_c - q_c) \ = \ \sum_{j\in [m]} a_j \cdot (p_{c_j} - q_{c_j}) \geq 0
	\end{align*}
	The final inequality follows from \Cref{lem:sequence_product} once we note that $\{a_j\}_{j\in[m]}$ is non-increasing and thatthe sequence $\{p_j - q_j\}_{j\in[m]}$ has non-negative partial sums by \Cref{claim:sd}.
\end{proof}
\begin{restatable}{lemma}{lemsequence}
	\label{lem:sequence_product}
	Let $\{a_i\}$ and $\{b_i\}$ be two sequences of $r$ real numbers such that $a_1\geq a_2 \geq \ldots \geq a_r\geq 0$ and for all $r'\in [r]$, it holds that $\sum_{i\in [r']} b_i\geq 0$.
	Then, it must be that $\sum_{i\in [r]} a_i\cdot b_i \geq 0$. 
\end{restatable}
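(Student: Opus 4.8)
The plan is to prove this via summation by parts (Abel summation), reorganizing $\sum_{i\in[r]} a_i b_i$ in terms of the partial sums of the $b_i$, which are exactly the quantities the hypothesis controls. First I would introduce the partial sum notation $B_{r'} \coloneqq \sum_{i\in[r']} b_i$ for each $r'\in[r]$, together with the convention $B_0 = 0$. The hypothesis then says precisely that $B_{r'}\geq 0$ for every $r'\in[r]$, and we recover each term via $b_i = B_i - B_{i-1}$.

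Next I would substitute this into the target sum and shift indices to collect coefficients of each $B_i$. Concretely, I expect
\begin{align*}
	\sum_{i=1}^r a_i b_i = \sum_{i=1}^r a_i (B_i - B_{i-1}) = \sum_{i=1}^r a_i B_i - \sum_{i=1}^{r-1} a_{i+1} B_i = a_r B_r + \sum_{i=1}^{r-1} (a_i - a_{i+1}) B_i,
\end{align*}
where the middle step reindexes the second sum and uses $B_0 = 0$ to drop the $i=0$ term.

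Finally I would verify that every term in this last expression is non-negative. The term $a_r B_r$ is a product of two non-negative quantities since $a_r\geq 0$ by assumption and $B_r\geq 0$ by hypothesis. For the remaining sum, each factor $a_i - a_{i+1}$ is non-negative because $\{a_i\}$ is non-increasing, and each $B_i$ is non-negative by hypothesis, so every summand $(a_i - a_{i+1})B_i$ is non-negative. Hence $\sum_{i\in[r]} a_i b_i \geq 0$, as claimed.

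There is no real obstacle here beyond bookkeeping: the only point that requires care is the index shift in the summation-by-parts step and the correct use of the boundary convention $B_0 = 0$, which ensures no stray term is introduced. This is precisely the discrete analog of integration by parts, and the non-increasing, non-negative structure of $\{a_i\}$ together with the non-negative partial sums of $\{b_i\}$ makes the sign of each resulting term immediate.
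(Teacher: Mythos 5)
Your proof is correct and takes essentially the same approach as the paper: both substitute $b_i = B_i - B_{i-1}$ (summation by parts) and then exploit the monotonicity of $\{a_i\}$ together with the non-negativity of the partial sums. The only difference is presentational — you collect terms into the explicit non-negative decomposition $a_r B_r + \sum_{i=1}^{r-1}(a_i - a_{i+1})B_i$, whereas the paper reaches the same conclusion via an inequality chain that lowers each coefficient $a_j$ to $a_{j+1}$ and telescopes.
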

\begin{proof}
	Note that for each $j\in [r]$, we can write $b_j = \sum_{i\in [j]} b_i -  \sum_{i\in [j-1]} b_i$. Substituting, we can then transform our expression of interest as follows:
	\begin{align*}
		&\sum_{i\in [r]} a_i \cdot b_i\\
		&= a_1 \sum_{i=1}^{1} b_i + (a_2\sum_{i=1}^2 b_i - a_2 \sum_{i=1}^1 b_i) + \ldots + (a_{r-1} \sum_{i=1}^{r-1} b_i - a_{r-1} \sum_{i=1}^{r-2}b_i) + (a_r \sum_{i=1}^{r} b_i - a_r \sum_{i=1}^{r-1}b_i)\\
		&\geq a_2 \sum_{i=1}^{1} b_i + (a_3\sum_{i=1}^2 b_i - a_2 \sum_{i=1}^1 b_i) + \ldots + (a_{r} \sum_{i=1}^{r-1} b_i - a_{r-1} \sum_{i=1}^{r-2}b_i) + (a_r \sum_{i=1}^{r} b_i - a_r \sum_{i=1}^{r-1}b_i)\\
		&\geq a_r \sum_{i=1}^{r} b_i \geq 0.
	\end{align*}
	The first inequality uses the fact that each of the partial sums of $\{b_i\}$ are non-negative and the fact that $\{a_i\}$ are non-increasing to replace $a_j\cdot \sum_{i=1}^j b_i$ with the weakly smaller quantity $a_{j+1}\cdot \sum_{i=1}^j b_i$ for all $j\in[r-1]$. The second inequality is due to cancellation. The third inequality again follows from the non-negativity of $\{a_i\}$ and of $\sum_{i\in [r]} b_i$. 
\end{proof}

We remark that, using the same argument used to prove \Cref{prop:gcut}, it can be shown that a simple modification of \GCUT maximizes weighted utilitarian welfare subject to \GRP for arbitrary weights.
Furthermore, it is our view that \GCUT nicely demonstrates how network flow tools can be used by voting rules to optimize an objective subject to GRP more broadly.

\subsection{Excludable Strategyproofness of \GCUT}
In this subsection, we will prove the following theorem, which states that \GCUT is excludable strategyproof.

\begin{theorem} \label{thm:excl_sp}
	\GCUT satisfies excludable strategyproofness.
\end{theorem}

To show that \GCUT is excludable strategyproof, we will show that the rule (1) does not allow manipulations in which a voter reports a superset of their true approval set, and (2) does not allow subset manipulations when a voter can only benefit from candidates they say they approve.
To show that superset manipulations are not tenable, the critical step, captured by our next lemma, is to show a sort of monotonicity condition: between two similar enough networks, the flow into a candidate whose cost increases should not increase.

\begin{lemma} \label{lem:monotonicity}
	Suppose $\wnetwork$ and $\wnetwork'$ are two entitlement networks that differ only in that $\wnetwork'$ may include additional arcs of the form $(u,v)$ for any $u\in N,v\in C$.
	Let $f$ and $f'$ be min-cost max flow solutions to $\wnetwork$ and $\wnetwork'$ with sink-distinct cost functions $\cost$ and $\cost'$, respectively.
	Furthermore, let $c\in C$ and $\wnetwork_0$ and $\wnetwork'_0$ be the networks $\wnetwork$ and $\wnetwork'$ restricted to candidates in $C(c, \cost')$. 
	If it holds that $C(c, \cost)\subseteq C(c, \cost')$,
	then at least one of the following must hold:
	\begin{enumerate}[label=(\arabic*)]
		\item $f(c, t)\geq f'(c, t)$ \label{enum:then1}
		\item $val(\wnetwork_0') > val(\wnetwork_0)$. \label{enum:then2}
	\end{enumerate}
\end{lemma}
\begin{proof}
	Suppose $f'(c, t) > f(c, t)$, i.e., statement \ref{enum:then1} does not hold. 
	We will show that statement \ref{enum:then2} must then hold, i.e., $val(\wnetwork_0') > val(\wnetwork_0)$.
	Let $f_0$ and $f'_0$ denote min-cost max flow solutions to the networks $\wnetwork_0$ and $\wnetwork'_0$ with respective cost functions $\cost$ and $\cost'$.
	Note that, by definition, $\wnetwork'_0 = \wnetwork'(C(c,\cost')).$
	Thus, \Cref{lem:top-j} tells us that $f'$ restricted to $\wnetwork'_0$ is a min-cost max flow solution to $\wnetwork'_0.$
	By \Cref{lem:uniq}, we can thus conclude that $f'_0$ and $f'$ have identical flows entering the sink, i.e. $f'_0(d , t) = f'(d, t)$ for all $d\in C(c,\cost')$. 
	
	Now, using the condition from the lemma statement, we know that every candidate in $C(c,\cost)$ is included in $C(c,\cost')$ and thus is also included in $\wnetwork_0$.
	Again, using \Cref{lem:top-j}, we see that if we restrict $f_0$ to arcs on $\wnetwork_0(C(c,\cost))$, we obtain a min-cost max flow solution on $\wnetwork_0(C(c,\cost)).$
	By a parallel argument, the same is true of flow $f$, when restricted to arcs on $\wnetwork_0(C(c,\cost))$.
	We can thus conclude that $f_0(d, t) = f(d, t)$ for all $d\in C(c, \cost)$.
	Taken together, this means that $f'_0(c, t) = f'(c, t) > f(c, t) = f_0(c, t).$

	Observe that $f'_0$ is a max flow on $\wnetwork'_0$, and thus by \Cref{lem:top-j}, $f'_0$ restricted to $\wnetwork'_0(C(c, \cost')\setminus \{c\})$ gives a max flow on that subnetwork.
	Recalling that $\wnetwork'$ is simply $\wnetwork$ with extra arcs, we see that $f_0$ is a feasible flow on $\wnetwork'_0$, and thus, that $f_0$ restricted to $\wnetwork'_0(C(c, \cost')\setminus \{c\})$ is also feasible on that subnetwork.
	This means that $C(c, \cost')\setminus \{c\}$ must induce a weakly greater flow under $f'_0$ than under $f_0$, i.e., 
		\begin{align*}
			\sum_{\crampedclap{d \in C(c, \cost')\setminus \{c\}}} f'_0(d, t) - f_0(d, t) \geq 0 &\implies \\	
			val(\wnetwork'_0) - f'_0(c , t) \geq val(\wnetwork_0) - f_0(c , t) &\implies \\
			val(\wnetwork'_0) > val(\wnetwork_0).
		\end{align*}
\end{proof}

We can now show that \GCUT does not allow superset manipulations.
We denote the voting rule function counterpart of \GCUT as $GCUT$.
\begin{proposition}	\label{prop:sssp}
	Suppose $\mathcal{A} = (A_1', \ldots, A_i, A_{i+1}', \ldots, A_n')$ and $\mathcal{A}' = (A_1', \ldots, A_i', A_{i+1}', \ldots, A_n')$ where $A_i\subseteq A_i'$.
	Then, for each $k\leq m$, it holds that $u_i(GCUT(\mathcal{A},k))\geq u_i(GCUT(\mathcal{A}',k))$.
\end{proposition}
\begin{proof}
	Fix $k\leq m$ and let $\vec{p}=GCUT(\mathcal{A},k)$ and $\vec{q}=GCUT(\mathcal{A}',k)$ denote the fractional committees computed by \GCUT under $\mathcal{A}$ and $\mathcal{A}'$, respectively.
	Let $\network=\network_\instance$ and $\network'=\network_{\instance'}$ denote the respective network formulations of the instances $\instance=(C, \mathcal{A}, k)$ and $\mathcal{I'}=(C, \mathcal{A'}, k)$.
	Note that, since $\mathcal{A}'$ is the approval profile resulting from $\mathcal{A}$ with some additional approvals, it is clear that any feasible flow on $\network$ is also feasible on $\network'$ and thus that $val(\network)\leq val(\network').$
	We will proceed by cases that distinguish whether this inequality is strict.

	\smallskip
	\noindent\textbf{\textit{Case I: $val(\network) < val(\network')$.} }

	First, since \GCUT satisfies \GRP, we know from \Cref{thm:gr_characterization} that there exist max flows $g$ on $\network$ and $g'$ on $\network'$ such that $p_c\geq g(c, t)$ and $q_c\geq g'(c, t)$ for all $c\in C.$
	Since $g$ is feasible on $\network'$, it must be that it induces an augmenting path on the residual network of $\network'$.
	Let $P=(s,i_1,d_1,\ldots,i_h,d_h,t)$ denote some such augmenting path.
	Since $g$ is a max flow on $\network$, there cannot be an augmenting path on $\mathcal{R}_g$.
	Thus, $P$ must contain one of the new arcs in $\network'$, and thus $i_r = i$ for some $r\in [h]$ (and $d_r\in A_i'\setminus A_i$).
	Observe that $\tilde P=(s,i_1,d_1,\ldots,i,c,t)$ is an augmenting path in $\mathcal{R}_g$ for any $c\in A_i$ with $g(c, t)<\capacity(c, t) = 1$.
	This holds because $\tilde P$ only uses arcs present in $\network$.
	Therefore, $g(c, t)=1$ for all $c\in A_i$.
	Altogether, this leads us to our statement: 
	$$u_i(GCUT(\mathcal{A},k)) = \sum_{c\in A_i} p_c \geq \sum_{c\in A_i} g(c, t) = |A_i| \geq u_i(GCUT(\mathcal{A}',k)).$$

	\smallskip
	\noindent\textbf{\textit{Case II: $val(\network) = val(\network')$.} }

	Let $\dnetwork=\dnetwork_\instance$ and $\dnetwork'=\dnetwork_{\instance'}$ denote the respective dummy networks constructed by \GCUT under instances $\mathcal I$ and $\mathcal I'$, respectively.
	Let $\cost$ and $\cost'$ denote the arc cost functions associated with $\dnetwork$ and $\dnetwork'$, respectively.
	Let $\capacity$ and $\capacity'$ denote the arc capacities of $\dnetwork$ and $\dnetwork'$, respectively.
	Let $f$ and $f'$ be min-cost max flow solutions to $\dnetwork$ and $\dnetwork'$, respectively.
	Observe by \Cref{lem:uniq} that $\vec{p}$ and $\vec{q}$ are the fractional committees given by $f$ and $f'$, respectively.
	Note that, because $val(\network) = val(\network')$, it holds that $\capacity(s, i^*) = \capacity'(s, i^*)$.
	It can thus be observed that $\dnetwork'$ is the same as $\dnetwork$ with the exception of additional arcs between the voter and candidate node layers.
	
	Now consider an arbitrary candidate $a\in A_i$ and let $\dnetwork_0$ and $\dnetwork'_0$ be the networks $\dnetwork$ and $\dnetwork'$ restricted to candidates in $C(a, \cost')$, i.e., $\dnetwork_0=\dnetwork(C(a,\cost'))$ and $\dnetwork'_0=\dnetwork'(C(a,\cost'))$. 
	Note that for every candidate $d$ for which $\cost(d)<\cost(a)$, it holds that $\cost'(d) < \cost'(a)$.
	This is true because the number of voters approving $d$ under $\mathcal A'$ is weakly greater than under $\mathcal A$ whereas the number of voters approving $a$ is equal in both approval profiles. Thus, if $|N_d|\geq |N_a|$ under $\mathcal A$, then this also holds under $\mathcal A'.$
	In summary, we can conclude that $C(c,\cost)\subseteq C(c, \cost')$, i.e., the conditions of \Cref{lem:monotonicity} hold for entitlement networks $\dnetwork$ and $\dnetwork'$ and candidate $a$.

	Thus, from \Cref{lem:monotonicity}, we can conclude that either $f(a, t)\geq f'(a, t)$ or $val(\dnetwork_0') > val(\dnetwork_0)$.
	Consider the latter case.
	As we have already argued in Case I, the candidates approved by voter $i$ in $\dnetwork_0$ must be fully saturated in this case.
	In particular, it must be that $f(a, t) = 1 \geq f'(a, t)$.
	Thus, we can conclude that $f(a, t)\geq f'(a, t)$ in either case.
	Since we have proved this with sufficient generality, this statement holds for all $a\in A_i.$
	This leads us to the statement of the proposition:
	$$u_i(GCUT(\mathcal{A})) = \sum_{c\in A_i} p_c = \sum_{c\in A_i} f(c, t) \geq \sum_{c\in A_i} f'(c, t) = \sum_{c\in A_i} q_c = u_i(GCUT(\mathcal{A}')).$$
\end{proof}
It remains to show that \GCUT does not allow subset manipulations when a voter cannot benefit from candidates they do not approve. 
We first prove a technical lemma that applies to entitlement networks and, consequently, to both $\network$ and $\dnetwork$.

	\begin{lemma}\label{lem:networkMon}
		For a fixed voter $i\in N$, let  $\wnetwork$ and $\wnetwork'$ be two entitlement networks that differ only in that $\wnetwork$ optionally contains extra arcs  of the form $(i,c)$ for some $c\in C$.
		For a fixed  sink-distinct cost function $\cost$ defined on both networks, let $g$ and $g'$ represent  min-cost max flows on $\wnetwork$ and $\wnetwork'$, respectively. Then,
		$$\sum_{c\in C : (i,c)\in \wnetwork'} g'(c, t) \leq \sum_{c\in C : (i,c)\in \wnetwork} g(c, t). $$
	\end{lemma}
	\begin{proof}
		Recall that each entitlement network corresponds to a unique instance $\instance = (C, \mathcal{A}, k)$. 
		Thus, the set $\{c \in C : (i, c) \in \wnetwork'\}$ represents the candidates approved by agent $i$ in the instance corresponding to $\wnetwork'$, which we denote as $A_i'$. 
		Similarly, we define $A_i$ as $\{c \in C : (i, c) \in \wnetwork\}$. 
		By the assumption of the lemma, we have $A_i' \subseteq A_i$.  
		We denote $A_i^- := A_i \setminus A_i'$. 
		
		Consider the flow $g^{0}=g'$ on the network $\wnetwork$. Since $\wnetwork'$ is a subnetwork of $ \wnetwork$, we have that $g^{0}$ is feasible on $ \wnetwork$.  Let $\tildG$ be the flow resulting from the removal of all cost-cancelling cycles on  the residual network $\mathcal{R}_{g^{0}}$.   
		As $\mathcal{R}_{\tildG}$ does not contain cost-cancelling cycles, we know that we can compute a min-cost max flow on $\wnetwork$ by applying a sequence of cheapest augmenting paths until no augmenting paths remain (see e.g., \citep{FlowBook93}).
		We call this flow $h$. 
		By \Cref{lem:uniq}, we have that $h(c, t)=g(c, t)$ for each $c\in C$. 
		Since $h$ is obtained by a sequence of cheapest augmenting paths on $\mathcal{R}_{\tildG }$, the flow from each candidate to the sink  does not decrease. It follows that, 
		$$
		\sum\limits_{c\in A_i} \tildG( c , t ) \leq  	\sum\limits_{c\in A_i} h( c , t ) = \sum\limits_{c\in A_i} g( c , t ). 
		$$
		Therefore,  we see that it suffices to show $\sum\limits_{c\in A'_i} g^0( c , t) \leq  \sum\limits_{c\in A_i} \tildG ( c , t)$, since $g'( c , t) =  g^0 ( c , t)$ for each $c\in C$.  
		
		For each arc $(u,v)$,  we  define an auxiliary flow $\psi$ on $\mathcal{R}_{g^{0}}$ as follows:
		\begin{itemize}
			\item If $\tildG(u,v) - g^0 (u,v) \geq 0 $, then set  $\psi(u,v)=\tildG(u,v) - g^0 (u,v)$
			\item If $\tildG(u,v) - g^0 (u,v) < 0 $, then assign flow on the reverse arc $(v,u)$ and set $ \psi(v,u)= g^0 (u,v)-\tildG(u,v)$.
		\end{itemize}
		Since $val(g^0)=val(\tildG)$ by construction, and by conservation of flows, we have that  $\psi$ is, in fact, a circulation on  $\mathcal{R}_{g^{0}}$. Thus, $\psi$ and its cycle decomposition $\{Q_j\}_{j \in [\ell]}$ adhere to properties (P1), (P2), (P3), outlined in \Cref{sec:Struct}. 

We first show that removing every \textit{non-negative} cost cycle from $\psi$ produces a circulation on $\mathcal{R}_{g^{0}}$, which, when applied to $\mathcal{R}_{g^{0}}$, results in a  flow that is identical to $\widetilde{g}$ on all arcs entering the sink. 
		Let $\widetilde{\psi}$ be a circulation defined by removing all cycles $Q_j$ with non-negative cost i.e., $\widetilde{\psi}= \sum_{ j\in H } Q_j$ where $H:=\{ j\in [\ell] \ : \  \cost( Q_j)< 0   \}$ . 
		By (P3), we know that $\widetilde{\psi}$ is feasible on $\mathcal{R}_{g^{0}}$.  
		Furthermore, after applying the circulation $\widetilde{\psi}$ to $\mathcal{R}_{g^{0}}$, the resulting flow $\widetilde{h}$ on $\wnetwork$ has cost 
		\begin{align*}
			\cost(\widetilde{h}) = \cost(g^0) + \cost(\widetilde{\psi}) \leq \cost(g^0) + \cost(\psi) = \cost(\widetilde{g}).
		\end{align*}
		Also note that $val(\widetilde{h}) = val(g^0)$.
		Since $\widetilde{g}$ is the min-cost flow among all flows on $\wnetwork$ with value $val(g^0)$, we conclude that $\widetilde{h}$ must also be a min-cost flow among all such flows on $\wnetwork$.
		Finally, by Lemma~\ref{lem:uniq}, we see that $\widetilde{h}(c , t) =\tildG (c , t)$ for all $c\in C$. It follows that,   $\widetilde\psi$ satisfies $\widetilde\psi(c, t) = \widetilde{g}(c , t)-g^0(c , t)$ for all $c\in C$.
		
		Consider the cycles in the decomposition of $\widetilde\psi$. 
			Note that each cycle in $\{Q_j\}_{j\in H}$ must include an arc of the form $(i, a)$ for some $a \in A_i^-$. 
		To see this, suppose for a contradiction, that there exists a  negative cost cycle $Q_r$ which does not contain an arc of the form $(i, a)$ for some $a \in A_i^-$. By (P3), $Q_r$ is feasible on $\mathcal{R}_{g^{0}}$.
		Recall that $\wnetwork$ and $\wnetwork'$ differ only in that $\wnetwork$ contains extra arcs of the form $\{ (i,c) \ : \ c\in A^-_i \}$.  Also, since $g^{0}(u,v)=g'(u,v)$ for all $(u,v)\in \wnetwork'$, we see that $Q_r$ must also be feasible on the residual network $\mathcal{R}_{g'}$ of flow $g'$ on $\wnetwork'$. This contradicts the assumption that $g'$ is min-cost max flow on $\wnetwork'$.

		Letting $L:= \{ c\in A'_i \ : \ \tildG (c , t) \geq g^0(c, t) \}$, we see that 
		$$\sum\limits_{c\in A'_i} \left(  \tildG (c , t) -  g^0 (c , t) \right)= \sum\limits_{c\in A'_i \setminus L }- \widetilde\psi(t , c) +\sum\limits_{c\in L} \widetilde\psi(c, t). $$
		
		Finally,  observe that 
		\begin{align*} 
			\sum\limits_{c\in A_i}  \tildG (c , t) - \sum\limits_{c\in A'_i} g^0 ( c , t) &=
			\sum\limits_{c\in A'_i} \left(  \tildG (c , t) -  g^0 (c , t) \right) + \sum\limits_{c\in A^{-}_i}   \tildG (c , t) \\ 
			&= \sum\limits_{c\in A'_i \setminus L }- \widetilde\psi(t , c) +\sum\limits_{c\in L} \widetilde\psi(c, t) + \sum\limits_{c\in A^{-}_i}   \tildG (c , t) \\
			& \geq \sum\limits_{c\in A'_i \setminus L }- \widetilde\psi(t , c) + \sum\limits_{c\in A^{-}_i}   \tildG (c , t) \\ 
			&= \sum\limits_{c\in A'_i \setminus L }\ \ \sum_{j\in H: (t, c)\in Q_j \ }- Q_j(t, c) + \sum\limits_{c\in A^{-}_i}   \tildG (c , t) \\ 
			&\geq \sum\limits_{c\in A'_i \setminus L }\ \ \sum_{j\in H: (t, c)\in Q_j \ }- Q_j(t, c) + \sum\limits_{c\in A^{-}_i}   \tildG (i , c) \\ 
			&\geq \sum\limits_{c\in A'_i \setminus L }\ \ \sum_{j\in H: (t, c)\in Q_j \ }- Q_j(t , c) + \sum\limits_{c\in A^{-}_i}\ \ \sum_{j\in H: (i,c)\in Q_j \ }Q_j(i , c)    \\ 
			& = \sum\limits_{c\in A'_i \setminus L } \ \  \sum_{j\in H: (t,c)\in Q_j \ }- Q_j(t, c) +  \sum_{j\in H } val(Q_j) \\
			& \geq \sum_{j\in H \ }- val(Q_j) +  \sum_{j\in H } val(Q_j)  \\
			&\geq 0   
		\end{align*}
		where the second to last inequality follows from the fact that each cycle contains exactly one arc of the form $(t, c)$ for some $c\in C$. Hence, the sum adds each cycle's value at most once.
	\end{proof}

The following lemma states that the dummy network flows computed by \GCUT candidate-wise dominates any min-cost max flow solution to the network formulation of the corresponding instance.
\begin{lemma}\label{lem:phaseII}
	If $g$ and $f$ are min-cost max flow solutions on $\network_\instance$ and $\dnetwork_\instance$, respectively, for some instance $\instance=(C, \mathcal{A}, k)$ and a sink-distinct cost function,  then for each $c\in C$, we have $f(c, t)\geq g(c , t)$.
\end{lemma}
\begin{proof}
	Let $\network$ and $\dnetwork$ denote $\network_\instance$ and $\dnetwork_\instance$, respectively.
	Note that $\network$ is a subnetwork of $\dnetwork$. 
	Now consider the residual network of $g$ in $\dnetwork$. 
	We claim that a min-cost max flow on $\dnetwork$ can be obtained by a sequence of cheapest augmenting paths on the residual network of $g$ on $\dnetwork$. 
	Let $h$ be a flow obtained by a sequence of cheapest augmenting paths on the residual network of $g$ on $\dnetwork$, by breaking ties in favor of the shortest path, until no augmenting paths remains i.e., $h$ is a max flow on $\dnetwork$. 
	We note that the first augmenting path must include the node $i^*$. 
	This holds because $g$ is a maximum flow on $\network$. 
	Furthermore, since $i^*$ is connected to every candidate, the augmenting path must take the form $(s , i^* , c , t)$ for some $c \in C$. 
	It follows that the flow on all arcs in $\network $ remains unchanged, i.e., the flow on each arc $(u, v) \in \network$ is exactly $g(u, v)$. 
	By repeating this argument iteratively, we observe that each subsequent augmenting path also leaves the flow values on arcs in $\network$ unchanged. 
	It follows that $h(u, v) = g(u, v)$ for each $(u, v) \in \network$. 
	
	Suppose, for a contradiction, that $h$ is not a minimum-cost maximum flow. Then there must exist a cost-canceling cycle in the residual network of $h$. Since $h(u, v) = g(u, v)$ for each $(u, v) \in \network$, the cost-canceling cycle must include the node $i^*$; otherwise, this would contradict the fact that $g$ is a min-cost max flow on $\network$.
	Since the dummy voter $i^*$ in $\dnetwork$ is saturated, the only arc between $i^*$ and $s$ available in the residual network of $h$ is $(i^*,s)$. 
	By the structure of $\dnetwork$ and  the cost function being sink-distinct, we know that every non-zero cost cycle must include the sink node $t$. 
	Thus, the cost-canceling cycle must include $t$ as well as the arc $(i^*,s)$, but this implies the existence of an s-t path not involving the dummy node $i^*$ which was available on the residual network of $g$ on $\network$, contradicting the fact that $g$ is max flow on $\network$. 
	
	Finally, as $h$ was obtained by a sequence of cheapest augmenting paths on the residual graph of $g$ on $\dnetwork$, we have that $h(c, t)\geq g(c, t) $ for each  $c\in C$. By Lemma~\ref{lem:uniq}, we see that $h(c, t)=f(c, t)$ for all $c\in C$. Thus, indeed, for each $c\in C$, we have $f(c, t) \geq g(c, t)$.
\end{proof}

With these lemmas in hand, we can now prove that \GCUT does not allow subset manipulations when excluding voters from candidates they claim not to approve.
\begin{proposition}	\label{prop:subset_exsp}
	Suppose $\mathcal{A} = (A_1', \ldots, A_i, A_{i+1}', \ldots, A_n')$ and $\mathcal{A}' = (A_1', \ldots, A_i', A_{i+1}', \ldots, A_n')$ where $A_i'\subseteq A_i$.
	Then, for each $k\leq m$, it holds that $u_i(\vec{p}) = \sum_{c\in A_i} p_c \geq \sum_{c\in A_i'} q_c$ where $\vec{p}=GCUT(\mathcal A,k)$ and $\vec{q}=GCUT(\mathcal A',k)$.
\end{proposition}
\begin{proof}
	Fix $k\leq m$ and let $\network=\network_\instance$ and $\network'=\network_{\instance'}$ denote the respective network formulations of the instances $\instance=(C, \mathcal{A}, k)$ and $\mathcal{I'}=(C, \mathcal{A'}, k)$.
	Similarly, let $(\dnetwork,\cost)$ and $(\dnetwork',\cost')$ represent the dummy networks along with the cost functions created by \GCUT under the instances $\instance$ and $\instance'$, respectively. 
	Also, let $f$ and $f'$ be min-cost max flows on the respective dummy networks with the associated costs, and note by \Cref{lem:uniq} that $p_c=f(c, t)$ and $q_c=f'(c, t)$ for all $c\in C$ . 
	Let $A^-_i \coloneqq A_i\setminus A'_i$, and  define \textit{critical arcs} as $\mathcal{H}\coloneqq \{ (i,c) \ | \ c\in A_i^{-} \}$.  
	As $\network'$ is a subnetwork of $\network$, we have that $val(N)\geq val(\network')$.  
	We divide the analysis into two based on whether $val(\network)= val(\network')$ or  $val(\network)> val(\network')$. 
	
	\smallskip
	\noindent\textbf{\textit{Case I: $val(\network)= val(\network')$.} }
	
	Note that the difference between the networks $\network$ and $\network'$ lies solely in the absence of the critical arcs $\mathcal{H}$ in $\network'$. Furthermore, based on the case distinction, the capacities entering the dummy voters are identical in both $\dnetwork'$ and $\dnetwork$. 
	Thus, $\dnetwork'$ and $\dnetwork$ are entitlement networks that differ only in that $\dnetwork$ includes the critical arcs $\mathcal{H}$.  
	Let $\widetilde{f}$ be some min-cost max flow on $\dnetwork$ with cost function  $\cost'$.
	By applying \Cref{lem:networkMon} on $\dnetwork'$ and $\dnetwork$ with cost function $\cost'$, we get that 
	\begin{equation}\label{eq:1}
		\sum\limits_{c\in A'_i} f'( c , t) \leq  \sum\limits_{c\in A_i} \widetilde{f}( c , t).  
	\end{equation}
	
	Observe that for each $c \in C\setminus A_i$, any candidate which costs less than $c$ under $\cost'$ also costs less than $c$ under $\cost$, i.e. $C(c,\cost')\subseteq C(c,\cost)$. 
	Thus, the cost functions satisfy the conditions of \Cref{lem:monotonicity}. 
	By applying \Cref{lem:monotonicity} to network $\dnetwork$ and cost functions $\cost$ and $\cost'$, we get that $\widetilde{f}(c, t) \geq f(c, t)$ for all $c\in C\setminus A_i$, since Statement~\ref{enum:then2} must be false.
	Since $val(\widetilde{f}) = val(f)$, this means that $\sum_{c\in A_i} \widetilde{f}(c, t)\leq \sum_{c\in A_i} f(c, t)$.  
	Combining with equation~(\ref{eq:1}), we obtain the desired inequality,
	$$
	\sum_{c\in A_i'} q_c = \sum_{c\in A_i'} f'(c, t) \leq 	\sum\limits_{c\in A_i} \widetilde{f}( c , t) \leq  \sum\limits_{c\in A_i} f( c , t)= \sum_{c\in A_i} p_c.  
	$$

\smallskip
\noindent\textbf{\textit{Case II: $val(\network)> val(\network')$.} }

Let $g'$ and $g$ be the min-cost max flow on $(\network', \cost')$ and $(\network,\cost)$ respectively. 
Recall that the difference between the networks $\network$ and $\network'$ lies solely in the absence of the critical arcs $\mathcal{H}$ in $\network'$.   
Denote $\widetilde{g}$ as the min-cost max flow on $\network$ with cost function  $\cost'$. 
By  applying \Cref{lem:networkMon} on $\network'$ and $\network$ with cost function $\cost'$, we obtain $\sum\limits_{c\in A'_i} g'( c , t) \leq  \sum\limits_{c\in A_i} \widetilde{g}( c , t)$.   
Observe that for each $c \in C\setminus A_i$, any candidate which costs less than $c$ under $\cost'$ also costs less than $c$ under $\cost$, i.e. $C(c,\cost')\subseteq C(c,\cost)$. 
Thus, the cost functions satisfy the conditions of \Cref{lem:monotonicity}. 
Again, by applying \Cref{lem:monotonicity} to network $\network$ and cost functions $\cost$ and $\cost'$, we get that $\widetilde{g}(c, t) \geq g(c, t)$ for all $c\in C\setminus A_i$.
Since $val(g) = val(\widetilde{g})$, it holds that $\sum_{c\in A_i} \widetilde{g}(c, t)\leq \sum_{c\in A_i} g(c, t)$.  
Combining the two inequalities obtained thus far, we get 
\begin{equation}	\label{eq:caseII}
	\sum_{c\in A'_i} g'(c, t) \leq  \sum_{c\in A_i} \widetilde{g}(c, t) \leq \sum_{c\in A_i} g(c, t).
\end{equation}

We point out that $g'(c, t)=1$ for each $c\in A'_i$, which holds due to the following reasoning. 
Since  $val(\network)> val(\network')$ and these networks only differ by the presence of critical arcs, there must be some augmenting path $P$ in the residual network of $g'$ in $\network$. 
Note that, since $g'$ is a max flow on $\network'$, $P$ must include a critical arc to avoid constituting an augmenting path on $\mathcal{R}_{g'}.$
Thus, there is a path to $i$ in the residual network of $g'$ in $\network'$.
Again, because $g'$ is a max flow on $\network'$, this implies that $g'(c, t) = 1$ for all $c\in A_i'$, since otherwise there would be an augmenting path in $\mathcal{R}_{g'}.$

We can now establish the statement of the proposition:
\begin{align*}
\sum_{c\in A_i} p_c = \sum_{c\in A_i} f(c, t) &\geq \sum_{c\in A_i} g(c, t) 
\geq \sum_{c\in A'_i} g'(c, t) = |A_i'| \geq \sum_{c\in A'_i} f'(c, t) = \sum_{c\in A_i'} q_c
\end{align*}
where the first inequality follows from the application of \Cref{lem:phaseII} to instance $\instance$ and the second inequality follows from \Cref{eq:caseII}.

\end{proof}

\begin{proof}[Proof of \Cref{thm:excl_sp}]
	Let $\mathcal{A}=(A'_1,\ldots,A_i,\ldots,A'_n)$ be an arbitrary approval profile with some truthful voter $i\in N.$
	Consider some alternative approval profile obtained from an arbitrary unilateral misreport from voter $i$, denoted $\mathcal{A}'=(A'_1,\ldots,A'_i,\ldots,A'_n)$.
	Let $\vec{p}$ and $\vec{q}$ denote the fractional committees returned by \GCUT with arbitrary input committee size $k\leq m$ and approval profiles $\mathcal{A}$ and $\mathcal{A}'$, respectively.
	Let $A_0 = A_i\cap A'_i $ and $A^+ = A'_i\setminus A_i$ and $A^- = A_i\setminus A_i'$.
	That is, $A_i = A_0\cup A^-$ and $A_i' = A_0\cup A^+.$
	To prove that \GCUT is excludable strategyproof, we would like to show that $u_i(\vec{p})\geq \sum_{c\in A_0} q_c$.

	Consider $\mathcal{A}'' = (A'_1,\ldots,A_0,\ldots,A'_n)$ and let $\vec{x}$ be the fractional committee returned by \GCUT with input committee size $k$ and approval profile $\mathcal{A}''$.
	By \Cref{prop:subset_exsp}, we can conclude that $u_i(\vec{p}) \geq \sum_{c\in A_0} x_c.$
	Furthermore, we can see that $\mathcal{A}'$ is simply $\mathcal{A}''$ with voter $i$ unilaterally misreporting a superset.
	Thus, by \Cref{prop:sssp}, we see that $\sum_{c\in A_0} x_c \geq \sum_{c\in A_0} q_c,$ yielding the desired statement.
\end{proof}

\section{Best-of-both-worlds Fairness}
\label{sec:bbw}
In this section, we will build upon the characterization of our new axiom to obtain a general \emph{best-of-both-worlds} result. 
This result leads us to strengthen two known compatibility results from \citet{ALS+23}, answering an open question left by the authors in the process.
We begin with a lemma which, at a high level, shows that any fractional committee which can be represented by a network flow can be completed to a max flow.
\begin{lemma} \label{lem:mf_domination}
	Let $\vec{q}$ be the committee given by a feasible flow on network representation $\network$. 
	There exists a max flow $f^*$ on $\network$, computable in polynomial time, such that the committee $\vec{p}$ given by $f^*$ elementwise dominates $\vec{q}$. 
\end{lemma}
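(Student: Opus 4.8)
The plan is to start with the feasible flow $f$ that realizes $\vec{q}$ (so $f(c\rightarrow t)=q_c$ for all $c\in C$) and argue that running Ford--Fulkerson augmentation starting from $f$ produces the desired max flow. Since Ford--Fulkerson repeatedly pushes flow along augmenting paths in the residual network, and every augmenting path in $\network$ ends with an arc $(c\rightarrow t)$ for some candidate $c$, each augmentation can only \emph{increase} the flow on arcs into the sink. Thus the terminal flow $f^*$ satisfies $f^*(c\rightarrow t)\geq f(c\rightarrow t)=q_c$ for every $c\in C$. Setting $\vec{p}$ to be the committee given by $f^*$ then immediately gives elementwise domination $p_c\geq q_c$, and $f^*$ is a max flow on $\network$ by correctness of Ford--Fulkerson.

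The first key step is to verify that augmentation is monotone on the sink arcs. An augmenting path from $s$ to $t$ has the form $(s,i,c,t)$ or more generally $(s,i_1,c_1,\ldots,i_h,c_h,t)$; its final arc is always a forward arc $(c_h\rightarrow t)$ (there are no arcs out of $t$ other than residual back-arcs, which cannot begin an augmenting path from $s$). Pushing positive flow along such a path strictly increases $f(c_h\rightarrow t)$ and leaves all other sink arcs either unchanged or, if the path revisits candidates via back-arcs, still weakly increasing in aggregate — but the cleanest formulation is simply that no augmentation ever decreases the total flow into $t$, and more strongly, because the only arcs incident to $t$ are the forward arcs $(c\rightarrow t)$, each individual $f(c\rightarrow t)$ is weakly monotone under augmentation. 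I would state this as the central observation.

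The second step is the polynomial-time claim. Here I would invoke that the sink-arc capacities are all $1$ and the source-arc capacities are all $k/n$, so the total flow value is bounded by $k$, and integral-capacity-style arguments (or the fact that the max-flow value minus the value of $f$ is at most $k$) bound the number of augmentations; combined with the fact that each augmenting path is found in polynomial time via BFS, this gives polynomial running time. Alternatively, one can cite that any polynomial max-flow algorithm that only augments (never cancels flow on sink arcs) suffices, which is the cleaner route given the infinite capacities on the middle arcs.

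The main obstacle I anticipate is handling the $\infty$ capacities on the voter-to-candidate arcs when bounding the number of augmentations, since a naive Ford--Fulkerson bound depends on capacities. The resolution is that the bottleneck for total flow is governed by the finite capacities $cap(s,i)=k/n$ and $cap(c,t)=1$, so the max-flow value is at most $k$; one should therefore appeal to a strongly polynomial or shortest-augmenting-path (Edmonds--Karp) variant, whose iteration count depends only on the number of vertices and edges, to guarantee polynomial time independent of the infinite capacities while preserving the crucial monotonicity of the sink arcs.
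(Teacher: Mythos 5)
Your proposal is correct and takes essentially the same route as the paper: start from the feasible flow realizing $\vec{q}$, observe that any augmenting path is a simple $s$--$t$ path that enters the sink exactly once via a forward arc $(c\rightarrow t)$, so each augmentation weakly increases every sink-arc flow, and iterate until a max flow is reached. Your appeal to Edmonds--Karp to bound the number of augmentations independently of the infinite voter-to-candidate capacities and the non-integral capacity $k/n$ is in fact more careful than the paper's proof, which simply asserts that eliminating augmenting paths ``will lead to a max flow in polynomial time.''
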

\begin{proof}
	Let $f$ be a feasible flow on network representation $\network$ and $\mathcal{R}_f$ be the residual network resulting from $f$. Also let $\vec{q}$ be the fractional committee given by $f$.
	If there is no augmenting path on $\mathcal{R}_f$, then $f$ is a max flow and the statement holds since $\vec{q}$ elementwise dominates $\vec{q}$.
	Otherwise, if there is some augmenting path, send flow along this path, and denote the resulting flow $f'$.
	The final arc in the augmenting path must be of the form $(c',t)$ for some $c'\in C$ by the structure of $\network$.
	Since $(c',t)$ is an arc in $\network$, it is clear that $f'(c', t)> f(c', t).$
	Note that the augmenting path can only contain one arc entering the sink (since otherwise it would contain a cycle), and thus $f'(c, t) = f(c, t)$ for all $c\in C\setminus \{c'\}.$
	It is now apparent that the fractional committee given by $f'$ elementwise dominates $\vec{q}$. 
	Thus, this property can be maintained while eliminating augmenting paths, which will lead to a max flow in polynomial time. 	
\end{proof}

The results of this section make use of a property which \citet{BrPe24a} have termed \emph{affordability}, which weakens the definition of priceability from \citet{PeSk20a} by removing the final condition that no unselected candidate can be afforded by its supporters.
\begin{definition}[Affordability]
	We say $W$ is \emph{affordable} if there exists a payment function $\pi_i: C, \mathbb{R}_{\geq 0}$ satisfying the following four conditions:
	\begin{enumerate}
		\item \label{enum:c1}
		$\pi_i(c)=0$ for each $i\in N$, $c\notin A_i$ 
		\item \label{enum:c2}
		$\sum_{c\in C} \pi_i(c) \leq k/n$ for each $i\in N$ 
		\item \label{enum:c3}
		$\sum_{i\in N} \pi_i(c) = 1$ for each $c\in W$ 
		\item \label{enum:c4}
		$\sum_{i\in N} \pi_i(c) = 0$ for each $c\notin W$. 
	\end{enumerate}
\end{definition}

The main theorem of this section shows that, for every affordable committee $W$, there is an ex-ante \GRP lottery for which every committee in its support contains $W$. 

\begin{theorem} \label{thm:gr_wp_bbw}
	Let $W$ be an affordable committee. There exists a lottery $\Delta = \{(\lambda_j, W_j)\}_{j \in [s]}$ such that (1) $\Delta$ satisfies \GRP and (2) $W\subseteq W_j$ for all $j\in [s]$. 
	Furthermore, given $W$, such a lottery can be computed in polynomial time.
\end{theorem}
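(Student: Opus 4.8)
The plan is to exploit the fact that the affordability payment function for $W$ is, in essence, a feasible flow on $\network$ that fully saturates the arc $(c,t)$ for every $c\in W$. Given a payment function $\pi$ witnessing affordability, define a flow $f$ by $f(i\rightarrow c)=\pi_i(c)$ for each $i\in N$ and $c\in A_i$, together with $f(s\rightarrow i)=\sum_{c}\pi_i(c)$ and $f(c\rightarrow t)=\sum_{i}\pi_i(c)$. Conditions~(\ref{enum:c1}) and~(\ref{enum:c2}) ensure $f$ respects the intermediate and source capacities, while conditions~(\ref{enum:c3}) and~(\ref{enum:c4}) give $f(c\rightarrow t)=1$ for each $c\in W$ and $f(c\rightarrow t)=0$ otherwise. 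Since we are given only $W$ and not $\pi$, I would first recover such a flow in polynomial time by computing a max flow on $\network(W)$: affordability is precisely the statement that this subnetwork admits a flow saturating all $k$ candidate-to-sink arcs, so the max flow on $\network(W)$ has value $k$ and furnishes the desired $f$ (extended by zero on all arcs touching candidates outside $W$).

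Next, I would complete $f$ to a max flow on the full network while preserving full marginal probability on $W$. Applying \Cref{lem:mf_domination} to $f$ produces a max flow $f^*$ on $\network$ whose fractional committee $\vec{p}^0$ elementwise dominates that of $f$. In particular $p^0_c\geq f(c\rightarrow t)=1$ for every $c\in W$, and since $p^0_c\leq cap(c\rightarrow t)=1$, we conclude $p^0_c=1$ for all $c\in W$. Because $f^*$ is a max flow, \Cref{thm:gr_characterization} guarantees that any fractional committee elementwise dominating $\vec{p}^0$ satisfies \GRP. As $\vec{p}^0$ may have size $val(f^*)<k$, I would distribute the residual $k-val(f^*)$ units of probability arbitrarily among candidates with $p_c<1$ (none of which lie in $W$), yielding a genuine size-$k$ fractional committee $\vec{p}$ that still dominates $f^*$, hence is \GRP, and still satisfies $p_c=1$ for every $c\in W$.

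Finally, I would invoke the rounding scheme of \cite{ALM+19a} to implement $\vec{p}$ as a lottery $\Delta=\{(\lambda_j,W_j)\}_{j\in[s]}$ over size-$k$ committees in polynomial time. The crucial point is that an implementation preserves marginals exactly, $p_c=\sum_j \lambda_j\,\mathbbm{1}[c\in W_j]$, so for each $c\in W$ we are forced to have $c\in W_j$ whenever $\lambda_j>0$; thus every committee in the support of $\Delta$ contains $W$, giving condition~(2), while condition~(1) holds because $\vec{p}$ is \GRP and a lottery inherits its fractional committee's properties ex-ante. I expect the main conceptual step to be the first one -- recognizing affordability as a flow-saturation condition on $\network(W)$ that is therefore computable via max flow -- after which \Cref{thm:gr_characterization} and \Cref{lem:mf_domination} do the heavy lifting essentially for free. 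The one detail requiring care is that the size-$k$ completion must never reduce the marginal of a candidate in $W$; this is automatic precisely because those marginals are already pinned to $1$ by the dominating max flow, which is exactly why saturating $W$ \emph{first} (rather than completing to an arbitrary max flow) is the right move.
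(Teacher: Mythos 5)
Your proof follows essentially the same route as the paper's: convert the affordability payments into a feasible flow on $\network$, extend it to a dominating max flow via \Cref{lem:mf_domination}, invoke \Cref{thm:gr_characterization} to get \GRP, and implement the resulting fractional committee as a lottery whose supports all contain $W$ because its marginals on $W$ are pinned to $1$. If anything, you are slightly more careful than the paper in two places---recovering a valid flow from $W$ alone by computing a max flow of value $k$ on $\network(W)$ (the paper's proof works directly from the payment function $\pi$, which is not part of the input), and explicitly completing the max-flow committee to total mass $k$ before applying the rounding scheme---both of which are correct and address details the paper's proof glosses over.
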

\begin{proof}
	For some instance of our problem, let $\network$ be the network formulation and $W$ be some affordable committee.
	
	Now consider the flow $f$ on $\network$ defined as follows:
	\begin{itemize}
		\item $f(s,i) = \sum_{c\in C} \pi_i(c)$ for all $i\in N$
		\item $f(i,c) = \pi_i(c)$ for all $i\in N, c\in C$
		\item $f(c,t) = \sum_{i\in N} \pi_i(c)$ for all $c\in C$.
	\end{itemize}
	We claim that $f$ is a feasible flow on $\network$. 
	It is apparent that $f$ respects conservation of flows as each voter's incoming flow is equal to their outgoing flow and the same is true for candidate nodes.
	To see that the capacity constraints of $\network$ are respected by $f$, note that Condition \ref{enum:c2} ensures that $f(s,i)\leq k/n = \capacity(s,i)$ for all $i\in N$, Condition \ref{enum:c1} ensures that $f(i,c)=0=\capacity(i,c)$ for all $i\in N, c\notin A_i$, and Conditions \ref{enum:c3} and \ref{enum:c4} ensure that $f(c,t) \leq 1 = \capacity(c,t)$ for all $c\in C$. 

	Let $\vec{q}$ be the committee given by $f$. Since $f$ is feasible, we can conclude by \Cref{lem:mf_domination} that there exists a max flow $f^*$ such that the committee $\vec{p}$ given by $f^*$ elementwise dominates $\vec{q}$.
	Consider some implementation $\Delta$ of $\vec{p}$. 
	Since $p_c \geq q_c = f(c, t) = 1$ for all $c\in W$ by Condition \ref{enum:c3}, we have that every committee in the support of $\Delta$ contains $W$.  
	Lastly, because $\vec{p}$ is the fractional committee given by the max flow $f^*$ on $\network$, we have by \Cref{thm:gr_characterization} that $\vec{p}$ and thus $\Delta$ satisfies \GRP. 
\end{proof}

\citet{PSP21a} gave an exponential-time algorithm for an FJR committee and showed (\citep[Lemma 2]{PSP21a}) that this committee satisfies affordability.
This leads us to the following corollary to \Cref{thm:gr_wp_bbw}, which resolves an open question posed by \citet{ALS+23}.
\begin{corollary}
	A lottery satisfying ex-post FJR and ex-ante \GRP is guaranteed to exist.
\end{corollary}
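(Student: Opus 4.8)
The plan is to instantiate \Cref{thm:gr_wp_bbw} with an FJR committee and then argue that the resulting lottery is ex-post FJR essentially for free. First I would invoke \citet{PSP21a} to obtain a committee $W$ of size $k$ satisfying FJR, and then appeal to their Lemma 2, which (as noted immediately before the corollary) establishes that $W$ is affordable. This is the only external ingredient the argument needs; everything else is supplied by the machinery developed in this section.

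Next I would feed $W$ into \Cref{thm:gr_wp_bbw}, whose single hypothesis—that $W$ be affordable—is now met. This yields a lottery $\Delta = \{(\lambda_j, W_j)\}_{j \in [s]}$ that is ex-ante \GRP and satisfies $W \subseteq W_j$ for every $j \in [s]$. At this point the ex-ante fairness half of the claim is already delivered by the theorem, so all that remains is to verify ex-post FJR, i.e.\ that every committee in the support of $\Delta$ satisfies FJR.

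The step requiring the most care is the transfer of FJR from $W$ to each $W_j$. I would exploit the fact that every committee in a lottery has size exactly $k$ together with $|W| = k$: since $W \subseteq W_j$ and $|W_j| = k = |W|$, we must have $W_j = W$ for all $j$. Hence every committee in the support of $\Delta$ equals $W$ and therefore satisfies FJR, which is precisely ex-post FJR. Combining this with the ex-ante \GRP guarantee from \Cref{thm:gr_wp_bbw} completes the argument.

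I expect the main obstacle to be conceptual rather than computational: one must confirm that affordability of the FJR committee is exactly the hypothesis \Cref{thm:gr_wp_bbw} consumes, and that the size constraint collapses the support to $W$ so that ex-post FJR is immediate. Were the ex-post target instead realized only by committees of size strictly less than $k$, one would have to argue separately that FJR is preserved when further candidates are added to $W$; but because FJR committees have size exactly $k$, this monotonicity argument is not needed, and the corollary follows directly.
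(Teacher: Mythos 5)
There is a genuine gap: your argument rests on the premise that the object certified by Lemma~2 of \citet{PSP21a} is a committee of size exactly $k$ that is simultaneously FJR and affordable, and that premise is false in general. Affordability forces $|W|\leq k$ (summing Condition~3 over $c\in W$ and comparing with Condition~2 gives $|W|=\sum_{i\in N}\sum_{c\in W}\pi_i(c)\leq n\cdot\frac{k}{n}=k$), with equality only when every voter spends their entire budget on $W$; the FJR set produced by the exponential-time procedure of \citet{PSP21a} --- which is what their Lemma~2 certifies as affordable --- can have size strictly less than $k$. Concretely: take two voters with $A_1=A_2=\{a\}$ and $k=2$. The set $\{a\}$ is FJR and affordable, but \emph{no} size-$2$ committee is affordable, since any second candidate $b$ is approved by nobody, so Condition~1 forces $\pi_i(b)=0$ for all $i$, contradicting Condition~3. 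Thus there need not exist any size-$k$ committee that is both FJR and affordable, and your collapse argument ($W\subseteq W_j$ and $|W|=|W_j|=k$ imply $W_j=W$) has nothing to apply to.

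The fix is exactly the monotonicity argument you declared unnecessary, and it is what the paper implicitly relies on: apply \Cref{thm:gr_wp_bbw} to the (possibly smaller than size-$k$) affordable FJR set $W$, obtaining an ex-ante \GRP lottery all of whose support committees $W_j$ contain $W$; then observe that FJR is preserved under adding candidates, since each voter's utility $u_i(W_j)=|A_i\cap W_j|\geq |A_i\cap W|$ is monotone in the committee and FJR imposes only lower bounds of the form ``some voter in the cohesive group has utility at least $\beta$.'' Hence every $W_j\supseteq W$ satisfies FJR, which gives ex-post FJR, and ex-ante \GRP comes from the theorem as you stated. This one-line monotonicity step is unavoidable; without it the corollary does not follow from \Cref{thm:gr_wp_bbw}.
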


\citet{ALS+23} also gave an algorithm using the \emph{Method of Equal Shares} (MES) \cite{PeSk20a} as a subroutine and showed that this algorithm gives EJR+, Strong UFS, and GFS. 
Since MES always returns a priceable (and thus affordable) committee, another consequence of \Cref{thm:gr_wp_bbw} is a strengthening of Theorem 4.1 from \citet{ALS+23}.
By applying the procedure described in the proof of \Cref{lem:mf_domination} to the result of MES, we can obtain a randomized committee which obtains all of the ex-post properties of MES in tandem with \GRP.\footnote{Note that rules belonging to BW-MES, the family of rules which give ex-post EJR+, ex-ante GFS, and ex-ante Strong UFS (Theorem 4.1, \citet{ALS+23}), do not necessarily satisfy \GRP. To see this, suppose one voter with budget left after the MES phase does not approve of any unselected candidate, while another voter has no budget remaining and approves some unselected candidate. An arbitrary allocation of the remaining budget need not satisfy \GRP with respect to two such voters.}
\begin{corollary}\label{crly:mes}
	A lottery satisfying ex-post EJR+ and ex-ante \GRP can be computed in polynomial time.
\end{corollary}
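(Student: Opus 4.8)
The plan is to instantiate \Cref{thm:gr_wp_bbw} with the committee produced by the Method of Equal Shares. First I would run MES on the given instance to obtain a committee $W$. As noted in the text preceding the corollary, MES always returns a priceable committee, and since affordability is obtained from priceability by dropping its final condition, $W$ is affordable. Moreover, $W$ satisfies EJR+ \citep{BrPe24a}. Since MES runs in polynomial time, $W$ is obtained efficiently, and it may have size strictly less than $k$ (which is exactly what makes the completion step nontrivial).

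Because $W$ is affordable, I would then apply \Cref{thm:gr_wp_bbw} directly to $W$. This yields, in polynomial time, a lottery $\Delta = \{(\lambda_j, W_j)\}_{j \in [s]}$ that satisfies ex-ante \GRP and for which $W \subseteq W_j$ for every $j \in [s]$, where each $W_j$ is a genuine size-$k$ committee. It then remains only to argue that each $W_j$ in the support of $\Delta$ itself satisfies EJR+, so that $\Delta$ is ex-post EJR+ as well as ex-ante \GRP; combined with the polynomial runtimes of both MES and the construction of \Cref{thm:gr_wp_bbw}, this gives the full statement.

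The one substantive point is the preservation of EJR+ when passing from $W$ to any superset $W_j \supseteq W$, and this is the step I would be most careful about. I would argue by contraposition: suppose some $W_j$ admits an EJR+ violation, witnessed by an integer $\ell \geq 1$, a candidate $c \in C \setminus W_j$, and a group $S$ of voters all approving $c$ with $|S| \geq \ell\,\tfrac{n}{k}$ and $u_i(W_j) < \ell$ for every $i \in S$. Since $W \subseteq W_j$, we have $c \notin W$ and $u_i(W) \leq u_i(W_j) < \ell$ for every $i \in S$, so the very same triple $(\ell, c, S)$ witnesses an EJR+ violation for $W$, contradicting that $W$ is EJR+. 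Hence every $W_j$ satisfies EJR+, and $\Delta$ is the desired lottery. I do not anticipate a serious obstacle here beyond confirming this monotonicity of EJR+ under the addition of candidates and the bookkeeping that $|W| \leq k$ while each $|W_j| = k$; the heavy lifting is entirely carried by \Cref{thm:gr_wp_bbw}.
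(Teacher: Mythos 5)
Your proposal is correct and follows essentially the same route as the paper: run MES to obtain a priceable (hence affordable) committee, apply \Cref{thm:gr_wp_bbw} to it, and observe that every committee in the support of the resulting lottery is a superset of the MES outcome, so both EJR+ and \GRP are achieved in polynomial time. The only difference is that you spell out the monotonicity of EJR+ under adding candidates via contraposition, a step the paper leaves implicit when it asserts that the lottery retains ``all of the ex-post properties of MES.''
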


As these corollaries show,  \Cref{thm:gr_wp_bbw}  provides a useful tool that could aid in producing further best-of-both-worlds results in public decision problems.

\section*{Discussion}
In this paper, we introduced a new fairness axiom for probabilistic committee voting. 
We characterized our axiom and gave several algorithmic results exploring compatibility with efficiency, strategyproofness, and ex-post fairness.
Our characterization demonstrated a connection between fair committee voting and network flows. We believe that this network-based approach will prove useful for extensions of committee voting such as participatory budgeting.   

We showed that strategyproofness is incompatible with GRP-efficiency and GFS-efficiency (\Cref{prop:impossibility}).
We studied a weaker form of strategyproofness, known as excludable strategyproofness, and gave an algorithm which satisfies this property in addition to GRP-efficiency.
In the face of strong impossibilities involving strategyproofness in our own and related settings, we believe further study of excludable strategyproofness is a valuable research direction.
For example, when $k=1$, a strong impossibility result renders positive share, efficiency, and strategyproofness incompatible \citep{BBP+21a}.
However, to our knowledge, it is not known whether a rule which satisfies GFS, efficiency, and excludable strategyproofness exists, even for the special case when $k=1$.\footnote{Note that excludable strategyproofness can be obtained in conjunction with GFS and efficiency in isolation via CUT and Egalitarian welfare maximization, respectively \citep{ABM19a}.  }

\Cref{prop:impossibility} puts fair, efficient, and strategyproof rules even further from reach in the probabilistic committee voting setting than in the single-winner setting.
However, plenty of questions remain surrounding the compatibility of strategyproofness and fairness.
As we pointed out, random dictator satisfies strategyproofness and GFS.
However, to our knowledge, there is no known strategyproof rule that satisfies Strong UFS.
Since GRP implies Strong UFS, one would need to tackle this issue in order to prove compatibility of GRP and strategyproofness.
Our definition of strategyproofness can also be rewritten for rules which return lotteries, in which case it captures \emph{ex-ante} strategyproofness. 
We believe ex-ante strategyproofness is among the most intriguing directions for future work in best-of-both-worlds committee voting, particularly because \emph{ex-post} strategyproofness is incompatible with even a very weak form of ex-post fairness \citep{Pete21a}.

\section*{Acknowledgments}
	Mashbat Suzuki is supported by the ARC Laureate Project FL200100204 on ``Trustworthy AI''.

\clearpage
\bibliographystyle{ACM-Reference-Format}
\bibliography{abb,vollen,bibliography}

\clearpage
\appendix

\section*{Proof of Proposition~\ref{prop:grp_implies_pjr}} \label{app:grp_implies_pjr}
\begin{definition*}[PJR \citep{SEL+17a}]
	For any positive integer~$\ell$, a group of voters $N' \subseteq N$ is said to be \emph{$\ell$-cohesive} if $|N'| \geq \ell \cdot n/k$ and $\left| \bigcap_{i \in N'} \approvalBallotOfAgent{i} \right| \geq \ell$.
	A committee~$W$ is said to satisfy \emph{proportional justified representation (PJR)} if for every positive integer~$\ell$ and every $\ell$-cohesive group of voters~$N' \subseteq N$, it holds that $\left| \left( \bigcup_{i \in N'} \approvalBallotOfAgent{i} \right) \cap W \right| \geq \ell$.
\end{definition*}

\propPJR*
\begin{proof}
	Suppose, for some integral committee $W$, $\vec{p}=\vec{1}_W$ satisfies GRP.
	Let $N'\subseteq N$ be some $\ell$-cohesive group for some positive integer $\ell.$
	Since $\vec{p}$ satisfies GRP, we have
	\begin{align*}
		\left\vert W\cap \bigcup_{i\in N'} A_i \right\vert \ = \ \sum_{c\in \bigcup_{i\in N'} A_i} p_c \ &\geq  \ |N'| \frac{k}n - \max_{T\subseteq N'} \left[ |T|\frac{k}n - |\bigcup_{i\in T}A_i | \right] \\
		&\geq |N'| \frac{k}n - \max_{T\subseteq N'} \left[ |T|\frac{k}n - \ell \right] \ = \ \ell
	\end{align*}
	where the last inequality follows because $N'$ is $\ell$-cohesiveness and thus $|\cup_{i\in T} A_i| \geq |\cap_{i\in T} A_i| \geq \ell$ for all $T\subseteq N'$.
\end{proof}

\end{document}